\def\g{{\mathfrak g}}
\def\ra{\rangle}
\def\la{\langle} 
\def\e{{\bm{e}}}
\def\M{\mathcal O}
\def\hep{\hat{\epsilon}}
\def\T{{\mathcal T}}
\def\co{\mbox{Co}}
\def\coh{\mbox{Coh}}
\def\si{\mbox{Si}} 
\def\sih{\mbox{Sih}}
\def\coexp{\cos(\sqrt{Q} \ell_1)}
\def\cohexp{\cosh(\sqrt{Q} \ell_2)}
\def\siexp{\sin(\sqrt{Q} \ell_1)}
\def\sihexp{\sinh(\sqrt{Q} \ell_2)}
\def\a{{\mathfrak a}}
\def\b{{\mathfrak b}}
\def\w{{\mathfrak w}}
\def\al{\alpha}
\def\be{\beta}
\def\ga{\gamma}
\def\del{\delta}
\def\alb{\overline{\alpha}}
\def\beb{\overline{\beta}}
\def\gab{\overline{\gamma}}
\def\delb{\overline{\delta}}
\def\s{\mathfrak z}
\def\aa{\sigma}
\def\bb{\tau}
\def\cc{\sigma}
\def\arg{\theta}
\def\ca{A}
  \def\cb{B}
\def\z{z}
\def\zb{\overline{z}}
\def\A{\mathbb{A}}
\def\Q{\mathbb{Q}}
\def\Mo{\mbox{Moeb}}
\newcommand{\bm}[1]{\mbox{\boldmath $#1$}}
\newcommand{\lr}[1]{\left( #1 \right)}
\newcommand{\lrbrace}[1]{\left\lbrace #1 \right\rbrace}
\newcommand{\lrprod}[1]{\left< #1 \right>}
\newcommand{\lrbrkt}[1]{\left[ #1 \right]}
\newcommand{\skwend}[1]{{\mathrm{SkewEnd}\lr{#1}}}
\newcommand{\Tr}{\mathrm{Tr}}
\newcommand{\spn}[1]{\mathrm{span}\left\lbrace #1 \right\rbrace}
\def\Fmuind{\stackrel{\mu}{F}}
\def\Fmu{\stackrel{\mu}{\bm{F}}}
 \newcounter{mnotecount}[section]
 \renewcommand{\themnotecount}{\thesection.\arabic{mnotecount}}
 \newcommand{\mnote}[1]%{}
 {\protect{\stepcounter{mnotecount}}$^{\mbox{\footnotesize
 $%\!\!\!\!\!\!\,
 \bullet$\themnotecount}}$ \marginpar{%\color{red}%
 \raggedright\tiny\em
 $\!\!\!\!\!\!\,\bullet$\themnotecount: #1} }
 \title{Skew-symmetric endomorphisms in $\mathbb{M}^{1,3}$: A unified canonical form with applications to conformal geometry}
\author{
  Marc Mars and Carlos Pe\'on-Nieto \\
%Dept. of Fundamental Physics and\\
  Instituto de F\'{\i}sica  Fundamental y Matem\'aticas, Universidad de Salamanca \\
Plaza de la Merced s/n 37008, Salamanca, Spain}
\date{}
\newtheorem{lemma}{Lemma}
\newtheorem{definition}{Definition} 
\newtheorem{proposition}{Proposition}
\newtheorem{corollary}{Corollary}
\newtheorem{theorem}{Theorem}
\newtheorem{remark}{Remark}
\def\Fmuind{\stackrel{\mu}{F}}
\def\Fmu{\stackrel{\mu}{\bm{F}}}
\begin{document}

\maketitle

\begin{abstract}
  We derive a canonical form for skew-symmetric endomorphisms $F$ in Lorentzian vector spaces of dimension three and four which covers all non-trivial cases at once. We analyze its invariance group, as well as the connection of this canonical form with  duality rotations of two-forms.
  After reviewing the relation between these endomorphisms and the algebra of
  conformal Killing vectors of $\mathbb{S}^2$, $\mathrm{CKill}\lr{\mathbb{S}^2}$, we are able to also give a canonical form for an arbitrary element $ \xi \in \mathrm{CKill}\lr{\mathbb{S}^2}$ along with its invariance group. The construction allows us to obtain explicitly the change of basis that transforms any given $F$ into its canonical form. For any non-trivial $ \xi$ we construct, via its canonical form, adapted coordinates that allow us to study its properties in depth.   Two applications are worked out: we determine explicitly for which metrics, among a natural class of spaces of constant curvature, a given $\xi$ is a Killing vector and solve all local TT (traceless and transverse) tensors that satisfy the Killing Initial Data equation for $\xi$.  In addition to their own interest, the present results will be a basic ingredient for a subsequent generalization to arbitrary dimensions.
    \end{abstract}

\section{Introduction}

Finding a canonical form for the elements of a certain set is often  an interesting problem to solve, since it is a powerful tool for  both computations and mathematical analysis. By canonical form (sometimes also called normal form) of the elements $x$ of a set $X$ one usually understands a specific form, depending on a number of parameters, that every element $x$ can be carried to. The value of such parameters is obviously determined by $x$. The most common examples are canonical forms of matrices, such as the echelon form or the Jordan form. However, the same concept arises in other sets, such as smooth fields on a manifold or even systems of differential equations (e.g. canonical coordinates for Hamiltonian systems). A canonical form must be somehow useful either to simplify the calculations or to make explicit some information we may want to exploit. Taking an element to its canonical form requires showing the existence of  (and ideally also finding explicitly) a transformation, namely, a change of basis, coordinates, etc. that brings the element into its canonical form, and 
which need not to be unique.

When dealing with Lie algebras $\mathfrak{g}$, one may attempt to find a canonical form for every element  $F \in \mathfrak{g}$  that captures all the information of its orbit under the (e.g. adjoint) action of the Lie group $G$. For example, this is the case of the already mentioned Jordan canonical form, regarded as the matrix form (up to permutations of the blocks) that encodes all the information of the $\mathfrak{gl}(n,\mathbb{C})$ orbits under the adjoint action of the group $GL(n,\mathbb{C})$.  Identifying these orbits, and the more general problem of the orbits generated by an algebraic group action on a set,  is an active field of research in different fields of mathematics and it is already well-understood for the case of classical Lie groups. We refer the reader to  \cite{complexorbs} and references therein for an extensive review of this problem and other references such as \cite{abellanas75}, \cite{burgoyne77}, \cite{djokovic83}, \cite{goodmanwallach}, \cite{knapp}.

From the point of view of physics, it is of particular importance the study of the pseudo-orthogonal group $O(1,n+1)$ because of its role in the theory of relativity and other physical theories. First, it is the group of isotropies
%pointwise isometries \footnote{We mean by pointwise that the actual group of isometries is $O(1,n+1)$ semidirect product with the group of translations $\mathbb{R}^{n+1}$, i.e. the Poincar\'e group}
in the special theory of relativity and isotropy group of the Lorentz-Maxwell electrodynamics. For the latter, the elements of the Lie algebra $\mathfrak{o}(1,n+1)$, represented here as skew-symmetric endomorphisms of Minkowski $\mathbb{M}^{1,n+1}$ (or equivalently the two-forms of the same space), also represent the electromagnetic field (e.g. \cite{LichnerowiczTheoRelGravEM}).  
Besides, and this is of great importance in our approach, in general relativity the pseudo-orthogonal group is related to the group of conformal transformations of certain spaces  \cite{PenroseRindVol1},\cite{IntroCFTschBook}.
Also, techniques in conformal geometry allow to recast the
  Einstein field equations (in fact, an equivalent set thereof)  as a Cauchy or characteristic problem in a hypersurface $\mathscr{I}$ (\cite{Friedrich2002}, \cite{Friedrich2014} and references therein) representing ``infinity'' in a physically precise sense. We are specially interested in the case of positive cosmological constant, where this Cauchy problem is always well-posed and $\mathscr{I}$ happens to be Riemannian. The initial data consist of a
 metric $\gamma$  in $\mathscr{I}$ and a symmetric ``TT'' tensor $D$ of $\mathscr{I}$, i.e. traceless and transverse (zero divergence). If the solution spacetime is to have a Killing vector, then the so called Killing Initial Data (KID) equations must be satisfied \cite{KIDPaetz}, and this involves  a conformal Killing vector (CKV) of  $\gamma$. Moreover, only the conformal class of the data matters and of particular importance is the case of $\gamma$ conformal to the standard metric of the sphere, in particular because of its relation with black hole spacetimes such as Kerr-de Sitter \cite{Kdslike}. We will expand on this later in this introduction.

In the physics literature a ``canonical'' form for the $\mathfrak{o}(1,n+1)$ elements is often employed mostly in four dimensions \cite{syngeSR} but also in arbitrary dimensions \cite{Kdslike}, \cite{ida20}. This form requires identifying the causal character of the eigenvectors of a given element $F \in \mathfrak{o}(1,n+1)$ and gives rise to two different types of canonical forms, one and only one admitted by each given $F$. Something similar is done in  more generality in  \cite{djokovic83} where, from a powerful classification result, a list of canonical forms for a wide sample of Lie algebras is given, but the pseudo-orthogonal case still requires two different forms. All these forms contain sufficient information to identify the orbit generated by the adjoint action of the group acting on the given element. 
However, it is surprising that, to the best of the authors' knowledge, there are no previous attempts to find
a unified canonical form to which any single element of the algebra  $\mathfrak{o}(1,n+1)$ can be carried to.  In the present paper, we address and solve the problem of finding a unified canonical form for skew-symmetric endomorphisms in three ($n=1$) and four ($n=2$) dimensions. 

As mentioned above, one aspect of the relevance of pseudo-orthogonal groups (or any signature) lies in their relation with the conformal group of a related space. For $O(1,n+1)$ this is the conformal group of the sphere $\mathbb{S}^{n}$, that we denote $\mathrm{Conf}\lr{\mathbb{S}^{n}}$.
More specifically, the orthochronous subgroup (i.e. the one preserving time orientation) $O^+(1,n+1)$ is isomorphic to $\mathrm{Conf}\lr{\mathbb{S}^{n}}$  \cite{Kdslike}, \cite{PenroseRindVol1}, and so it is the lie algebra $\mathfrak{o}(1,n+1)$ to the conformal Killing vector (CKV) fields $\mathrm{CKill}\lr{\mathbb{S}^{n}}$. Thus, finding a canonical form for the elements of $\mathfrak{o}(1,n+1)$, in turn implies a canonical form for the elements of $\mathrm{CKill}\lr{\mathbb{S}^{n}}$. Amongst other applications, it is particularly useful to employ the canonical form to find adapted coordinates to an arbitrary $\xi \in \mathrm{CKill}\lr{\mathbb{S}^{n}}$. In these coordinates, the KID equations are straightforward to solve with generality, which is a first step in order to obtain all TT tensors that generate spacetimes with at least one symmetry. This is a possible route to obtain a new characterization result for Kerr-de Sitter, specially relevant for the physical $n=3$ case. Here we study in detail  the $n = 2$ case, where in addition  we prove that there always exist an element $\xi^\perp \in \mathrm{CKill}\lr{\mathbb{S}^{n}}$, which is everywhere orthogonal to $\xi$, with the same norm and  such that $[\xi, \xi^\perp] = 0$ (c.f. Lemma \ref{orto} below), so it is convenient to 
adapt coordinates simultaneously 
to $\xi$, $\xi^\perp$. With these coordinates at hand, we obtain all TT tensors satisfying the KID equation in a very simple and elegant form (c.f. Section \ref{secapps}).

Some of the results here are generalizable to arbitrary dimension. However, we believe that the low dimensional case deserves a separate analysis for several reasons. First, the most relevant physical dimension for a spacetime is four, so studying this case in detail is particularly important and intrinsically interesting.
Second, the results presented here are an essential building block for the generalization to arbitrary dimensions. For example, a canonical form for any dimension  will follow by combining the results in this paper and well-known classification theorems of pseudo-orthogonal algebras. In addition, dealing with low dimensions allows us to analyze some of the questions in more depth and get additional insights into the problem. 
This perspective also provides  clues about  the possible solutions to the problem in more dimensions. Finally, although simpler than in higher dimensions, even the low dimensional case is far from trivial, so it makes sense the present this case in a separate work.
%, so for the sake of conciseness, we leave the task of generalizing our results here for a future paper.

This paper is intended to be self-contained  and only requires elementary knowledge of algebra and differential geometry. Our intention is to make our results accessible for readers with different backgrounds. The paper is organized as follows. Sections \ref{seccanon}, \ref{secinvg} and \ref{secgeninvg} are devoted to the obtention and analysis of a canonical form for any given (non-zero) element  $F \in \mathfrak{o}(1,3)$. In section \ref{seccanon} we obtain our canonical form in four dimensions, i.e. for  $\mathfrak{o}(1,3)$ and show its universal validity for every non-trivial $F$.
The change of basis that yields to the canonical form is not unique. This implies the existence of an invariance group, that we derive in  section \ref{secinvg}. In section \ref{secgeninvg} we analyze the generators of the invariance group and obtain a decomposition of the element $F$ in terms of these. We also make a connection between this decomposition and the standard duality rotations
  for two-forms.
In all these sections, the three-dimensional case is obtained and discussed as a corollary of the four-dimensional one.

The following Sections \ref{secGCKVs}, \ref{GCKV_and_F}, \ref{seccanonGCKV}, \ref{secadaptedcoords} are devoted to the study of so-called global CKVs (GCKV) defined on Euclidean space $\mathbb{E}^2$, and which are directly related to CKV on the sphere $\mathbb{S}^2$. Section \ref{secGCKVs} defines such vectors and Section \ref{GCKV_and_F} describes a known  relation
between them and the Lie algebra $\mathfrak{o}(1,3)$. In Section \ref{seccanonGCKV} we apply all the results for the $\mathfrak{o}(1,3)$ algebra to the CKVs of the sphere, namely, the obtention of a canonical form and its invariance group. As a useful consequence of the two viewpoints, we are able (Corollary \ref{canonbasescoord}) to obtain in a fully explicit form the change of basis that transforms any given $F$ into its canonical form.
Finally, Section \ref{secadaptedcoords} gives a set of coordinates adapted to an arbitrary $\xi$ and its orthogonal $\xi^\perp$. The results concerning the canonical form of GCKV and the adapted coordinates are summarized in Theorem \ref{Main}.
%(from Lemma \ref{orto} mentioned above). 
 Our last Section \ref{secapps} gives two interesting applications for the previous results. First, given a GCKV $\xi$, Theorem \ref{Killing} gives a list of all metrics, conformal to the metric of a 2-sphere, for which $\xi$ is a Killing vector. Second, Theorem \ref{theoTT} gives an elegant solution of the TT tensors satisfying the KID equations in $\mathbb{S}^2$.

\section{Canonical form of skew-symmetric endomorphisms in
  $\mathbb{M}^{1,3}$}\label{seccanon}

\label{CanonSkew}

In this section we consider Lorentzian four-vector spaces $(V,g)$, i.e a four dimensional vector space $V$ endowed with a pseudo-Riemannian metric $g$ of signature $\{-,+,+,+\}$. The inner product with $g$ is denoted by $\lrprod{\cdot,\cdot}$.  We will often identify Lorentzian vector spaces of dimension $n$ with Minkowski $\mathbb{M}^{1,n-1}$. Null vectors are vectors with vanishing norm (in particular, the zero vector is null in our conventions). An endomorphism
$F: V \longrightarrow V$ is skew-symmetric when it satisfies
\begin{align}\label{defskew}
  \la e, F(e') \ra = - \la F(e), e' \ra, \quad \quad \forall e,e' \in V.
\end{align}
This subset of $\mathrm{End}\lr{V}$ is denoted by $\skwend{V}$.
% Another property that will be used often is that 
% a skew-symmetric endomorphism $F \in \mbox{End}(V)$
% that restricts to a vector subspace $U \subset V$ (i.e. $F(U) \subset U$)
% also restricts to the orthogonal space $U^{\perp}$. Indeed, for
% $\x \in U, \y \in U^{\perp}$ 
% \begin{align*}
% \la F(\y),\x \ra = - \la \y, F(\x) \ra =0
% \end{align*}
% since $F(\x) \in U$. But since this is true for all $\x \in U$, this equation states precisely that $F(\y) \in U^{\perp}$.
We take, by definition, that eigenvectors of an endomorphism are always non-zero.
$\ker F$
and $\mathrm{Im\,} F$ denote, respectively, the kernel and image of
$F \in \mathrm{End}\lr{V}$.

We now briefly discuss a few basic properties of skew-symmetric endomorphisms that we will be referring to.
First, it is immediate from %the definition of skew-symmetry
\eqref{defskew} that every vector $e \in V$ is perpendicular to its image,
%under $F$,
i.e. $\lrprod{F(e),e} = 0$. Second, consider a, possibly complex, eigenvalue $\lambda \in \mathbb{C}$ and its eigenvector $w \in V_\mathbb{C}$ (the complexification of $V$). By the previous property, 
$w$ must be null if $\lambda \neq 0$, because $\lrprod{F(e),e} = \lambda \lrprod{e,e} = 0$. Eigenvectors with zero eigenvalue may be both null and non-null. Since $F$ is real, the complex conjugate $\lambda^\star \in \mathbb{C}$
  is an eigenvalue with eigenvector $w^\star \in V_\mathbb{C}$, so
%
%  Non-null (as well as null) eigenvectors are also possible as long as $\lambda =0$. If $\lambda \neq 0$, the complex conjugates
%  
%  and $
%  $ are also eigenvector and eigenvalue so 
 \begin{equation}
               \lrprod{F(w),w^\star} = \lambda \lrprod{w,w^\star} = - \lambda^\star \lrprod{w,w^\star}. 
              \end{equation}
              Thus, either $\lambda$ is purely imaginary (including zero) or, if not,  $w,w^\star$ are a pair of null vectors orthogonal to each other. Suppose the later and denote $w = u + i v$ for $u,v \in V$. Then the nullity condition implies $\lrprod{u,v} = 0$ and $\lrprod{u,u} = \lrprod{v,v}$ and orthogonality to $w^\star$ implies $\lrprod{u,u} = - \lrprod{v,v}$. Hence $u,v$ are null
              and proportional, i.e. $u = a v$ for some $a \in \mathbb{R}$, in consequence $w = (a + i) v$. Therefore, $v \in V$ is a real null eigenvector and its corresponding eigenvalue $\lambda$  must be real.  Summarizing, $F$ has only real or purely imaginary eigenvalues and their corresponding eigenvectors must be null for non-zero eigenvalues.

It will be useful to work with two-dimensional subspaces which are invariant under the action of $F$, which we will call ``eigenplanes''. Let $\spn{e,e'} = \Pi$ be a spacelike eigenplane for a pair of spacelike, orthogonal, unit vectors $e,e'$. Then by $F$-invariance
 \begin{equation*}
  F(e) = a_1 e + a_2 e',\quad \quad F(e') = b_1 e+ b_2 e',\quad \quad a_1,a_2,b_1,b_2 \in \mathbb{R}, 
 \end{equation*}
\and by skew-symmetry $a_1 =\lrprod{F(e),e} = 0,~b_2 =\lrprod{F(e'),e'} = 0$ and $a_2 = \lrprod{F(e),e'}  = - \lrprod{e,F(e')} = -b_1=:\mu$. Hence 
 \begin{equation}\label{eigeneqs}
  F(e) = \mu e',\quad \quad F(e') = -\mu e,\quad \quad \mu \in \mathbb{R},
 \end{equation}
 which is equivalent to the following eigenequations
 \begin{equation}\label{eigeneq1}
  F(e + i e')= -i\mu (e+i e'),\quad \quad   F(e - i e')= i\mu (e-i e').
 \end{equation}
 In a similar way, for a pair of orthogonal vectors $e_0,e$ spanning a timelike eigenplane, with $e_0$ unit timelike and $e_1$ unit spacelike, one can immediately verify
 \begin{equation}\label{eigeneqstime}
  F(e_0) = \mu e_1,\quad \quad F(e_1) = \mu e_0,\quad \quad \mu \in \mathbb{R}
 \end{equation}
 and 
 \begin{equation}\label{eigeneq2}
  F(e_0 + e_1')= \mu (e_0 +e_1),\quad \quad   F(e_0 - e_1)= -\mu (e_0 - e_1).
 \end{equation}
If $F$ admits an invariant subspace $U$ of any dimension, $F$ also leaves the orthogonal space $U^{\perp}$ invariant. This follows immediately from
% Notice that when there is an $F$-invariant subspace $U$, by skew-symmetry $F$ restrict to both $U$ and its orthogonal because
 \begin{equation}
  0 = \lrprod{F(u),v} = - \lrprod{u,F(v)} \quad \quad \forall u \in U,~\forall v \in U^\perp.
 \end{equation}
In particular, in four dimensions the existence of a timelike eigenplane is equivalent to  the existence of an (orthogonal) spacelike eigenplane.
%  \begin{lemma}
%   $F$ only has real or purely imaginary eigenvalues and $F^2$ only real eigenvalues. 
%  \end{lemma}
% 
% \begin{proof}
%  For en eigenvalue $\lambda$ and eigenvector $v$ of $F^2$, then the conjugate $\lambda^\star, v^\star$ are also an eigenvalue and eigenvector of $F$. The symmetry of $F^2$ leads to $\lambda \lrprod{v,v^\star} = \lambda^\star \lrprod{v, v^\star}$. Then either $\lambda$ is real or $\lrprod{v,v^\star}  =0$. For the latter, $v$ is an eigenvector of $F$ with eigenvalue one of the roots of $\lambda$. Since $F$ is skew, this implies $\lrprod{v,v} =0$, which is only possible if $v = c u$ for some $c \in \mathbb{C},~u \in V$. Then $u$ is an eigenvector
% \end{proof}

Another well-known property of skew-symmetric endomorphisms is that $\dim \mathrm{Im\,} F$ is always even. Equivalently, in four dimensions $\dim \ker F$
is also even (in arbitrary dimension $V$, $\dim \ker F$
  has the same parity as $\dim V$). To see this, consider the 2-form ${\boldsymbol F}$ associated to  $F\in \skwend{V}$ by the standard relation
 \begin{equation}\label{Fflat}
  \boldsymbol{F}(e,e') = \lrprod{e, F(e')},\quad\quad \forall e,e' \in V.
 \end{equation}
 The matrix representing $\boldsymbol F$ is skew in the usual sense. The dimension of $\mathrm{Im\,} \boldsymbol F \subset {V}^\star$ (the dual of $V$) is the rank of this matrix, which is known to be even (see e.g. \cite{gantmacher1960theomat}), and clearly $\dim \mathrm{Im\,} \boldsymbol F =
   \dim \mathrm{Im\,}  F$.

   The first step towards our canonical form for $F$ is the following classification result, which relies on the properties described above.
% This fact is required for the following proof of the classification of $\skwend{\mathbb{M}^{1,3}}$.
 \begin{lemma}[Classification of $\skwend{\mathbb{M}^{1,3}}$]\label{lemmaclasif4}
 Let $F \in \skwend{V}$ in a Lorentzian vector space  $(V,g)$  of dimension four. If $F \neq 0$ then one of the following exclusive possibilities hold:
 \begin{enumerate}
  \item[a)] $F$ has a spacelike eigenvector orthogonal to a null eigenvector, both with vanishing eigenvalue.  
  \item[b)] $F$ has a spacelike eigenplane (as well as a timelike orthogonal eigenplane). 
 \end{enumerate}
\end{lemma}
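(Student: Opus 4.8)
The plan is to classify $F$ according to the nature of its eigenvalues, building on the facts already established: the eigenvalues of $F$ are real or purely imaginary, an eigenvector associated with a nonzero eigenvalue is null, and $\dim\ker F$ is even. I would first record that, since the matrix of $\boldsymbol F$ is skew, its characteristic polynomial is even, so the complex eigenvalues occur in pairs $\pm\lambda$. As $F\neq 0$ and $\dim\ker F$ is even, this leaves $\dim\ker F\in\{0,2\}$, and the entire argument hinges on the dichotomy ``$F$ has a nonzero eigenvalue'' versus ``$F$ is nilpotent'': the former will yield case (b) and the latter case (a).

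Suppose first that $F$ has a nonzero real eigenvalue $\lambda$. Then $\pm\lambda$ are both eigenvalues with null eigenvectors $k,l$, where $F(k)=\lambda k$ and $F(l)=-\lambda l$. I would show $\langle k,l\rangle\neq 0$: two orthogonal null vectors are proportional in Lorentzian signature, and proportionality would force $-\lambda l=F(l)=\lambda l$, i.e. $\lambda=0$. Hence $k,l$ span a timelike invariant plane, which is case (b) by the equivalence between timelike and spacelike eigenplanes noted above. Suppose instead that $F$ has a nonzero imaginary eigenvalue $i\mu$ with eigenvector $w=u+iv$; then $F(u)=-\mu v$ and $F(v)=\mu u$, so $\spn{u,v}$ is a real invariant plane, and the nullity of $w$ gives $\langle u,v\rangle=0$ and $\langle u,u\rangle=\langle v,v\rangle$. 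I would discard a negative common norm (two orthogonal timelike vectors cannot coexist) and a vanishing one (it forces $u,v$ proportional and then $a^2=-1$), so the plane is spacelike and we are again in case (b).

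The main work is the nilpotent case, $F\neq 0$ with all eigenvalues zero, where necessarily $\dim\ker F=2$. Here I would exploit $\mathrm{Im\,} F=(\ker F)^\perp$ together with the invariance of orthogonal complements to prove that $\ker F$ cannot be nondegenerate. Indeed, if $\ker F$ were spacelike (resp. timelike), the invariant plane $(\ker F)^\perp=\mathrm{Im\,} F$ would be timelike (resp. spacelike) and the restriction $F|_{(\ker F)^\perp}$ would be a skew endomorphism of a $2$-plane whose eigenvalues, forced to vanish by nilpotency, give $F|_{(\ker F)^\perp}=0$; this places $(\ker F)^\perp$ inside $\ker F$, contradicting $(\ker F)^\perp\cap\ker F=\{0\}$. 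Therefore $\ker F$ is a degenerate $2$-plane; since $\mathbb{M}^{1,3}$ admits no $2$-dimensional totally null subspace, its radical is one-dimensional and null, so $\ker F=\spn{k,s}$ with $k$ null, $s$ spacelike and $k\perp s$. These are exactly the null and spacelike eigenvectors with vanishing eigenvalue required by (a).

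It remains to check exclusivity, which I would phrase as ``(a) holds if and only if $F$ is nilpotent, and (b) if and only if it is not''. In case (a) the kernel is degenerate, and the $\pm$-pair computations above show that any nonzero eigenvalue would instead make $\ker F$ spacelike or timelike, so $F$ must be nilpotent; in case (b) the restriction of $F$ to the spacelike eigenplane (or, when that plane lies in $\ker F$, to the complementary timelike plane) has a nonzero eigenvalue. I expect the nilpotent case to be the genuine obstacle: identifying $\ker F$ as precisely a null $2$-plane and extracting the orthogonal spacelike/null pair is where the interplay of skew-symmetry, nilpotency, and Lorentzian signature must be used carefully; the remaining cases follow fairly directly from the preliminary facts.
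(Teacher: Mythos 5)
Your proof is correct, but it follows a genuinely different route from the paper's. The paper splits on $\dim \ker F \in \{0,2\}$ and, when $\dim \ker F = 2$, directly on the causal character of the kernel (null kernel gives case $a)$, non-degenerate kernel gives case $b)$); its only delicate step is the invertible case with all eigenvalues equal, which it eliminates via Cayley--Hamilton combined with skew-symmetry. You instead split on ``$F$ nilpotent'' versus ``$F$ has a nonzero eigenvalue''. This reorganization buys you a cheaper invertible case: once the characteristic polynomial is known to be even, the $\pm\lambda$ pairing is automatic and the paper's ``all eigenvalues equal'' subcase is vacuous, so Cayley--Hamilton is never needed. One caveat: your one-line justification (``the matrix of $\boldsymbol{F}$ is skew'') conflates the two-form with the endomorphism, whose matrix $G^{-1}A$ is not itself skew; the clean argument is that skew-symmetry means the $g$-adjoint of $F$ equals $-F$, and since the adjoint always has the same characteristic polynomial as $F$, one gets $p_F(\lambda) = (-1)^4 p_F(-\lambda)$. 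The price of your organization is paid in the nilpotent case, where you must prove something the paper never needs: that a nonzero nilpotent $F$ has degenerate kernel. Your argument there is sound --- $(\ker F)^{\perp}$ is invariant, a skew endomorphism of a non-degenerate $2$-plane with vanishing eigenvalues is zero, and $(\ker F)^{\perp} \subset \ker F$ contradicts non-degeneracy --- and your extraction of the orthogonal null/spacelike pair from the degenerate kernel uses the Lorentzian signature exactly where it must. Your exclusivity argument (case $a)$ holds iff $F$ is nilpotent) is also fine, and is more explicit than the paper's appeal to obviousness.
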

\begin{proof}
  Since $F$ is not identically zero, $\dim \ker F$ only can be either 2 or 0. Consider first $\dim \ker F = 0$ and let us prove that $b)$ must happen. We show this by proving that
%    The existence of a timelike eigenplane, implies the existence of a spacelike eigenplane and viceversa. In this case
    equations \eqref{eigeneq1} and \eqref{eigeneq2} must be satisfied. 
    Since $\ker F = \lrbrace{0}$, $F$ can only have non-zero eigenvalues, and we already know that they are either  real or purely imaginary. The existence of a purely imaginary one leads to equations \eqref{eigeneq1}, which in turn implies \eqref{eigeneq2}.
    %and $b)$ follows
    Suppose now that all eigenvalues are real non-zero. If there exist two different real eigenvalues $\mu,\mu'$ their respective eigenvectors $w,w'$
   (which recall are null) must satisfy
\begin{equation*}
 \lrprod{F(w),w'} = \mu \lrprod{w,w'} = -\mu' \lrprod{w,w'}.
\end{equation*}
The product $\lrprod{w,w'}$ cannot be zero, as otherwise $w, w'$ would be proportional and the eigenvalues $\mu$ and $\mu'$ would be the same. Thus, 
$\mu = -\mu'$, and hence  \eqref{eigeneq2}, and also $\eqref{eigeneq1}$, hold.
%; or $\lrprod{w,w'} = 0$
%that is not possible because they are null with different eigenvalue.
The remaining case is when all eigenvalues are equal, i.e. the characteristic polynomial is $p_F = (F - I \mu)^4$. By the Cayley-Hamilton theorem $\lrprod{p_f(u),v}= 0, ~\forall u,v \in V$. In particular, $\lrprod{p_f(u),v} = \lrprod{p_f(v),u}, \forall u,v \in V$. By skew-symmetry the even powers on each side cancel out and we are 
left with
\begin{equation}
 - 4 \mu \lrprod{F^3(u),v} - 4 \mu^3 \lrprod{F(u),v} 
 =
 - 4 \mu \lrprod{F^3(v),u} - 4 \mu^3 \lrprod{F(v),u}
  =  4 \mu \lrprod{F^3(u),v} + 4 \mu^3 \lrprod{F(u),v}, \qquad \forall u,v \in V.
\end{equation}
Since we are in the case $\mu \in \mathbb{R} \setminus \{0\}$ we conclude that
$ F(F^2 + \mu^2) = 0$, and since $F$ is invertible ($\ker F = \lrbrace{0}$) also
$F^2 + \mu^2 = 0$. But this means that $F$ admits a complex eigenvalue, which is a contradiction, and we have exhausted all possible cases with $\dim \ker F = 0$.

%  the existence of an $F$-invariant timelike or spacelike plane with $\mu \neq 0$ implies (by simply applying $F$ twice) that $F^2$ has a positive or negative eigenvalue $\pm \mu^2$ respectively. So assume that there are no $F$-invariant planes. Therefore, as a consequence of Lemma \ref{lemmabasics} $e)$, $F^2$ can only have real vanishing eigenvalues. Let $e \in V$ be a non-zero eigenvector $F^2(e) = 0$. Then $F(e) \in \ker F$ must be $F(e) = 0$, so $e \in \ker F$, which is a contradiction. Hence, if $\ker F = \lrbrace{0}$, $F$ must admit an $F$-invariant timelike(spacelike) plane, which implies the $F$-invariance if its orthogonal spacelike (timelike) plane. Moreover, this is the only possibility if $\ker F = \lrbrace{0}$, thus, it excludes any other case.  
 
Now let $\dim \ker F = 2$. According to the causal character of $\ker F$, either $\ker F$ is null, and we are in case $a)$ of the lemma or
$\ker F$ is non-degenerate, and we are in case $b)$. The fact that cases
  $a)$ and $b)$ are mutually exclusive  is obvious.

\end{proof}

The classification in Lemma \ref{lemmaclasif4} contains two possible cases. It is common to use this result to find simple forms for each case, for example, in case $a)$ by including in the basis two orthogonal vectors $k,e \in \ker F$; or in case $b)$, by combining bases in the orthogonal and timelike eigenplanes, so that $F$ is explicitly a direct sum of two 2-dimensional endomorphisms. In the following Proposition we find a canonical form which includes cases $a)$ and $b)$ simultaneously, and which depends on
two parameters only.

\begin{proposition}\label{propcanonF4}
  For every non-zero $F \in \skwend{V}$, with $(V,g)$ a four-dimensional Lorentzian vector space with a choice of time orientation
  , there exists an orthonormal unit basis $B:=\lrbrace{e_0,e_1,e_2,e_3}$, with $e_0$ timelike future directed such that
 \begin{equation}\label{canonFdim4}
   \left ( \begin{array}{c}
             F(e_0) \\
             F(e_1) \\
             F(e_2) \\
             F(e_3)
           \end{array} \right )
                    =  \left(
\begin{array}{cccc}
 0 & 0 & -1 + \frac{\aa }{4} & \frac{\bb }{4} \\
 0 & 0 & 1+\frac{\aa }{4} & \frac{\bb }{4} \\
 -1+\frac{\aa }{4} & - 1- \frac{\aa }{4} & 0 & 0 \\
 \frac{\bb }{4} & -\frac{\bb }{4} & 0 & 0 \\
\end{array}
       \right)
%           =  \left(
%\begin{array}{cccc}
% 0 & 0 & -1 + \frac{\aa }{4} & \frac{\bb }{4} \\
% 0 & 0 & -1-\frac{\aa }{4} & -\frac{\bb }{4} \\
% -1+\frac{\aa }{4} & 1+\frac{\aa }{4} & 0 & 0 \\
% \frac{\bb }{4} & \frac{\bb }{4} & 0 & 0 \\
%\end{array}
       %\right)
       \left ( 
\begin{array}{c}
  e_0 \\
  e_1 \\
  e_2 \\
  e_3
\end{array}
\right )
,\quad\quad \aa, \bb \in \mathbb{R},
 \end{equation}
where $\aa := - \frac{1}{2}\Tr{F^2}$ and $\bb^2 := - 4 \det F$, with $\bb \geq 0$. Moreover, if $\bb=0$ the vector $e_3$ can be taken to be any spacelike unit vector lying in the kernel of $F$.
\end{proposition}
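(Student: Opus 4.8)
Before splitting into cases I would record two properties of the target matrix in \eqref{canonFdim4}, which I denote $M(\sigma,\tau)$. First, a one-line check of $\la e_i, M(\sigma,\tau) e_j\ra = -\la M(\sigma,\tau) e_i, e_j\ra$ on the basis vectors confirms that $M(\sigma,\tau)$ is genuinely skew with respect to $g=\mathrm{diag}(-1,1,1,1)$. Second, a direct computation gives $-\tfrac12\Tr M(\sigma,\tau)^2=\sigma$ and $-4\det M(\sigma,\tau)=\tau^2$ (the matrix is block anti-diagonal, so its determinant factorizes). This second fact is the whole point of the parameter names: once I produce \emph{some} orthonormal basis in which $F$ equals $M(\sigma',\tau')$, the basis-independence of $\Tr F^2$ and $\det F$ forces $\sigma'=-\tfrac12\Tr F^2=\sigma$ and $\tau'^2=-4\det F=\tau^2$, so (taking $\tau'\ge 0$) automatically $\tau'=\tau$. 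It therefore suffices to construct one such basis with $\tau'\ge 0$; the future-orientation of $e_0$ can be imposed at the very end by the global reflection $e_i\mapsto -e_i$, which reverses the time orientation while leaving $M(\sigma,\tau)$ invariant (since $F(-e_i)=-F(e_i)$ rewrites as the same matrix in the flipped basis).

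For case $b)$ of Lemma \ref{lemmaclasif4} I would first use \eqref{eigeneqs} and \eqref{eigeneqstime} to build an orthonormal frame $\lrbrace{f_0,f_1,f_2,f_3}$ ($f_0$ timelike) adapted to the orthogonal timelike and spacelike eigenplanes, in which $F(f_0)=\mu f_1$, $F(f_1)=\mu f_0$, $F(f_2)=\nu f_3$, $F(f_3)=-\nu f_2$. Flipping $f_1$ and, if needed, interchanging $f_2,f_3$ lets me take $\mu,\nu\ge 0$, so that $\sigma=\nu^2-\mu^2$ and $\tau=2\mu\nu\ge 0$. The heart of the argument is producing the explicit Lorentz change of basis carrying this block-diagonal frame to the canonical one, and I would obtain it by eigenvector matching. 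In the block frame the null vectors $f_0\pm f_1$ are eigenvectors of $F$ with real eigenvalues $\pm\mu$, while $f_2\mp i f_3$ are eigenvectors with imaginary eigenvalues $\pm i\nu$; computing the same four eigenvectors for $M(\sigma,\tau)$ expresses them as explicit combinations of $\lrbrace{e_0,e_1,e_2,e_3}$. Equating the two descriptions, and fixing the residual complex scalings through the inner products $\la f_0+f_1,f_0-f_1\ra=-2$ and $\la f_2-if_3,f_2+if_3\ra=2$, determines each $e_i$ as an explicit real combination of the $f_j$. I would then verify directly that the resulting frame is orthonormal and that $F$ acts on it exactly as in \eqref{canonFdim4}.

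For case $a)$ the kernel is a degenerate plane $\ker F=\spn{k,s}$ with $k$ null and $s$ spacelike unit, both with zero eigenvalue. Here $\det F=0$ gives $\tau=0$ at once, and writing $F$ in a null frame adapted to $k$ shows every diagonal entry of $F^2$ vanishes, so $\Tr F^2=0$ and $\sigma=0$; thus the target reduces to $M(0,0)$. I would then exhibit the frame explicitly, taking $e_3:=s$, choosing $e_0+e_1$ to be the suitably normalized null vector in $\ker F$ proportional to $k$, and completing with the spacelike $e_2\in\mathrm{Im\,} F$; a boost in the $k$-plane normalizes the single nilpotent block, and one checks $F$ equals $M(0,0)$. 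For the final assertion, note that every spacelike unit vector of $\ker F$ has the form $\pm s+c\,k$ and is automatically orthogonal to $k$, so the same construction runs verbatim for any such choice after re-adjusting $e_0,e_1,e_2$ orthogonally; the analogous statement in case $b)$ with $\tau=0$ (where $\ker F$ is non-degenerate) follows the same way, with $e_3$ ranging over the unit spacelike vectors the kernel contains.

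The main obstacle is the explicit change of basis in case $b)$: the block-diagonal and canonical frames share no common invariant plane, since the canonical timelike plane $\spn{e_0,e_1}$ is \emph{not} $F$-invariant whereas $\spn{f_0,f_1}$ is, so the transformation is a genuine boost-rotation mixing all four directions, and keeping track of the eigenvector normalizations together with the sign and time-orientation choices is where the care lies. Everything else is either a direct verification on basis vectors or an immediate consequence of Lemma \ref{lemmaclasif4} together with the invariance of $\Tr F^2$ and $\det F$.
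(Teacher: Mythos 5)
Your proposal is correct in substance and follows the paper's case split through Lemma \ref{lemmaclasif4} (your case $a)$ is essentially the paper's argument verbatim), but in case $b)$ and in the final assertion you take a genuinely different route. The paper simply writes down the explicit change of basis \eqref{basiscan4} from the block-diagonal eigenplane frame and verifies by direct computation that the resulting frame is orthonormal, future directed, and realizes \eqref{canonFdim4} with $\aa = \mu_1^2-\mu_0^2$, $\bb = 2\mu_0\mu_1$; where that formula comes from is left unexplained (the paper only recovers it naturally much later, via the M\"obius correspondence of Corollary \ref{canonbasescoord}). Your eigenvector-matching argument supplies exactly the missing derivation: since $F$ and the target matrix are both skew with the same characteristic polynomial $\lambda^4 + \aa\lambda^2 - \bb^2/4$, matching the two null real eigenvectors and the conjugate pair of complex eigenvectors, with normalizations pinned down by $\la f_0+f_1,f_0-f_1\ra = -2$ and $\la f_2-if_3,f_2+if_3\ra = 2$, produces an intertwining isometry, i.e.\ the desired basis (unique only up to the residual boost--rotation, which is precisely the centralizer the paper computes in Section \ref{secinvg}). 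Your two preliminary reductions are also clean and worth having explicitly: identifying $\aa',\bb'$ with the invariants $-\frac12\Tr F^2$ and $-4\det F$ by basis-independence rather than by computation, and fixing time orientation at the end by the global flip $e_i \mapsto -e_i$, which leaves the matrix form invariant. For the final assertion your argument (re-run the construction with the prescribed spacelike unit kernel vector $\pm s + ck$, respectively with a basis of the non-degenerate kernel containing the prescribed vector) is shorter and more transparent than the paper's, which exhibits an explicit new orthonormal frame in the canonical basis and checks it by hand.

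One point in your sketch needs repair before it is a complete proof: in case $b)$ with $\bb=0$, exactly one of $\mu,\nu$ vanishes, so the eigenvalue $0$ has multiplicity two and ``the same four eigenvectors'' of the target matrix are no longer determined up to scale; the matching as stated breaks down there. The fix stays inside your framework: match the $\pm i\nu$ (or $\pm\mu$) pair as before, and then map the two-dimensional kernel of $F$ isometrically onto the kernel of the target matrix, after checking that both kernels are non-degenerate planes of the same causal type (timelike for $\aa>0$, spacelike for $\aa<0$, which follows from the sign of $-\aa$ as the norm of the distinguished kernel vector). Alternatively, one can match the two null directions inside each kernel, normalized to mutual product $-2$, which reduces this sub-case to the generic pattern. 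Without one of these amendments, the degenerate sub-case of $b)$ is not covered by your construction.
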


\begin{proof}
  By Lemma \ref{lemmaclasif4} there exist two possible cases. We start proving the proposition assuming that we are in case $a)$. Let $\spn{k,e} = \ker F$, with $k,e \in V$  a pair of orthogonal null and spacelike unit vectors respectively. We can complete them to a semi-null basis $B = \lrbrace{k,l,w,e}$, i.e. such that $\lrprod{k,l}  = -2$, $\lrprod{w,w}=\lrprod{e,e}=1$ and the rest of scalar products all zero.
 Using these orthogonality relations and skew-symmetry of $F$ we can calculate:
 \begin{equation*}
  F(k) = 0, \quad\quad F(l) =  a w,\quad\quad F(w) = \frac{a}{2} k,\quad \quad F(e) = 0,
 \end{equation*}
for a constant $a \in \mathbb{R}\setminus \{0\}$.  Redefine a new basis $\lrbrace{l',k',w',e'}$, with $k' := \frac{\epsilon a}{2} k,~l':= \frac{ 2 \epsilon }{a} l,~w' :=-\epsilon w,~e':=e $, where $\epsilon^2 = 1$ is chosen so that $k',l'$ are future directed. Then
\begin{equation*}
  F(k') = 0, \quad\quad F(l') =   w',\quad\quad F(w') = k',\quad\quad F(e') = 0,
 \end{equation*}
 which in the orthonormal basis  $B = \lrbrace{e_0,e_1,e_2,e_3}$ given by $k' = e_0 + e_1,~l' = e_0-e_1,w' = e_2, e' = e_3$ is
 \begin{equation*}
  F(e_0) = -e_2, \quad\quad F(e_1) = e_2,\quad\quad F(e_2) = -e_0 - e_1,\quad \quad F(e_3)=0.
 \end{equation*}
 This corresponds to expression \eqref{canonFdim4} with $\aa = \bb = 0$.

 It remains to prove the proposition for case $b)$. In this case, there exist timelike and spacelike eigenplanes, $\Pi_t = \spn{e'_0,e'_1}$ and $\Pi_s = \spn{e'_2,e'_3}$ respectively, i.e. fulfilling equations \eqref{eigeneqs} and \eqref{eigeneqstime} for respective eigenvalues $\mu_0$ and $\mu_1$, such that at most one of them vanishes. We can take the bases of $\Pi_t, \Pi_s$ so that that $B' := \lrbrace{e'_0,e'_1,e'_2,e'_3 }$ is an orthonormal
 basis of $V$, with $e_0$ past directed and
 the eigenvalues $\mu_0$ and $\mu_1$ are positive or (at most one) zero . Then, the following change of basis is well-defined:
 \begin{equation}\label{basiscan4}
 \begin{array}{lcl}
 e_0 = \frac{-1}{\sqrt{\mu_0^2 + \mu_1^2}} \lrbrkt{\lr{1 + \frac{\mu_0^2 + \mu_1^2}{4}}e'_0 + \lr{1 - \frac{\mu_0^2 + \mu_1^2}{4}}e'_2}, 
& & 
 e_2  = \frac{1}{\sqrt{\mu_0^2 + \mu_1^2}} \lr{\mu_0 e'_1 + \mu_1 e'_3},\\
 e_1 = \frac{1}{\sqrt{\mu_0^2 + \mu_1^2}} \lrbrkt{\lr{1 - \frac{\mu_0^2 + \mu_1^2}{4}}e'_0 + \lr{1 + \frac{\mu_0^2 + \mu_1^2}{4}}e'_2},
 &  & 
e_3  = \frac{1}{\sqrt{\mu_0^2 + \mu_1^2}} \lr{-\mu_1 e'_1 + \mu_0 e'_3}.
 \end{array}
 \end{equation}
 One checks by explicit computation that  $B:=\lrbrace{e_0,e_1,e_2,e_3}$ is an
   orthonormal basis, with $e_0$ timelike and future directed (because
   $\lrprod{e_0,e'_0}>0$). It is also a matter of direct calculation to see that
 %for the vectors in the basis $B$ it holds
 \begin{equation}\label{syscanonbas4}
 \begin{array}{lcl}
F(e_0)  = \lr{-1 + \frac{\aa}{4}} e_2 + \frac{\bb}{4} e_3, & & F(e_1)   = \lr{1 + \frac{\aa}{4}} e_2 + \frac{\bb}{4}e_3,\\
 F(e_2)  =  \lr{-1 + \frac{\aa}{4}} e_0- \lr{1 + \frac{\aa}{4}} e_1,&  &  F(e_3)  = \frac{\bb}{4}\lr{e_0-e_1},
 \end{array}
 \end{equation}
where the parameters $\aa,\bb \in \mathbb{R}$ are $\aa = \mu_1^2 - \mu_0^2 $ and $\bb = 2 \mu_0 \mu_1 \geq 0$. This corresponds to \eqref{canonFdim4} with at most one of the parameters $\aa, \bb$ vanishing. 

To show the last statement, a simple computation shows that (when
$\bb=0$) the kernel of $F$ is given by
\begin{align*}
  \mbox{ker} F = \left \{ a \left ( 1+ \frac{\aa}{4} \right ) e_0
  + a \left ( 1 -  \frac{\aa}{4} \right ) e_1  + b e_3, \qquad a,b \in \mathbb{R} \right \}.
\end{align*}
The subset of spacelike unit vectors in $\ker F$ is given by $1 + a^2 \aa > 0$ and
$b = \epsilon \sqrt{1 + a^2 \aa}$, $\epsilon = \pm 1$. We introduce the four vectors
\begin{align*}
  e_0' &= \left ( \frac{b + \epsilon}{2} + \left ( 1 + \frac{\aa^2}{16} \right )
  \frac{   b - \epsilon}{\aa} \right ) e_0
  + \left ( 1 - \frac{\aa^2}{16} \right ) \frac{  b- \epsilon }{\aa} e_1
         +   a \left ( 1 + \frac{\aa}{4} \right ) e_3, \\
  e_1' &=  - \left ( 1 - \frac{\aa^2}{16} \right ) \frac{  b-  \epsilon}{\aa} e_0
+ \left ( \frac{ b  +  \epsilon}{2} - \left ( 1 + \frac{\aa^2}{16} \right )
  \frac{ b- \epsilon}{\aa} \right ) e_1
           +  a \left ( - 1 + \frac{\aa}{4} \right ) e_3, \\
  e_2' & = \epsilon e_2, \\
  e_3' & = a \left (1 + \frac{\aa}{4} \right ) e_0
         + a \left ( 1 - \frac{\aa}{4} \right ) e_1 + b e_3,
\end{align*}
and observe that they are well-defined for all values of $\aa$, including zero. A straightforward computation shows that this is an orthonormal basis, and that
%the matrix
%expression of $F$ in this basis is given by
\eqref{canonFdim4} holds with $\bb=0$. The
last statement of the Proposition follows.

\end{proof}

Obtaining a canonical form in the three-dimensional case
is much easier, the main reason being that any two-form in three-dimensions is simple, i.e.   $\bm{F} \wedge \bm{F} = 0$ or, in other words, that $\bm{F}$ is of rank one as a differential form. So, the reader may wonder why it has not been treated before. The reason is that we can obtain the three dimensional case  as a direct corollary of
  the four-dimensional one. The construction is as follows.
Let $F\in \skwend{V}$ with $V$ Lorentzian three-dimensional. 
From $F$ we may define an auxiliary skew-symmetric endomorphism
    $\widehat{F}$ defined on $V \oplus \mathbb{E}_1$ endowed with the product metric ($\mathbb{E}_1$ is the one-dimensional Euclidean space).
    It is obvious that this space is a Lorentzian four-dimensional vector space.
%    space is isometric to $\mathbb{M}^{1,3}$.  
    We denote by $E_3$ a unit vector in $\mathbb{E}_1$  and  define
    $\widehat{F}$ simply by $\widehat{F}(u + a E_3) = F(u) + 0$, for all
    $u \in V$ and $a \in \mathbb{R}$ (we will identify
    $u \in V$ with $u + 0 \in V \oplus \mathbb{E}_1$ from now on). It is immediate to check that $\widehat{F}$ is skew-symmetric. Moreover,
      it has $\bb=0$, by construction. Then, the following Corollary is immediate:

    \begin{corollary}\label{propcanonF3}
        For every non-zero $F \in \skwend{V}$, with $(V,g)$ a Lorentzian three-dimensional vector space with a choice of time orientation, there exists an orthonormal unit basis $B:=\lrbrace{e_0,e_1,e_2}$, with $e_0$ timelike future directed such that
\begin{equation}\label{canonFdim3}
\left ( \begin{array}{c}
          F(e_0) \\
          F(e_1) \\
          F(e_2)
          \end{array} \right )  = \begin{pmatrix}
  0 &  0 & -1 + \frac{\cc}{4} \\
  0 & 0 & - 1 - \frac{\cc}{4} \\
  -1 + \frac{\cc}{4} & 1 + \frac{\cc}{4} & 0
\end{pmatrix}
\left ( \begin{array}{c}
          e_0 \\
          e_1 \\
         e_2
          \end{array} \right ), \quad \quad \cc := -\frac{1}{2}\Tr\lr{F^2} \in \mathbb{R}.
\end{equation}
\end{corollary}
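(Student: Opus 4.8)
The plan is to deduce the three-dimensional statement directly from Proposition \ref{propcanonF4}, exactly along the lines of the extension $\widehat{F}$ on $V \oplus \mathbb{E}_1$ introduced above. The only facts I need about $\widehat{F}$ are that it is a non-zero skew-symmetric endomorphism of a four-dimensional Lorentzian space, that it is \emph{not} invertible (since $\widehat{F}(E_3)=0$), and that $E_3$ is a spacelike unit vector lying in $\ker\widehat{F}$. Non-invertibility forces $\det\widehat{F}=0$, hence the parameter of $\widehat{F}$ in \eqref{canonFdim4} is $\bb=0$; this is precisely the case in which Proposition \ref{propcanonF4} grants freedom in the choice of the last basis vector.

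First I would apply Proposition \ref{propcanonF4} to $\widehat{F}$. This produces an orthonormal basis $\lrbrace{e_0,e_1,e_2,e_3}$ of $V\oplus\mathbb{E}_1$, with $e_0$ timelike future directed, in which $\widehat{F}$ takes the form \eqref{canonFdim4} with $\bb=0$. Invoking the last sentence of that Proposition, I would choose $e_3:=E_3$, which is admissible exactly because $E_3$ is a spacelike unit vector in $\ker\widehat{F}$. With this choice the remaining vectors $e_0,e_1,e_2$ span $E_3^{\perp}=V$, so $\lrbrace{e_0,e_1,e_2}$ is an orthonormal basis of the original three-dimensional space $V$, still with $e_0$ timelike and future directed. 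Since $\widehat{F}$ agrees with $F$ on $V$ and $e_0,e_1,e_2\in V$, the relations obtained for $\widehat{F}$ on these three vectors are relations for $F$ itself.

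It then remains to read off the $3\times 3$ block of \eqref{canonFdim4} obtained by setting $\bb=0$ and deleting the row and column associated with $e_3$ (whose entries all vanish once $\bb=0$). To match the signs displayed in \eqref{canonFdim3} one replaces $e_1$ by $-e_1$, which is harmless since it preserves orthonormality and the time orientation; this yields exactly the matrix in \eqref{canonFdim3}. Finally, the identification of the parameter is immediate: because $\widehat{F}=F\oplus 0$ one has $\widehat{F}^2=F^2\oplus 0$ and therefore $\Tr\widehat{F}^2=\Tr F^2$, so the four-dimensional parameter $\aa=-\tfrac12\Tr\widehat{F}^2$ coincides with $\cc=-\tfrac12\Tr F^2$.

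I do not expect any genuine obstacle here, as reflected by the word ``immediate'' preceding the statement; the only points requiring care are purely bookkeeping: checking that $e_3=E_3$ is a legitimate choice (guaranteed by the freedom in Proposition \ref{propcanonF4} when $\bb=0$) and the trivial sign adjustment of $e_1$ needed to reproduce the exact entries of \eqref{canonFdim3}.
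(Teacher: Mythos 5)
Your proof is correct and takes essentially the same route as the paper: extend $F$ to $\widehat{F}$ on $V\oplus\mathbb{E}_1$ (which has $\bb=0$ since $E_3\in\ker\widehat{F}$), invoke the last statement of Proposition \ref{propcanonF4} to choose $e_3=E_3$, and restrict the resulting basis to $V$. You are in fact slightly more careful than the paper's one-line proof, which omits both the explicit justification that $\bb=0$ and the sign adjustment $e_1\mapsto -e_1$ needed to reproduce the exact entries of \eqref{canonFdim3} from \eqref{canonFdim4}.
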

\begin{proof}
  By the last statement of Proposition \ref{propcanonF4}, the canonical basis
 $B = \lrbrace{e_0, e_1, e_2, e_3}$ of
    $\widehat{F}$ can be taken with $e_3 = E_3$, which means that
    $\lrbrace{e_0, e_1, e_2}$ is a basis of $V$.

\end{proof}

% \begin{remark}
%  Notice that what we mean by canonical form is a matrix depending on a certain minimum amount of basis-invariant quantities (trace and determinant), to which every single element $F \in \skwend{V}$ can be carried into. We have not proven that our canonical form is the only one, because indeed there might be more than one. The reason for this requires an analysis that is beyond our purposes here, but roughly speaking, the quantities that define the canonical form are the same than those that define the equivalence classes of the Lorentz group $O(1,n)$ acting on its algebra $o(1,n)$ (represented here by $\skwend{V}$. These are $p = [(n+1)/2]$ quantities and they depend on the entries of $F$, so we just have let $p$ of this entries free and adequately fix the rest. What is special from our canonical form is that the 3-dimensional case is obtained straightforward from the 4-dimensional case by simply making $\beta = 0$. 
% \end{remark}

\begin{remark}[Classification from the canonical form]
  For the canonical forms \eqref{canonFdim3} and \eqref{canonFdim4} we can derive a classification result for skew-symmetric endomorphisms
  and recover Lemma \ref{lemmaclasif4} in terms of $\aa, \bb$. For $F \in \skwend{\mathbb{M}^{1,2}}$ non-zero it is straightforward that $q := (1 + \cc/4) e_0 + (1- \cc/4) e_1$ generates $\ker F$ and furthermore $\lrprod{q, q} = -\cc$. Hence, the sign of $\cc$ determines the causal character of
  the kernel, namely spacelike for $\cc<0$, null for $\cc=0$ and timelike for
    $\cc >0$.  In the four-dimensional case, if
  %Supose now $F \in \skwend{\mathbb{M}^{1,3}}$ non-zero.
   $\bb \neq 0$, then $\ker F = \lrbrace{0}$ and we must be in case $b)$ of Lemma \ref{lemmaclasif4}. If $\bb = 0$, then $e_3 \in \ker F$ (spacelike) and the sign of $\aa$ determines the causal character of a vector $q \in \spn{e_0,e_1,e_2} \cap \ker F$ just like in the previous case. That is, $\bb =0$ and $\aa = 0$ corresponds with case $a)$ of Lemma \ref{lemmaclasif4} and otherwise we are in case $b)$.
\end{remark}

\section{Group of invariance of the canonical form}\label{secinvg}

In this section $F$ is a non-zero skew-symmetric endomorphism in a
  four-dimensional vector space, and
  $B= \{e_0,e_1,e_2,e_3\}$ is a canonical basis, i.e. one where $e_0$
  is future directed and \eqref{canonFdim4} holds.
 %\begin{align*}
%F(e_0) & = \left ( -1 + \frac{\aa}{4} \right ) e_2 + \frac{\bb}{4} e_3, \quad \quad
%F(e_1) = \left ( 1 + \frac{\aa}{4} \right )e_2 + \frac{\bb}{4} e_3, \\
%F(e_2) & = \left ( -1 + \frac{\aa}{4} \right )e_0
%- \left ( 1+ \frac{\aa}{4} \right ) e_1, \quad  \quad
%F(e_3) = \frac{\bb}{4} \left ( e_0 - e_1 \right ),
%\end{align*}
%where $\bb \geq 0$ and $\aa \in \mathbb{R}$.
It is useful to introduce the semi-null basis
$\{ \ell,k, e_2, e_3\}$ defined by
%By definition, we say that $\{ \ell,k, e_2, e_3\}$ is a semi-null basis 
%when $\ell, k$ are both null future directed and
%satisfy $\la \ell, k \ra = -2$ and $\{e_2,e_3\}$ are orthogonal to both
%$\ell,k$, perpendicular to each other and unit.
%We introduce a semi-null basis 
$\ell = e_0 + e_1$, $k= e_0 - e_1$. In this basis the endomorphism $F$ takes the form
\begin{align}
F(\ell) = \frac{\aa}{2} e_2 + \frac{\bb}{2} e_3, \quad\quad 
          F(k) = -2 e_2, \quad\quad          %\label{null1} \\
F(e_2) =- \ell + \frac{\aa}{4} k, \quad\quad 
F(e_3 ) =  \frac{\bb}{4} k. \label{null2}
\end{align}
We are interested in finding the most general orthochronous Lorentz transformation which transforms $B$ into a basis $B' = \{ e_0',e_1',e_2', e_3'\}$ in which $F$ takes the same form.  In terms of the corresponding 
semi-null basis $\{ \ell',k',e_2', e_3'\}$ we must impose \eqref{null2} with  primed vectors. We start with the following lemma:
\begin{lemma}
  \label{lem}
  Let $F$ be skew-symmetric and
  $\{ \ell,k,e_2,e_3\}$ be a semi-null basis that satisfies
  \begin{align}
    F(k) = -2 e_2, \quad \quad F(e_2) =  - \ell + \frac{\aa}{4} k
    \label{first}
  \end{align}
  and
  \begin{align}
    \la F(\ell), F(\ell) \ra
       = \frac{\aa^2 +\bb^2}{4}. \label{second}
  \end{align}
  Then either the semi-null basis $\{ \ell, k, e_2, e_3\}$
  or $\{ \ell, k, e_2, - e_3\}$ fulfils \eqref{null2}, and both do whenever $\bb=0$.
  \end{lemma}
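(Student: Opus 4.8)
The plan is to observe that the hypotheses already contain half of the target. Equations \eqref{first} are verbatim the expressions for $F(k)$ and $F(e_2)$ appearing in \eqref{null2}, so it only remains to determine $F(\ell)$ and $F(e_3)$ and to show that, up to the sign of $e_3$, they take the required form. Throughout I would use only the skew-symmetry of $F$ together with the semi-null relations $\la \ell,k\ra = -2$, $\la e_2,e_2\ra = \la e_3,e_3\ra = 1$ and all other products zero.

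First I would expand $F(\ell) = a\ell + bk + ce_2 + de_3$ and read off the coefficients by pairing with the basis vectors. Pairing with $\ell$ and using $\la F(\ell),\ell\ra = 0$ gives $b=0$; pairing with $k$ and using $\la F(\ell),k\ra = -\la\ell,F(k)\ra = 0$ gives $a=0$; pairing with $e_2$ and using $\la F(\ell),e_2\ra = -\la\ell,F(e_2)\ra$ together with \eqref{first} gives $c = \aa/2$. Hence $F(\ell) = \frac{\aa}{2}e_2 + de_3$ with $d$ still free. The single scalar hypothesis \eqref{second} then fixes $d$ up to sign: computing $\la F(\ell),F(\ell)\ra = \frac{\aa^2}{4} + d^2$ and equating to $\frac{\aa^2+\bb^2}{4}$ forces $d = \pm\frac{\bb}{2}$.

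Next I would determine $F(e_3)$ the same way, writing $F(e_3) = \alpha\ell + \beta k + \gamma e_2 + \delta e_3$. Pairing with $e_3$, $e_2$ and $k$ (using $\la F(e_3),x\ra = -\la e_3, F(x)\ra$ and \eqref{first}) kills $\delta$, $\gamma$ and $\alpha$ respectively, while pairing with $\ell$ gives $-2\beta = -\la e_3, F(\ell)\ra = -d$, so $F(e_3) = \frac{d}{2}k$. The key point, which I would stress, is that the \emph{same} $d$ governs both $F(\ell)$ and $F(e_3)$: if $d = \frac{\bb}{2}$ the basis $\{\ell,k,e_2,e_3\}$ already satisfies \eqref{null2}, whereas if $d = -\frac{\bb}{2}$ then $F(\ell) = \frac{\aa}{2}e_2 + \frac{\bb}{2}(-e_3)$ and $F(e_3) = -\frac{\bb}{4}k$, i.e.\ $\{\ell,k,e_2,-e_3\}$ satisfies \eqref{null2}. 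Thus a single reflection $e_3\mapsto -e_3$ simultaneously repairs both equations. Finally, when $\bb = 0$ the norm condition forces $d=0$, so $F(e_3)=0$ and $F(\ell) = \frac{\aa}{2}e_2$ regardless of the orientation of $e_3$, and both bases work. There is no real analytic obstacle here; the only thing to get right is this correlation of signs between $F(\ell)$ and $F(e_3)$, which is exactly what makes a single sign flip sufficient.
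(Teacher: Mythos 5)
Your proof is correct and follows essentially the same route as the paper's: both use skew-symmetry to show that $F(\ell) = \frac{\aa}{2} e_2 + d\, e_3$ and $F(e_3) = \frac{d}{2} k$ are governed by a \emph{single} common parameter $d$, then invoke \eqref{second} to fix $d = \pm \bb/2$, and finally note that reflecting $e_3$ flips the sign of $d$, so one of the two bases satisfies \eqref{null2} (both when $\bb=0$). The only difference is that you carry out explicitly the pairing computations that the paper compresses into the phrase ``skew-symmetry imposes.''
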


\begin{proof}
   Skew-symmetry imposes $F(\ell)$ and $F(e_3)$ to satisfy
  \begin{align*}
    F(\ell) = \frac{\aa}{2} e_2 + \frac{q}{2} e_3, \quad \quad
    F(e_3) = \frac{q}{4} k', \quad \quad q \in \mathbb{R}.
  \end{align*}
  Condition \eqref{second} imposes $q^2 = \bb^2$. Thus
  $q = \pm \bb$. Since reflecting $e_3$ replaces $q$ by $-q$, either the basis
  $\{ \ell, k, e_2, e_3\}$ or the basis  $\{ \ell, k, e_2, -e_3\}$ satisfies
  \eqref{null2} with $\bb \geq 0$ (and both do in case $\bb=0$). $\hfill \Box$.
  \end{proof}

Thus, to understand the group of invariance of \eqref{null2}
it suffices to impose \eqref{first}-\eqref{second} for $\{\ell',k',e_2'\}$. Let us decompose $k'$ in the original basis as
\begin{align}
  k' = \ca k + \cb \ell + c_2 e_2 + c_3 e_3.
%
%  , %\quad \quad A,B, \cdots = 2,3,
%  \quad \quad 4 ab = ||c||^2
  \label{defkp}
\end{align}
Observe that  $\ca, \cb \geq 0$ as a consequence of
$k'$ being future directed. Let us introduce two vectors $e_2'$ and $\ell'$ so that \eqref{first} are satisfied, namely
\begin{align}
  e_2' & := - \frac{1}{2} F(k') = \left ( \ca - \frac{\cb \aa}{4} \right )
  e_2 - \frac{\cb \bb}{4} e_3 + \frac{c_2}{2} \ell
  - \frac{1}{8} \left ( \aa c_2 + \bb c_3 \right ) k,
  \label{defe2p} \\
  \ell' & := \frac{\aa}{4} k' - F(e_2') = \frac{\cb \left ( \aa^2 + \bb^2 \right )}{16} k
  + \ca \ell - \frac{1}{4} \left ( \aa c_2 + \bb c_3 \right ) e_2
          + \frac{1}{4} \left ( \aa c_3 - \bb c_2 \right ) e_3.
          \label{defellp} 
\end{align}
The conditions  of $k'$ being  null, future directed and $e_2'$  spacelike and unit are easily found to be equivalent
to
  \begin{align}
    - 4 \ca \cb + ||c||^2  & =0, \quad \quad \ca , \cb \geq 0 ,\label{eq1}  \\
 \ca^2 + \frac{\aa^2 + \bb^2 }{16} \cb^2
  + \frac{\aa}{8} \left ( c_2^2 - c_3^2 \right ) +
  \frac{\bb}{4} c_2 c_3 & =1, \label{eq2}
  \end{align}
  where we have set $||c||^2 = c_2^2 + c_3^2$. Under \eqref{eq1}-\eqref{eq2} one
  easily checks that the conditions
 $\la e_2',k'\ra = 0$, $\la e_2',\ell' \ra =0$, 
 $\la \ell', \ell' \ra =0$ and $\la \ell',k'\ra = -2$ are all identically satisfied.  Thus, $\{\ell', k', e_2'\}$ defines a timelike hyperplane and we can introduce $e_3'$ as one of its two unit normals. By construction, the semi-null basis $\{\ell',k', e_2', e_3'\}$
 satisfies \eqref{first}. By Lemma \ref{lem},  this basis or the one
 defined with the reversed $e_3'$ will be a canonical basis of $F$ if and only if
\eqref{second} is satisfied. By skew-symmetry, this condition is equivalent
to
\begin{align}
  \la \ell' , F^2 (\ell') \ra + \frac{\aa^2 + \bb^2}{4} =0.
  \label{cond}
\end{align}
Directly from \eqref{null2} we compute
\begin{align*}
  F^2(\ell) &= - \frac{\aa}{2} \ell  + \frac{\aa^2 + \bb^2}{8} k , \quad
  F^2 (k) = 2 \ell - \frac{\aa}{2} k, \quad
  F^2(e_2)  = - \aa e_2 -  \frac{\bb}{2} e_3, \quad 
  F^2(e_3) = - \frac{\bb}{2} e_2,
  \end{align*}
  from where it follows
  \begin{align*}
    F^2 (\ell') = &
    \frac{1}{2} \left ( \frac{(\aa^2 + \bb^2) \cb}{4} -
    \aa \ca \right ) \ell + \frac{\aa^2 + \bb^2}{8} \left (
                    \ca - \frac{1}{4} \aa \cb \right ) k  + \frac{\left ( 2 \aa^2 + \bb^2
    \right ) c_2 + \aa\bb c_3}{8} e_2
    +\frac{\bb ( \aa c_2 + \bb c_3)}{8} e_3.
  \end{align*}
  One easily checks  that \eqref{cond} is identically satisfied when \eqref{eq1}-\eqref{eq2} hold. Thus, it only remains to solve
  this algebraic system. To that aim, it is convenient to introduce $Q\geq 0$ and an angle $\arg \in [ 0, \frac{\pi}{2}]$ defined by
  \begin{align}
    \aa = Q \cos(2 \arg), \quad \quad \bb = Q \sin (2 \arg).
    \label{Qsigma}
  \end{align}
  When $\aa^2 + \bb^2 >0$, $\{Q, \arg\}$ are uniquely defined. When
  $\aa=\bb=0$, then $Q=0$ and $\arg$ can take any value. Define also
  $\lambda_2, \lambda_3$ by
  \begin{align*}
    c_2 = 2 \lambda_2 \cos \arg  - 2 \lambda_3 \sin \arg, \quad \quad
    c_3 =  2 \lambda_2 \sin \arg  + 2\lambda_3 \cos \arg.
    \end{align*}
    In terms of the new variables, equations \eqref{eq1}-\eqref{eq2} become
    (with obvious meaning for $||\lambda||^2$)
    \begin{align*}
      \ca \cb - ||\lambda||^2 =0, \quad \quad
      16 \ca^2 + Q^2 \cb^2 + 8 Q \left (\lambda_2^2 -\lambda_3^2 \right ) -16 = 0,
      \quad \quad \ca,\cb \geq 0.
    \end{align*}
    When $Q=0$, the solution is clearly $\ca=1, \cb = ||\lambda||^2$, with
    unrestricted $\lambda_2,\lambda_3$. When $Q >0$, we may multiply the first equation by $Q$ and
    find the equivalent problem
    \begin{align*}
      (4\ca + Q \cb)^2= 16 ( 1 + Q \lambda_3^2), \quad \quad 
      (4 \ca - Q \cb)^2 =16 ( 1 - Q \lambda_2^2), \quad \quad
      \ca, \cb \geq 0.
    \end{align*}
    This system is solvable if and only if
    \begin{align}
      \label{ineq}
      |\lambda_2| \leq \frac{1}{\sqrt{Q}}
    \end{align}
    and the solution is given by
        \begin{align}
      \ca = \frac{1}{2} \left ( \sqrt{1 + Q \lambda_3^2}
      + \epsilon \sqrt{1 - Q \lambda_2^2} \right ) , \quad 
\cb = \frac{2}{Q} \left ( \sqrt{1 + Q \lambda_3^2}
      - \epsilon \sqrt{1 - Q \lambda_2^2} \right ), \label{solab}
      \end{align}
      where  $\epsilon = \pm 1$. Observe that the branches $\epsilon=1$ and
      $\epsilon=-1$ are connected to each other across the set
      $|\lambda_2 | = 1/\sqrt{Q}$. Note also that the case $Q=0$ is
      included as a limit $Q \rightarrow 0$ in  the branch  $\epsilon=1$
      (and then the bound \eqref{ineq} becomes vacuous, in accordance with the unrestricted values of  $\{\lambda_2,\lambda_3\}$ when $Q=0$). We can now
      write down explicitly the vectors $\ell',k',e_2'$ defined in \eqref{defkp},
      \eqref{defe2p} and \eqref{defellp}. It is useful to introduce the two spacelike, orthogonal  and unit vectors
            \begin{align*}
        u_2 = \cos \arg \, e_2 + \sin \arg \,e_3, \quad \quad
        u_3 = - \sin \arg \, e_2 + \cos \arg  \, e_3 
      \end{align*}
      which simplify the expression to
      \begin{align*}
        \ell'= & \frac{Q^2}{16} \cb k + \ca \ell + \frac{Q}{2} \left ( -\lambda_2 u_2 + \lambda_3 u_3 \right ),  \\
        k' = & \ca k + \cb \ell + 2 \lambda_2 u_2 + 2 \lambda_3 u_3, \\
             e_2'  = &   \left ( \lambda_2 \cos\arg - \lambda_3 \sin\arg \right )
             \ell - \frac{Q}{4} \left ( \lambda_2 \cos\arg
             + \lambda_3 \sin\arg \right ) k  + \epsilon \cos\arg \sqrt{1 - Q \lambda_2^2} \,  u_2
             - \sin\arg \sqrt{1 + Q \lambda_3^2} \, u_3,
      \end{align*}
      where $\ca,\cb$ must be understood as given by \eqref{solab} (including the limiting case $Q=0$). The fourth vector $e_3'$ is unit and orthogonal
      to all of them. The following pair of vectors satisfy these properties (and of course there are no others).
            \begin{equation}\label{ep3}
            \begin{split}
             e_3' =  \widehat{\epsilon} & \left (
        \left ( \lambda_3 \cos \arg  + \lambda_2 \sin\arg  \right ) \ell +
        \frac{Q}{4}
                 \left ( \lambda_3 \cos\arg  - \lambda_2 \sin\arg  \right ) k  
        + \epsilon \sin\arg \sqrt{1 - Q \lambda_2^2} \,  u_2
        + \cos\arg \sqrt{1 + Q \lambda_3^2} \, u_3 \right )
            \end{split}
      \end{equation}
      where $\widehat{\epsilon} = \pm 1$. It
      is also straightforward to check that $F(e_3') = \widehat{\epsilon} (\bb/4) k'$. Thus, if $\bb \neq 0$, we must choose $\widehat{\epsilon}=1$
      while in the case $\bb =0$ both signs are possible (in accordance with
      Lemma \ref{lem}). Summarizing, the most general orthochronous
      Lorentz transformation that transforms a canonical semi-null basis of $F$ into another one is given by
              \begin{align*}
          \left ( \begin{array}{c}
                  \ell' \\
                  k' \\
                  e_2'\\
                  \widehat{\epsilon} e_3'
                \end{array}
                               \right )
                               = &
                               \left ( \begin{array}{cccc}
\frac{1}{2} \left ( \sqrt{1 + Q \lambda_3^2}
      + \epsilon \sqrt{1 - Q \lambda_2^2} \right ) 
                                         & 
                                           \frac{Q}{8} \left ( \sqrt{1 + Q \lambda_3^2}
      - \epsilon \sqrt{1 - Q \lambda_2^2} \right ) 
                                         &  -Q \lambda_2 /2 & Q \lambda_3/2  \\
                                                 \frac{2}{Q} \left ( \sqrt{1 + Q \lambda_3^2}
      - \epsilon \sqrt{1 - Q \lambda_2^2} \right )  &  \frac{1}{2} \left ( \sqrt{1 + Q \lambda_3^2}
      + \epsilon \sqrt{1 - Q \lambda_2^2} \right )  & 2 \lambda_2  & 2  \lambda_3 \\
           \lambda_2 \cos\arg - \lambda_3 \sin\arg & 
            - Q ( \lambda_2 \cos\arg
             + \lambda_3 \sin\arg )/4  
             &  \epsilon \cos\arg \sqrt{1- Q \lambda_2^2} &
             - \sin\arg \sqrt{1+ Q \lambda_3^2}   \\
 \lambda_3 \cos\arg + \lambda_2 \sin\arg & 
             Q ( \lambda_3 \cos\arg
             - \lambda_2 \sin\arg )/4 
             &  \epsilon \sin\arg \sqrt{1- Q \lambda_2^2} &
                                                            \cos \arg \sqrt{1+ Q \lambda_3^2}
                                       \end{array} \right ) \\
                                     &
                                                              \left (\begin{array}{cccc}
                                        1 & 0 & 0 & 0 \\
                                        0 & 1 & 0 & 0 \\
                                        0 & 0 & \cos\arg & \sin\arg \\
                                        0 & 0 & -\sin\arg & \cos\arg
                                      \end{array}
                                                                \right )
                                                                \left (
                                                                \begin{array}{c}
                                                                  \ell \\
                                                                  k \\
                                                                  e_2 \\
                                                                  e_3                                                                        
                                                                \end{array}
          \right )
          := {\cal T}_F(\lambda_2, \lambda_3, \epsilon)
  \left (                                                              \begin{array}{c}
                                                                  \ell \\
                                                                  k \\
                                                                  e_2 \\
                                                                  e_3                                                                        
                                                                \end{array}
          \right )
      \end{align*}
    where $\widehat{\epsilon} = 1$, unless  $\bb=0$ in which case $
    \widehat{\epsilon}= \pm 1$. Concerning the global structure of the group,
    recall that $\lambda_3$ takes any value in the real line, while
    $|\lambda_2| \leq 1/\sqrt{Q}$. We have already mentioned that as long as $Q \neq 0$, the two branches $\epsilon = \pm 1$ are connected to each other through the values $|\lambda_2|  = 1/\sqrt{Q}$. The topology of the group is therefore
    $\mathbb{R} \times \mathbb{S}^1$ (hence connected) when $Q \neq 0$ and $\bb \neq 0$. When    $Q \neq 0$, $\bb =0$ the group has two connected components (one corresponding to each value of $\widehat{\epsilon}$)
    each one with the topology of $\mathbb{R} \times \mathbb{S}^1$. Finally, when $Q=0$, the group has two connected components (again one for each value of
    $\widehat{\epsilon}$) and the topology of each component is
    $\mathbb{R}^2$.    By construction all elements of the group (in all cases)
    are orthochronous Lorentz transformations. Moreover, it is immediate to check that the determinant of  ${\mathcal T}_F(\lambda_2, \lambda_3, \epsilon)$ is one for all values of $\lambda_2, \lambda_3, \epsilon$. Thus, all elements with $\widehat{\epsilon} = 1$ preserve orientation, while
    the elements with $\widehat{\epsilon} = -1$ reverse orientation.

    \subsection{Invariance group in the three-dimensional case}

    We have found before that for any non-zero skew-symmetric endomorphism $F$
    in $\mathbb{M}^{1,2}$ there exists an orthonormal, future directed basis $B_3=\{e_0,e_1,e_2\}$ where $F$ takes the canonical form \eqref{canonFdim3}. As in the previous case it is natural to ask
    what is the group of invariance of $F$, i.e. the  most general orthochronous Lorentz transformation which transforms $B$ into a basis where $F$ takes the same form.    
    From $F$, recall the auxiliary skew-symmetric endomorphism
    $\widehat{F}$ defined on $\mathbb{M}^{1,2} \oplus \mathbb{E}_1$ that was introduced  before Corollary \ref{propcanonF3}, that is, the endomorphism that acts as $\widehat{F}(u + a e_3) = F(u) + 0$, for all
    $u \in \mathbb{M}^{1,2}$ and $a \in \mathbb{R}$ where $\mathbb{E}_1 = \spn{E_3}$, with $E_3$ unit.     
     Moreover, the basis $B := \{ e_0,e_1,e_2,e_3 = E_3\}$ is canonical for $\widehat{F}$ in the sense of \eqref{canonFdim4} and in addition $\bb =0$. It is clear that there exists
    a bijection between the set
    of orthonormal, future directed bases $B_3' = \{ e_0',e_1', e_2'\}$  where
    $F$ takes its canonical form and the set of future directed
    orthonormal bases $B'$ in $\mathbb{M}^{1,2} \oplus \mathbb{E}_1$ where
    $\widehat{F}$ takes its canonical form  and the last element of $B'$ is
    $E_3$. Thus, in order to determine the group of invariance of
    $F$ it suffices to study the subgroup  of invariance of $\widehat{F}$ which
    preserves the vector $e_3$. Since $\bb=0$ we must impose
    \begin{align*}
      B = Q \sin (2 \arg) =  2 Q \cos \arg \sin\arg =0
    \end{align*}
    and three separate cases arise: (case 1) when $Q\neq 0, \arg=0$, (case 2)
    when $Q=0$ and  (case 3) when $Q\neq 0, \arg = \pi/2$. Equivalently, cases 1, 2 and 3 correspond respectively to $\aa >0$, $\aa=0$ and
    $\aa<0$. Recall also that when $Q=0$ we may choose any value of
    $\arg \in [0,\pi/2]$ w.l.o.g. We choose $\arg=0$ in this case. Recall also that the case $Q=0$ is recovered as a limit $Q \rightarrow 0$ after setting $\epsilon=1$.

    We only need to impose the condition $e_3' = e_3$ in each case. Directly
    from \eqref{ep3} one finds (we also use that $Q = |\aa| $) 
    \begin{align*}
      e_3' & = \widehat{\epsilon} \left ( \lambda_3 \ell + \frac{|\cc|}{4} \lambda_3 k 
      + \sqrt{ 1 + |\cc| \lambda_3^2} \, e_3 \right )
      & & &  \mbox{Case 1} \\
      e_3' & = \widehat{\epsilon} \left ( \lambda_3 \ell + e_3 \right )
      & & & \mbox{Case 2} \\
      e_3' &= \widehat{\epsilon} \left (
      \lambda_2 \ell - \frac{|\cc|}{4} \lambda_2 k + \epsilon  \sqrt{1 - |\cc| \lambda_2^2}
      \right ) \, e_3, & & &  \mbox{Case 3}  
    \end{align*}
    Thus,  cases $1$ and $2$ require $\widehat{\epsilon}=1, \lambda_3 =0$  and
    in case 3 we must set $\widehat{\epsilon} = \epsilon, \lambda_2=0$. Inserting these values in the group of invariance of $\hat{F}$ one finds the most
    general orthochronous Lorentz transformation that preserves the form
    of $F$. We express the result in the canonically associated semi-null
    bases
  %, which we associate canonically to orthonormal, future directed bases by means of} 
%    \begin{align*}
$      \ell = e_0 + e_1, k= e_0 - e_1,  e_2 = e_2$.
%    \end{align*}
    Renaming
    $\lambda_2,\lambda_3$ as $\lambda$, the three cases can be written in the following form
        \begin{align*}
      \left ( \begin{array}{c}
        \ell' \\
                k' \\
        e_2'
      \end{array}
      \right ) & = \left (
      \begin{array}{ccc}
     \frac{1}{2} \left ( 1 + \epsilon \sqrt{1- |\cc| \lambda^2} \right ) & \frac{|\cc|}{8} \left ( 1 - \epsilon \sqrt{1 - |\cc| \lambda^2} \right ) & - \frac{|\cc| \lambda}{2} \\
     \frac{2}{|\cc|} \left ( 1 - \epsilon \sqrt{1 - |\cc| \lambda^2} \right ) & \frac{1}{2} \left ( 1+ \epsilon \sqrt{1 - |\cc| \lambda^2} \right )
     &2 \lambda \\
     \lambda & - \frac{|\cc| \lambda}{4} & \epsilon \sqrt{1 - |\cc| \lambda^2}
      \end{array}
      \right ) \left (
      \begin{array}{c}
        \ell \\
        k \\
        e_2
      \end{array}
      \right ) \quad \cc \geq 0 \\
        \left ( \begin{array}{c}
        \ell' \\
                k' \\
        e_2'
      \end{array}
      \right ) & = \left (
      \begin{array}{ccc}
     \frac{1}{2} \left ( \epsilon + \sqrt{1 + |\cc| \lambda^2} \right ) & \frac{|\cc|}{8} \left ( \sqrt{1 + |\cc| \lambda^2} - \epsilon\right ) & - \frac{|\cc| \lambda}{2} \\
     \frac{2}{|\cc|} \left ( \sqrt{1 + |\cc| \lambda^2} - \epsilon \right ) & \frac{1}{2} \left ( \sqrt{1 + |\cc| \lambda^2} + \epsilon \right ) & 
     - 2 \lambda \\
     - \lambda & - \frac{|\cc| \lambda}{4} &  \sqrt{1 + |\cc| \lambda^2}
      \end{array}
      \right ) \left (
      \begin{array}{c}
        \ell \\
        k \\
        e_2
      \end{array}
      \right ) \quad \cc < 0 
            \end{align*}
    with the understanding that the case $\cc=0$  is obtained from the first expression by setting $\epsilon = 1$ and then performing the limit
    $\cc \rightarrow 0$.

    When $\cc > 0$, the parameter $\lambda$ is restricted
    to $|\lambda | \leq 1/|\cc|$ and the two branches $\epsilon=1$ and
    $\epsilon = -1$ are connected through $ |\lambda| = |\cc|$. The group
    is  connected and  has topology $\mathbb{S}^1$. As an immediate consequence all the elements in the group are not only orthochronous Lorentz transformations (by construction) but also orientation preserving, as they are all connected to the identity. This can also be checked by computing the
    determinant of its matrix representation, which is one irrespectively of
    the value of $\lambda$ and $\epsilon$. When $\cc=0$ the parameter $\lambda$ takes values in the real line and the group has $\mathbb{R}$-topology. Again all its elements are orientation preserving. In fact, in this case the group is simply the set of null rotations preserving $\ell$.
    Finally, in the case $\cc <0$,  $\lambda$ also takes values in the
    real line and the group has two connected components (corresponding to the two values of $\epsilon$). Each component has topology $\mathbb{R}$. The determinant of the matrix representation is now $\epsilon$, so the Lorentz transformations  with $\epsilon=1$ preserve orientation (and define the connected component to the identity) while $\epsilon = -1$ reverse orientation.

    \section{Generators of the invariance group}\label{secgeninvg}

    Returning to the four dimensional case, the identity element $\e$ of the
    group of invariance corresponds to $\lambda_2=\lambda_3=0$ and
    $\epsilon= \widehat{\epsilon}=1$. We may compute the Lie algebra that
    generates it by taking derivatives of the group transformation with respect to $\lambda_2$ and $\lambda_3$ respectively and evaluating at $\e$. This defines two skew-symmetric endomorphisms
    \begin{align*}
      h_2 := \left . \frac{\partial {\mathcal T}_F(\lambda_2,\lambda_3,\epsilon)}{\partial \lambda_2} \right |_{\e}, \quad \quad
      h_3 := \left . \frac{\partial {\mathcal T}_F(\lambda_2,\lambda_3,\epsilon)}{\partial \lambda_3} \right |_{\e}.   
    \end{align*}
 It is immediate to obtain their explicit expression
    \begin{align*}
       \left (
      \begin{array}{l}
        h_2(\ell) \\
        h_2(k) \\
        h_2(e_2) \\
        h_2(e_3)  \end{array}
      \right ) =
      \left ( \begin{array}{cccc}
                0 & 0 & -\frac{Q}{2}  \cos \arg & - \frac{Q}{2} \sin \arg \\
                0 & 0 & 2 \cos \arg & 2 \sin \arg \\
                \cos \arg & - \frac{Q}{4} \cos \arg & 0 & 0 \\
                \sin \arg & - \frac{Q}{4} \sin \arg & 0 & 0
              \end{array}
                                                              \right )
                                                                \left (
      \begin{array}{l}
        \ell \\
        k \\
        e_2 \\
        e_3  \end{array}
      \right ), \\
 \left (
      \begin{array}{l}      
              h_3(\ell) \\
        h_3(k) \\
        h_3(e_2) \\
        h_3(e_3)  \end{array}
      \right ) =
      \left ( \begin{array}{cccc}
                0 & 0 & -\frac{Q}{2} \sin \arg & \frac{Q}{2}  \cos \arg \\
                0 & 0 & -2 \sin \arg & 2 \cos \arg \\
                -\sin \arg & - \frac{Q}{4} \sin \arg & 0 & 0 \\
                \cos \arg & \frac{Q}{4} \cos \arg & 0 & 0
              \end{array}
                                                              \right )
                                                                \left (
      \begin{array}{l}
        \ell \\
        k \\
        e_2 \\
        e_3  \end{array}
      \right ).
    \end{align*}
    Note that any skew-symmetric endomorphism $G$ that commutes with $F$ generates a one-parameter subgroup of Lorentz transformations that leaves the form of $F$ invariant. It follows that this uniparametric
    group is necessarily a subgroup of the full invariance group of $F$.  Hence $G$ must belong to the Lie algebra generated by $h_2$ and $h_3$. Conversely, $h_2, h_3$ (and any linear combination thereof) defines
    a skew-symmetric endomorphism that commutes with $F$.    In other words,  ${\mathcal C}_F := \mbox{span} \{ h_2, h_3 \}$ defines the Lie subalgebra of $so(1,3)$ formed by the elements that commute with $F$. This Lie subalgebra is called the {\it centralizer} of $F$ (e.g. \cite{knapp}) and, as we have just shown, it is two-dimensional for any non-zero $F$. An easy computation shows that $[h_2 , h_3]  =0$, so the centralizer of $F$ is an Abelian Lie algebra. With these properties, it is not difficult to obtain the exponentiated form of the group elements. Define the two $C^1$ functions $t_{\epsilon}(s), t_3(s)$ (prime denotes derivative with respect to $s$)
    \begin{align*}
      t_{\epsilon}^{\prime} & = \epsilon \sqrt{1 - Q t_\epsilon^2}, \quad  \quad \quad \quad t_{\epsilon}(s=0) = 0, \\
      t_3^{\prime} & = \sqrt{1 + Q t_3^2}, \quad  \quad  \quad \quad  t_3(s=0) = 0,
      \end{align*}
      and set
            \begin{align*}
        {\mathcal T}_{\epsilon} (s) :=
        \left (
        \begin{array}{cccc}
\frac{1}{2} \left ( 1 + t_{\epsilon}' \right ) 
                                         & 
                                           \frac{Q}{8} \left ( 1 - t_{\epsilon}' \right )
                                         &  - \frac{Q}{2} \cos\arg  t_{\epsilon} & - \frac{Q}{2} \sin\arg  t_{\epsilon}
                                         \\
                                         \frac{2}{Q} \left ( 1 - t_{\epsilon}' \right )  &  \frac{1}{2} \left ( 1 + t_{\epsilon}' \right )  & 2 \cos\arg  t_{\epsilon} & 2 \sin\arg  t_{\epsilon} \\
          \cos\arg  t_{\epsilon}   & 
            - \frac{Q}{4}\cos\arg  t_{\epsilon} 
                          &  \cos^2\arg t_{\epsilon}' + \sin^2 \arg &
             \sin\arg \cos\arg \left ( t_{\epsilon}'-1 \right )    \\
 \sin\arg  t_{\epsilon} & 
             - \frac{Q}{4} \sin\arg  t_{\epsilon}
             &  \sin\arg \cos\arg \left ( t_{\epsilon}^{\prime} - 1 \right )  & \sin^2\arg t_{\epsilon}^{\prime} + \cos^2 \arg
        \end{array} \right ) \\
        {\mathcal T}_{3} (s) := \left ( \begin{array}{cccc}
\frac{1}{2} \left ( 1 + t_3^{\prime} \right ) 
                                         & 
                                           \frac{Q}{8} \left ( t_{3}^{\prime} -1 \right ) 
                                          &  - \frac{Q}{2} \sin\arg t_{3}
                                          & \frac{Q}{2} \cos\arg t_3
                                             \\
                                                 \frac{2}{Q} \left ( t_3^{\prime} - 1  \right )  &  \frac{1}{2} \left ( 1 + t_3^{\prime} \right )  & - 2 \sin\arg t_3  & 2  \cos\arg t_3  \\
           - \sin\arg t_3 & 
            - \frac{Q}{4} \sin\arg t_3 
                                          &  \cos^2 \arg +t^{\prime}_3 \sin^2 \arg  &    \sin\arg \cos\arg \left ( 1 - t_3^{\prime} \right ) \\    
 \cos\arg t_3  & 
             \frac{Q}{4} \cos\arg t_3 
                                          &  \sin\arg \cos\arg \left (1 - t_3^{\prime} \right ) & \sin^2\arg+ \cos^2\arg t_3^{\prime} 
                                        \end{array} \right )
      \end{align*}
      (in the right-hand sides $t_{\epsilon}, t_{\epsilon}'$ etc. are to be understood evaluated at $s$).  By direct computation one checks that  ($\mbox{Id}$ stands for the $4\times 4$ identity matrix)
      \begin{align*}
&        \frac{d {\mathcal T}_{\epsilon}}{d s}  = h_2 {\mathcal T}_{\epsilon}, \quad \quad {\mathcal T}_{\epsilon=1} (s =0 ) = \mbox{Id}, \\
 &       \frac{d {\mathcal T}_{3}}{d s}  = h_3 {\mathcal T}_{3},
                                         \quad \quad {\mathcal T}_{3} (s=0) = \mbox{Id}, \\
&         \left . {\mathcal T}_{F} (\lambda_2, \lambda_3 , \epsilon) \right |_{\lambda_2 = t_{\epsilon}(s_1), \lambda_3 = t_{3} (s_2)}                                                                   = {\mathcal T}_{\epsilon} (s_1) {\mathcal T}_{3} (s_2)
                                                                  = {\mathcal T}_{3} (s_2) {\mathcal T}_{\epsilon} (s_1).
      \end{align*}
      This shows in particular that ${\mathcal T}_{\epsilon=1}(s) = \mbox{exp}( s h_2)$ and
      ${\mathcal T}_{3} (s) = \mbox{exp} (s h_3)$. Observe also that (in agreement with a previous discussion), when $Q \neq 0$ the branch
      ${\mathcal T}_{\epsilon=-1}$ is connected to the branch 
      ${\mathcal T}_{\epsilon=1}$ because in this case
      \begin{align*}
        t_{\epsilon=1} (s) & = \frac{\sin (\sqrt{Q} s)}{\sqrt{Q}}, \quad
                                       && s \in \left [- \frac{\pi}{2 \sqrt{Q}}, \frac{\pi}{2 \sqrt{Q}} \right ], \\
        t_{\epsilon=-1} (s) & = -\frac{\sin (\sqrt{Q} s)}{\sqrt{Q}}, \quad
                                        &&  s \in \left [- \frac{\pi}{2 \sqrt{Q}}, \frac{\pi}{2\sqrt{Q}} \right ],
      \end{align*}
      so that $s = \pm \pi/(2 \sqrt{Q})$ in the first branch is smoothly
      connected to $s = \mp \pi/(2 \sqrt{Q})$ in the second branch.

         From the matrix representation of $h_2$ and $h_3$ it is obvious (the last two columns are linearly dependent) that $\mbox{det} (h_2)= \mbox{det}(h_3)=0$
so both $h_2, h_3$ are simple, i.e. of matrix rank two. Moreover,
    \begin{align}
      - \mbox{tr} \left ( h_2 \circ h_2  \right ) =  \mbox{tr} \left ( h_3 \circ h_3 \right ) = 2 Q \label{traceh2h3}
    \end{align}
    and $\mbox{tr} \left ( h_2 \circ h_3
      \right ) =0$.    Given that $F$ commutes with itself, i.e. $F \in {\mathcal C}_F$, it must be a linear combination of $h_2$ and $h_3$. Indeed, it is immediate to check that
    \begin{align}
      F = - \cos \arg h_2 + \sin  \arg h_3. \label{lincomb}
    \end{align}
    This expression suggests that the connection between $F$ and the basis $\{h_2,h_3\}$ is via a duality
    rotation. To show that this is indeed the case, we define the one-forms $\{ \bm{\ell}, \bm{k},
    \bm{e_2}, \bm{e_3}\}$ metrically associated to the semi-null basis $\{ \ell, k ,e_2, e_3\}$. Also, for any skew-symmetric endomorphism $F$, we associate the two-form $\bm{F}$ by the standard relation \eqref{Fflat}.
%     \begin{align*}
%       \bm{G} (u,v) := \la u, G(v) \ra, \quad \quad \forall u,v \in \mathbb{M}^{1,3}.
%     \end{align*}
    It is straightforward to find the explicit forms of $\bm{h_2}$ and $\bm{h_3}$ to be\footnote{Our convention for the exterior product is $\bm{u} \wedge \bm{v} := \bm{u} \otimes \bm{v} -
      \bm{v} \otimes  \bm{u}$.}
    \begin{align}
      \bm{h_2} &= \left ( \bm{\ell} - \frac{Q}{4} \bm{k} \right ) \wedge \left (
\cos \arg \bm{e_2} + \sin \arg \bm{e_3} \right ), \label{h2} \\
      \bm{h_3} &= \left ( \bm{\ell} + \frac{Q}{4} \bm{k} \right ) \wedge \left (
      -\sin \arg \bm{e_2} + \cos \arg \bm{e_3} \right ). \nonumber
    \end{align}
    Duality  rotations of a two-form are defined in terms of the Hodge-dual operator, which in turn depends in a choice of orientation in the vector space. To keep the comparison fully general, we let $\kappa = +1$ ($\kappa = -1$) when the orientation in $\mathbb{M}^{1,3}$ is such that the basis $\{ \ell,k, e_2, e_3\}$ is positively (negatively) oriented. Equivalently, if $\bm{\eta}$ is the
    volume form  that defines the orientation, $\kappa$ is given by
    \begin{align}
      \bm{\eta} (\ell,k,e_2,e_3) = 2 \kappa. \label{orient}
    \end{align}
    Let $\bm{G}^{\star}$ denote the Hodge dual associated to $\bm{G}$\footnote{In abstract
      index notation $\bm{G}^{\star}_{\alpha\beta} = \frac{1}{2} \eta_{\alpha\beta\mu\nu} {\bm G}^{\mu\nu}$.}. It is then immediate to check that
    \begin{align*}
      \bm{h_2}^{\star} = \kappa \bm{h_3}.
    \end{align*}
    Defining $\bm{f} := -\bm{h_2}$ and $\mu := -\kappa \arg$, we may rewrite \eqref{lincomb} as
    \begin{align}
      \bm{F} = \cos \mu \bm{f} + \sin \mu \bm{f}^{\star}
      \label{dualrot}
    \end{align}
    which indeed shows that $\bm{F}$ is obtained from the simple form $\bm{f}$ by a duality rotation
    of angle $\mu$. Notice that $f_{\alpha\beta} f^{\alpha\beta} = 2 Q \geq 0$ (by
    \eqref{traceh2h3}). For later use, we observe that
    the most general linear combination $\bm{f} =
    a_0 \bm{h_2} + b_0 \bm{h_3}$ that defines a  simple $2$-form such that
    $f_{\alpha\beta} f^{\alpha\beta} \geq 0$ and \eqref{dualrot} holds for some value of
    $\mu$ is:
    \begin{align}
      Q=0: &\quad \quad \bm{f} = - \cos(\arg+ \kappa \mu) \bm{h_2}
             + \sin ( \arg + \kappa \mu) \bm{h_3},
             \quad \quad & & \mu \in \mathbb{R} \nonumber \\
             Q > 0: & \quad \quad \bm{f} = -\cos(n \pi)  \bm{h_2}, \quad \mu = - \kappa \arg + n \pi, \quad \quad & & n \in \mathbb{N}. \label{Qneq0}                        \end{align}
                        This can be proved easily from the explicit expressions of  $\bm{h_2}, \bm{h_3}$ and the fact that they are linearly independent simple $2$-forms.

                        One may wonder whether this connection with duality rotations could have been used as the starting point to obtain in an easy and natural way the canonical form of $F$. We will argue that this alternative approach, although possible, it is far from obvious and cannot be regarded as natural.

    We fix a skew-symmetric endomorphism $F$ in a a four-dimensional vector space with a Lorentzian metric, and let $\bm{F}$ be the metrically associated $2$-form. Define as before $\aa :=
    - \frac{1}{2} \mbox{trace} \left ( F^2 \right )$ and $\bb^2 = - 4 \det(F)$,
    $\bb >0$ where the determinant is taken for any matrix representation of $F$ in an orthonormal basis. The invariants $\aa$ and $\bb$ are directly related to the two algebraic invariants of $\bm{F}$ as
    \begin{align}
      \aa = \frac{1}{2} F_{\alpha\beta} F^{\alpha\beta}, \quad \quad \bb = \frac{1}{2} \mbox{abs}\left (  F_{\alpha\beta} F^{\star}{}^{\alpha\beta} \right ).
      \label{invs}
    \end{align}
    The first one follows trivially from the definition of $\aa$. The second is a well-known algebraic identity that can be found e.g. in \cite{LichnerowiczTheoRelGravEM}. Given $F$, a duality rotation of angle $-\mu$ defines the $2$-form $\Fmu$ as \cite{Rainich1924},
    \cite{MisnerWheeler1957},
    \begin{align}
      \Fmu := \cos \mu \bm{F} - \sin \mu \bm{F}^*. \label{Duality}
    \end{align}
   A simple computation shows that $\Fmu$ is simple  (i.e. $\Fmuind{}_{\alpha\beta} \Fmuind{}^{\star}{}^{\alpha\beta} =0$) and
    satisfies $\Fmuind{}_{\alpha\beta} \Fmuind{}^{\alpha\beta} \geq 0$
    if and only if (cf. \cite{MisnerWheeler1957})
    \begin{align}
      \aa \sin (2\mu) + \widehat{\kappa} \bb \cos(2 \mu)  & =0, \nonumber \\
      \aa \cos (2 \mu) - \widehat{\kappa} \bb \sin (2\mu) & \geq 0, \label{conds}
          \end{align}
    where $\widehat{\kappa}$ is the sign defined by $\frac{1}{2} F_{\alpha\beta} F^{\star} {}^{\alpha\beta}
    = \widehat{\kappa} \bb$ (when $\bb=0$, $\widehat{\kappa}$ can take any value
    $\widehat{\kappa} = \pm 1$). Inserting \eqref{Qsigma} we find that whenever $Q=0$ all values of $\mu$ solve \eqref{conds} (which reflects the fact that $\bm{F}$ is null, and so are all its duality rotated $2$-forms). When $Q \neq 0$, the solutions of \eqref{conds} are $\mu = - \widehat{\kappa} \arg + n \pi$, $n \in \mathbb{N}$.  Thus, we recover the expression in \eqref{Qneq0} provided we can ensure that $\widehat{\kappa} = \kappa$. Note that the sign of $F_{\alpha\beta} F^{\star}{}^{\alpha\beta}$ only depends on $F$ and the choice of orientation. It is a matter of direct checking that $F$ as given in \eqref{null2} with the choice of orientation where \eqref{orient} holds  satisfies $F_{\alpha\beta} F^{\star}{}^{\alpha\beta} = 2 \kappa \bb$, so that indeed $\widehat{\kappa} = \kappa$ follows (unless $\bb=0$, of course, in which  case $\widehat{\kappa} = \pm 1$).

    We can now show how the canonical basis can be constructed from $F$ using a duality rotation approach. Fixed an orientation on the vector space (i.e. a choice of volume form \bm{\eta}, and its associated Hodge dual) define $\aa$ and
    $\bb$ as in \eqref{invs}. Let $\widehat{\kappa} \in \{ -1,1\}$ be such that  $2 \widehat{\kappa} = F_{\alpha\beta} F^{\star}{}^{\alpha\beta}$ (if $\bb=0$, we allow any sign for
    $\widehat{\kappa}$). Introduce  $\arg$ so that \eqref{Qsigma} holds with
    $\arg \in [0, \pi/2]$ (if $\aa=\bb=0$ then $\arg$ can take any value in this interval). Define then $\mu = - \widehat{\kappa} \arg$ and construct
    $\Fmu$ by \eqref{Duality}. We let $\bm{h_2} := - \Fmu$. Since this $2$-form is simple, there exist two linearly independent vectors $a, b$ such that $\bm{h_2} = \bm{a} \wedge \bm{b}$. These vectors are obviously not unique, but certainly at least one of them must be spacelike. It can also be taken unit. We let $E_2 := b$ have this property. Exploiting the freedom $a \rightarrow a + s E_2$, $s\in \mathbb{R}$ we may take $a$ perpendicular to $E_2$. By construction $(h_2){}_{\alpha\beta} (h_2)^{\alpha\beta}  \geq 0$ (recall \eqref{conds}) which is equivalent to $\la a, a \ra \geq 0$, i.e. $a$ is spacelike or null. Let $Q \geq 0$ be defined by $Q = \la a, a \ra$. It is clear that there exists a timelike plane $\Pi$ containing $a$ and orthogonal
    to $E_2$ (this plane is obviously non-unique). Fixed $\Pi$, it is easy to show that
    there exists a future directed a null basis  $\{ \ell, k\}$ on $\Pi$ satisfying  $\la \ell, k \ra = -2$ and  such that $\bm{a} = \bm{\ell} - (1/4) Q  \bm{k}$. Finally, consider the timelike hyperplane
    defined by $\mbox{span} \{ \ell, k, E_2 \}$ and select the unique unit normal $E_3$ to this hyperplane satisfying the orientation requirement (cf. \eqref{orient})
    \begin{align*}
      \eta(\ell,k, E_2, E_3 ) = 2 \widehat{\kappa}.
    \end{align*}
    So far, from a non-zero $F$  we have constructed 
      a (collection of) semi-null basis $\{\ell,k,E_2,E_3\}$ in quite a natural way. Observe that when $\aa=\bb=0$, the angle $\arg$ is arbitrary, so the semi-null basis has extra additional freedom in this case. What appears to be hard to guess from this
    construction is that instead of $\{E_2, E_3\}$ we should introduce $\{e_2, e_3\}$ by means of  the   $\arg$-dependent rotation (cf. \eqref{h2}) 
    \begin{align}
      E_2 = \cos \arg e_2 + \sin \arg e_3, \quad \quad
      E_3 = - \sin \arg e_2 + \cos \arg e_3. \label{rot}
    \end{align}
    It is by using this transformation that the form of $F$ in the basis
    $\{ \ell,k,e_2, e_3\}$ takes a form that depends \underline{only}
    on the invariants $\aa, \bb$. It is remarkable that the
    $\arg$-freedom inherent to the case $\aa= \bb=0$ (i.e. when $F$ is null) drops out after performing the rotation \eqref{rot}, and we get
    a canonical form  that covers all cases and depends only on $\aa$ and $\bb$, irrespectively of which values these invariants may take.

\section{Global conformal Killing vectors on the plane}\label{secGCKVs}

In the following sections we connect our previous results with the Lie algebra of conformal Killing vector fields of the sphere and the group of motions they generate, i.e. the  M\"obius group. In our analysis, it is useful to employ the Riemann sphere $\mathbb{C} \cup \lrbrace{\infty}$. Although we will rederive some of the results we need here, we refer the reader to \cite{needham1998visual} and \cite{schwerdtgeomCnum} for more details about the M\"obius transformations on the Riemann sphere. Some of the contents may also be found in other more general references such as \cite{PenroseRindVol1} and \cite{IntroCFTschBook}.  Regarding Lie groups and Lie algebras, most of the results we will employ can be found in introductory level textbooks such as \cite{hall2003lie}, but other references \cite{goodmanwallach}, \cite{knapp} are also appropriate.

Consider the euclidean plane $\mathbb{E}^2 = (\mathbb{R}^2, g_E)$ and select Cartesian
coordinates $\{ x,y\}$. It is well-known that the set of conformal Killing vectors (CKV) on $\mathbb{E}^2$ is given by 
\begin{align*}
  \xi = U(x,y) \partial_x + V(x,y) \partial_y
\end{align*}
where $U,V$ satisfy the Cauchy-Riemann conditions
$\partial_x U = \partial_y V$, $\partial_y U = - \partial_x U$. These vector fields satisfy
\begin{align*}
  \pounds_{\xi} g_E = 2 \left ( \partial_x U + \partial_y V \right ) g_E.
\end{align*}
Consider the one-point compactification of $\mathbb{E}^2$ into the Riemann sphere $\mathbb{S}^2$. It is also standard that that set of conformal Killing vectors
that extend smoothly to $\mathbb{S}^2$ is given by the subset of CKV for which $U$ and $V$ are polynomials of degree at most two. We name them {\it global
  conformal Killing vectors} (GCKV). Thus, the set of GCKV is parametrized by six real constants $\{ b_x, b_y, \nu, \omega, a_x, a_y \}$  and take the form
\begin{align}
  \xi & =
  \left ( b_x  + \nu x - \omega y + \frac{1}{2} a_x \left ( x^2 - y^2
  \right ) + a_y xy \right ) \partial_x
  + \left ( b_y  + \nu y + \omega x + \frac{1}{2} a_y \left ( y^2 - x^2
  \right ) + a_x xy \right ) \partial_x \\
      & = \xi(a_x,b_x, \nu, \omega, b_x, b_y )
        \label{xiexp}
\end{align}
It is clear that the use of complex coordinates is advantageous in this
context. For reasons that will be clear later, it is convenient for us to
introduce the complex
coordinate $\z = \frac{1}{2} (x- i y )$. In terms of $\z$, the set of CKV is given by $\xi  = f \partial_\z+ \overline{f} \partial_{\overline{\z}}$ (recall that  bar denotes complex conjugation) where $f$ is a holomorphic function of $\z$, while $U,V$ are defined by $2 f =  U - i V$. The set of GCKV is parametrized by three complex constants
$\{ \mu_0, \mu_1, \mu_2 \}$ as
\begin{align}
  \xi = \left ( \mu_0 + \mu_1 \z + \frac{1}{2} \mu_2 \z^2  \right ) \partial_\z
  + \left ( \overline{\mu_0} + \overline{\mu_1} \overline{\z} + \frac{1}{2} \overline{\mu_2} \overline{\z}^2  \right ) \partial_{\overline{\z}}.
  \label{muform}
\end{align}
The relationship between the two sets of parameters is immediately checked to be (we emphasize that  this specific form depends on our choice of complex coordinate $\z$)
\begin{align}
  \mu_0 = \frac{1}{2} \left (b_x - i b_y \right ), \quad
  \mu_1 = \nu - i \omega, \quad \mu_2 = 2 \left ( a_x + i a_y \right ).
  \label{mudata}
\end{align}
We  denote the GCKV with parameters $\mu := (\mu_0,\mu_1, \mu_2)$ as $\xi_{\{ \mu\} }$. We shall need the following lemma concerning orthogonal and commuting GCKV. The result should be known but we did not find an appropriate reference.
\begin{lemma}
  \label{orto}
  Let $\xi_{\{\mu\}}, \xi_{\{\sigma\}}$ be global conformal Killing vector fields on
    $\mathbb{E}^2$ with corresponding parameters
    $\mu = \{  \mu_0, \mu_1, \mu_2 \},
  \sigma = \{  \sigma_0, \sigma_1, \sigma_2 \}$. Assume that
  $\xi_{\{ \mu\}}$ is not the zero vector field. Then
  \begin{enumerate}
    \item 
    $\xi_{\{\sigma\}}$ is everywhere
    perpendicular to $\xi_{\{\mu\}}$
    if and only if
    $ \sigma = i \, r \, \mu$ with $r \in \mathbb{R}$.
  \item $\xi_{\{\sigma\}}$ commutes with $\xi_{\{\mu\}}$ if and only if
    $ \sigma = c \mu$ with $c \in \mathbb{C}$.
    \end{enumerate}
    Moreover,
    $\xi_{c \mu}$ has Euclidean  norm
\begin{align*}
  g_E (\xi_{ \{ c  \mu \}}, \xi_{ \{ c  \mu \}} ) |_p =
  |c|^2  g_E (\xi_{ \{ \mu \}}, \xi_{ \{ \mu \}} ) |_p, \quad \quad \forall p \in \mathbb{E}^2.
  \end{align*}
\end{lemma}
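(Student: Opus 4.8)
The plan is to reduce all three statements to elementary complex analysis by working with the holomorphic generator of each GCKV. Writing $f_\mu(\z) := \mu_0 + \mu_1 \z + \frac{1}{2} \mu_2 \z^2$, so that $\xi_{\{\mu\}} = f_\mu \partial_\z + \overline{f_\mu}\,\partial_{\overline{\z}}$, I would first record the inner product formula. Since $\z = \frac{1}{2}(x-iy)$ gives $g_E = 4\, d\z\, d\overline{\z}$ (equivalently $g_E(\partial_\z, \partial_{\overline{\z}}) = 2$ and $g_E(\partial_\z,\partial_\z) = g_E(\partial_{\overline{\z}}, \partial_{\overline{\z}}) = 0$), a direct expansion yields
\begin{align*}
 g_E(\xi_{\{\mu\}}, \xi_{\{\sigma\}}) = 4\, \Re\!\left( f_\mu \,\overline{f_\sigma}\right),
\end{align*}
valid pointwise on $\mathbb{E}^2 \cong \mathbb{C}$. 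The norm statement is then immediate: the assignment $\mu \mapsto f_\mu$ is $\mathbb{C}$-linear, so $f_{c\mu} = c f_\mu$ and $g_E(\xi_{\{c\mu\}}, \xi_{\{c\mu\}}) = 4|c|^2|f_\mu|^2 = |c|^2 g_E(\xi_{\{\mu\}}, \xi_{\{\mu\}})$ at every point, the precise value of the metric constant being irrelevant.

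For the orthogonality claim (item 1), the condition is that $\Re(f_\mu\overline{f_\sigma}) \equiv 0$ on $\mathbb{C}$. On the open connected set $\Omega = \mathbb{C}\setminus Z$, where $Z$ is the (finite, since $\xi_{\{\mu\}}\neq 0$) zero set of $f_\mu$, I would divide by $|f_\mu|^2 > 0$ and rewrite this as the statement that the holomorphic function $f_\sigma/f_\mu$ takes purely imaginary values throughout $\Omega$. A holomorphic function whose image lies in the line $i\mathbb{R}$ (which has empty interior) must be constant by the open mapping theorem; hence $f_\sigma/f_\mu = i r$ for a single $r \in \mathbb{R}$, and $f_\sigma = i r f_\mu$ extends to all of $\mathbb{C}$ by the identity theorem. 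Comparing coefficients gives $\sigma = i r \mu$, and the converse is an immediate substitution. The only point where care is needed is the passage from ``purely imaginary on a punctured plane'' to ``globally constant'', which is exactly where the finiteness of $Z$ and the connectedness of $\Omega$ enter.

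For the commutation claim (item 2), I would compute the bracket directly in the complex frame. Using that $f_\mu, f_\sigma$ are holomorphic (so $\partial_{\overline{\z}} f_\mu = 0$, and $\partial_\z \overline{f_\mu} = 0$, etc.), the mixed brackets drop out and one finds
\begin{align*}
 [\xi_{\{\mu\}}, \xi_{\{\sigma\}}] = \left( f_\mu f_\sigma' - f_\sigma f_\mu'\right)\partial_\z + \overline{\left( f_\mu f_\sigma' - f_\sigma f_\mu'\right)}\,\partial_{\overline{\z}},
\end{align*}
so the two fields commute if and only if the Wronskian $f_\mu f_\sigma' - f_\sigma f_\mu'$ vanishes identically. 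On the same $\Omega$ this is precisely $(f_\sigma/f_\mu)' = 0$, whence $f_\sigma/f_\mu$ is a constant $c \in \mathbb{C}$ and $\sigma = c\mu$ by the identity theorem and coefficient comparison; the converse again follows by substitution.

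I expect the main obstacle to be expository rather than mathematical: one must treat the zeros of $f_\mu$ cleanly so that the ``holomorphic with image in a line (resp.\ with vanishing derivative), hence constant'' arguments apply on a genuine connected open domain, and then invoke the identity theorem to return to the polynomial identities $f_\sigma = ir f_\mu$ and $f_\sigma = c f_\mu$. A brute-force alternative comparing the nine coefficients of $f_\mu\overline{f_\sigma}$ (resp.\ of the Wronskian) would also close the argument, but it is considerably messier and obscures the uniform treatment of all the degenerate configurations of $\mu$ (e.g.\ $f_\mu$ constant, linear, or of genuine degree two); I would therefore favour the complex-analytic route.
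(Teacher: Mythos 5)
Your proof is correct, but it takes a genuinely different route from the paper's. The paper treats both items purely algebraically: it expands the orthogonality condition $f_\mu \overline{f_\sigma} + \overline{f_\mu} f_\sigma = 0$ as a polynomial identity in $z,\overline{z}$, sets all coefficients to zero, and solves the resulting system via a cross-product observation, namely that $(q_0,q_1,q_2) \times (|\mu_0|^2,|\mu_1|^2,|\mu_2|^2) = 0$ with $\mu_a\overline{\sigma_a} = i q_a$, followed by a case analysis on whether each $\mu_a$ vanishes; likewise for the bracket, it equates to zero the three coefficients of the degree-two Wronskian polynomial and reads off proportionality of $\sigma$ and $\mu$. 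Your approach replaces this coefficient bookkeeping with complex analysis on the punctured plane $\Omega = \mathbb{C}\setminus Z$: the quotient $f_\sigma/f_\mu$ is holomorphic there, orthogonality forces its image into $i\mathbb{R}$ (hence constancy by the open mapping theorem), and commutation forces its derivative to vanish (hence constancy by connectedness of $\Omega$), after which density of $\Omega$ recovers the polynomial identities. What each buys: the paper's argument is entirely elementary and self-contained (nothing beyond linear algebra on polynomial coefficients), which fits the paper's stated aim of accessibility, whereas yours is shorter, dispenses with the case analysis, and handles the degenerate configurations of $\mu$ (constant, linear, quadratic $f_\mu$) uniformly, at the cost of invoking the open mapping and identity theorems. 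Both proofs use the hypothesis $\xi_{\{\mu\}} \neq 0$ in the same essential place — you to guarantee $Z$ is finite and $\Omega$ connected, the paper to guarantee $(|\mu_0|^2,|\mu_1|^2,|\mu_2|^2)\neq(0,0,0)$ — and your treatment of the norm identity via $f_{c\mu} = c f_\mu$ coincides with the paper's.
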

\begin{proof}
  Let $f_{\mu} = \mu_0 + \mu_1 z + \frac{1}{2} \mu_2 \z^2$ so that
$\xi_{\{ \mu \}} = f_{\mu} \partial_\z + \overline{f_{\mu}} \partial_{\overline{\z}}$ and define  $f_{\sigma}$ correspondingly. The euclidean metric is $g_E = 4 d \z d \overline{\z}$, so
\begin{align}
  g_E (\xi_{\{\mu\}}, \xi_{\{ \mu\}}) |_p = 2 \left ( f_{\mu} \overline{f_{\sigma}} + \overline{f_{\mu}} f_{\sigma} \right ) |_{z(p)}.
  \label{norm}
\end{align}
The condition of orthogonality is equivalent to $f_{\mu} \overline{f_{\sigma}}
+ \overline{f_{\mu}} f_{\sigma} =0$. This is a polynomial in $\{\z, \overline{z}\}$, so its vanishing is equivalent to the vanishing of all its coefficients. Expanding, we find
\begin{align}
  \mu_0 \overline{\sigma_0} + \overline{\mu_0} \sigma_0 =0,   \quad \quad
  \mu_1 \overline{\sigma_1} + \overline{\mu_1} \sigma_1 =0,   \quad \quad
  \mu_2 \overline{\sigma_2} + \overline{\mu_2} \sigma_2 =0,  \label{firstline} \\
  \mu_1 \overline{\sigma_0} + \overline{\mu_0} \sigma_1 =0, \quad \quad
  \mu_2 \overline{\sigma_0} + \overline{\mu_0} \sigma_2 =0, \quad \quad
  \mu_2 \overline{\sigma_1} + \overline{\mu_1} \sigma_2 =0. \label{secondline}
\end{align}
Equations \eqref{firstline} are equivalent to the existence of three real numbers
$\{q_1, q_2,q_3\}$ such that $\mu_a \overline{\sigma_{a}} = i q_a$, $a=0,1,2$.
Multiplying the equations in \eqref{secondline} respectively by $\mu_0 \overline{\mu_1}$, $\mu_0 \overline{\mu_2}$ and $\mu_1 \overline{\mu_2}$ one finds
\begin{align*}
  & q_0 |\mu_1|^2 - q_1 |\mu_0|^2 =0, \quad
  q_0 |\mu_2|^2 - q_2 |\mu_0|^2 =0, \quad
  q_1 |\mu_2|^2 - q_2 |\mu_1|^2 =0 \quad   \Longleftrightarrow \quad 
  \left ( q_0, q_1,q_2 \right ) \times \left (|\mu_0|^2, |\mu_1|^2, |\mu_2|^2 \right ) =
  (0,0,0),
\end{align*}
where $\times$ stands for the standard cross product.
Since $( |\mu_0|^2, |\mu_1|^2, |\mu_2|^2 ) \neq (0,0,0)$ (from our assumption that
$\xi_{\{\mu\}}$ is not identically zero) there exists a real number $r$ such that
$(q_0,q_1,q_2) = -r (|\mu_0|^2, |\mu_1|^2, |\mu_2|^2 )$. Thus
$\mu_a \overline{\sigma_a} = - i r |\mu_a |^2$. Fix $a \in \{ 0,1,2\}$. If $\mu_a \neq 0$, it follows
that $\overline{\sigma_a} = - i r \overline{\mu_a}$. If, instead, $\mu_a=0$
then  it follows from \eqref{secondline} (since at least of the $\mu$'s is not zero)  that $\sigma_a =0$. In either case
we have $\sigma_a = i r \mu_a$. This proves point 1. in the lemma.

For point 2. we compute the Lie bracket and find
\begin{align*}
  \left [ \xi_{\{\mu\}}, \xi_{\{\sigma\}} \right ]
  =
  \left ( f_{\mu} \frac{d f_{\sigma}}{d \z} -
  f_{\sigma} \frac{d f_{\mu}}{d \z} \right ) \partial_{\z}
  +
    \left ( \overline{f_{\mu}} \frac{d \overline{f_{\sigma}}}{d \overline{\z}} -
  \overline{f_{\sigma}} \frac{d \overline{f_{\mu}}}{d \overline{\z}} \right ) \partial_{\overline{\z}}.
\end{align*}
The two vectors commute iff
\begin{align*}
f_{\mu} \frac{d f_{\sigma}}{d \z} -
f_{\sigma} \frac{d f_{\mu}}{d \z} & =
\mu_0 \sigma_1 - \mu_1 \sigma_0 + \left ( \mu_0 \sigma_2 - \mu_2 \sigma_0 \right )\z + \frac{1}{2} \left (  \mu_1 \sigma_2 - \mu_2 \sigma_2 \right ) \z^2 =0 \\
\Longleftrightarrow & \quad \quad
\left (\sigma_0, \sigma_1, \sigma_2 \right )  \propto (\mu_0,\mu_1,\mu_2), 
\end{align*}
and point 2. is proved. The last claim of the lemma follows from \eqref{norm} and
the linearity $f_{c \mu} = c f_{\mu}$. 
\end{proof}

An immediate corollary of this result is that the set of GCKV that commute with a given GCKV $\xi_{\{\mu\}}$ is two-dimensional and generated by
$\xi_{\{\mu\}}$ and $\xi_{\{\mu\}}^{\perp}:= \xi_{\{- i \mu\}}$.

Recall that a M\"obius transformation is a diffeomorphism of the Riemann sphere $\mathbb{C} \cup \{ \infty \}$ of the form
\begin{align}
  \chi^{\A}:  \mathbb{C} \cup \{ \infty \} & \longrightarrow \mathbb{C} \cup
  \{ \infty \} \nonumber \\
  z & \longrightarrow \chi^{\A} (z) = \frac{\alpha \z + \beta}{\gamma \z +
    \delta}, \quad \quad \A := \left ( \begin{array}{cc}
    \alpha & \beta  \\
    \gamma & \delta
  \end{array}
             \right ), \quad \quad \alpha \delta - \beta \gamma = 1.
             \label{matrixA}
\end{align}
The set of M\"obius transformations forms a group under composition, which we denote by $\Mo$, and the map $\chi : SL(2, \mathbb{C}) \longrightarrow \Mo$ defined by $\chi (\A) = \chi^{\A}$ is a group morphism. The kernel  
of this morphism is $K:= \{ \mathbb{I}_2, - \mathbb{I}_2\}$ and in fact
$\chi$ descends to an isomorphism between $PSL(2, \mathbb{C}) := SL(2, \mathbb{C})/K$ and $\Mo$. 
In geometric terms, the M\"obius group corresponds to the set of orientation-preserving conformal diffeomorphisms of the standard sphere $(\mathbb{S}^2, g_{\mathbb{S}^2})$
(recall that
a diffeomorphism $\chi := \mathbb{S}^2 \longrightarrow \mathbb{S}^2$ is
conformal if $\chi^{\star} (g_{\mathbb{S}^2})  = \Omega^2 g_{\mathbb{S}^2}$ for some
$\Omega \in C^{\infty} ( \mathbb{S}^2, \mathbb{R}^+)$). The M\"obius group thus transforms conformal Killing vectors of $\mathbb{S}^2$ into themselves, and, hence
it also transforms global GCKV of $\mathbb{E}^2$ into themselves. In other
words, given a GCKV $\xi_{\{ \mu \}}$, the vector field  $\chi^{\A}_{\star}
(\xi_{\{\mu\}})$ is also a GCKV\footnote{Note that $\chi^{\A}$ has singularities as a map
from $\mathbb{E}^2$ into $\mathbb{E}^2$, but $\chi^{\A}_{\star} (\xi_{\{\mu\}})$
extends smoothly to all $\mathbb{E}^2$, and in fact to the whole Riemann sphere. Again this is standard and well-understood, so we will abuse the notation and write $\chi^{A}_{\star}$ as if the map $\chi^A$ were well-defined everywhere on
$\mathbb{E}^2$}. Let  $\mu^{\prime} := (\mu^{\prime}_0, \mu^{\prime}_1,
\mu^{\prime}_2 )$ be the set of parameters of $\chi_{\star}^{\A} (\xi_{\{ \mu\}})=: \xi_{\{\mu'\}}$. A
straightforward computations shows that
\begin{align}
  \left ( \begin{array}{c}
    \mu^{\prime}_0 \\
    \mu^{\prime}_1 \\
    \mu^{\prime}_2
  \end{array} \right )
  = 
  \underbrace{ \left (
    \begin{array}{ccc}
    \alpha^2  & - \alpha \beta & \frac{1}{2} \beta^2 \\
    - 2 \alpha \gamma & \alpha \delta + \beta \gamma & -\beta \delta \\
    2 \gamma^2 & -2 \gamma \delta & \delta^2
  \end{array}
  \right )}_{:=\Q_{\A}}
\left ( \begin{array}{c}
    \mu_0 \\
    \mu_1 \\
    \mu_2
\end{array} \right ).
\label{transmu}
\end{align}
The determinant of this matrix is one, so $\Q_{\A} \in SL(3, \mathbb{C})$.
As a consequence of  $\chi^{\A_1} \circ \chi^{\A_2} = \chi^{\A_1 \cdot \A_2}$ (where
$\cdot$ denotes product of matrices), it follows that the map
$\Q : SL(2, \mathbb{C} ) \longrightarrow SL(3,\mathbb{C})$ defined by
$\Q(\A) = \Q_{\A}$ is a morphism of groups, i.e.
$\Q_{\A_1} \cdot \Q_{\A_2} = \Q_{\A_1 \cdot  \A_2 }$. This property can also be confirmed by explicit computation. In particular $\Q$ defines a representation of the group $SL(2,\mathbb{C})$ on $\mathbb{C}^3$. It is easy to show that this representation is actually isomorphic to the adjoint representation. Recall that
for matrix Lie group $G$ (i.e. a Lie subgroup of $GL(n,\mathbb{C})$), the
adjoint representation $\mbox{Ad}$ takes the explicit form (e.g. \cite{hall2003lie})
\begin{align*}
  \mbox{Ad}: G & \longrightarrow \mbox{Aut} (\g) \\
  g  &\longrightarrow Ad(g):=Ad_g : \begin{array}{lll}
                                      & & \\
                                      \g & \rightarrow & \g \\
                                     X  & \rightarrow & g X g^{-1}
                                          \end{array}
                                   \end{align*}
where $\g$ is the Lie algebra of $G$ and $\mbox{Aut} (\g)$
is the set of automorphisms of $\g$. The isomorphism between $\Q$ and
$\mbox{Ad}$ is as follows.  Let us choose the  basis of $sl(2,\mathbb{C})$ given by
\begin{align*}
  \w^0 := \left ( \begin{array}{ll}
           0 & 2 \\
           0 & 0
         \end{array} \right )
               \quad \quad
  \w^1 := \left (\begin{array}{ll}
           1 & 0 \\
           0 & -1
                 \end{array} \right )
               \quad \quad
                 \w^2 := \left ( \begin{array}{ll}
           0 & 0 \\
           -1 & 0
         \end{array} \right )
\end{align*}
and define the vector space isomorphism $h : \mathbb{C}^3 \rightarrow sl(2,\mathbb{C})$ defined by $h(\mu_0, \mu_1, \mu_2) = \mu_a \w^a$ ($a,b, \cdots = 0,1,2$). One then checks easily by  explicit computation that
$h^{-1} \circ Ad_g  \circ h = \Q(g)$, for all $g \in SL(2,\mathbb{C})$.

Recall that the Killing form of a Lie algebra $\g$ is the symmetric bilinear map on $\g$ defined by
$B (\a_1, \a_2) :=\Tr \left ( \mbox{ad}(\a_1) \circ \mbox{ad}(\a_2) \right )$
where $ad(\a)$, $\a \in \g$ is the adjoint endomorphism $ad(\a) : \g \rightarrow \g$
defined by $ad(\a)(\b) := [\a,\b]$. The Lie algebra $sl(2,\mathbb{C})$ is semi-simple, so its Killing form is non-degenerate (e.g. \cite{knapp}). The explicit form in the basis $\{ \w_0, \w_1, \w_2\}$ is given by
\begin{align*}
  B(\mu_a \w^a, \sigma_a \w^a ) = 8 \left ( \mu_1 \sigma_1 - \mu_0 \sigma_2 - \mu_2 \sigma_0 \right ).
\end{align*}
A fundamental property of the Killing form is that it is invariant under
automorphisms (see e.g. \cite{cap}), so in particular under the
adjoint representation $B(\mbox{Ad}_g(\a), \mbox{Ad}_g(\b)) = B(\a,\b)$ 
for all
$g \in G$.  Given $\{ \mu\}$ we define two real quantities $\aa_{\{\mu\})
 }$, $\bb_{\{\mu\}}$ by
  \begin{align*}
   \aa_{\{\mu\}} - i \bb_{\{\mu\}} := 2 \mu_0 \mu_2 - \mu_1^2.
 \end{align*}
  As a consequence of the discussion above, the quantities  $\aa_{\{\mu\}}$, $\bb_{\{\mu\}}$
    associated to a GCKV $\xi_{\{\mu\}}$ are invariant under M\"obius transformations. We have now all necessary ingredients to determine the set of
    M\"obius transformations that transform a GCKV into its canonical form.
    Before doing so, however, we summarize known results on the relationship between GCKV
    and skew-symmetric endomorphism in the Minkowski spacetime.

    \section{GCKV and skew-symmetric endomorphisms}
    \label{GCKV_and_F}

    It is well-known that conformal diffeomorphisms 
      on the
      standard sphere of dimension $n \geq 2$, $\mathbb{S}^n$,
  %    (Recall that a conformal transformation of $(M,g)$ is a diffeomorphism $\Phi$ of $M$ satisfying $\Psi^{\star} (g )=
  %  \Omega^2  g$ for some smooth positive function $\\Omega : M \rightarrow 
  %  \mathbb{R}$ depending of $\Phi$.} 
  are in one-to-one correspondence with
  orthochronous Lorentz transformations in the
  Minkowski spacetime $\mathbb{M}^{1,n+1}$. The underlying reason (see
  e.g. \cite{PenroseRindVol1} or \cite{IntroCFTschBook}) is that such Lorentz transformations leave   invariant the future null cone, and the set of null semi-lines in the cone
  admits a differentiable structure and a metric that makes it isometric
  to $\mathbb{S}^n$. The action of the orthochronous Lorentz group on the set of future directed null semi-lines gives rise to a conformal transformation, defining a map  that turns out to be one-to-one. This property translates, at the infinitesimal level, to the existence of a one-to-one map between
  conformal Killing vectors of $\mathbb{S}^n$  and the set
  of skew-symmetric endomorphisms in $\mathbb{M}^{1,n+1}$. The explicit form of these two maps depends on the choice of isometry between the set of null-semilines and
  $\mathbb{S}^n$. This freedom amounts, essentially
  to fixing
  a future directed orthonormal Lorentz frame $\{ e_{\alpha}\}$ with associated
  Minkowskian coordinates $\{ T, X^i\}$ 
  in $\mathbb{M}^{1,n+1}$ and selecting a unit spacelike direction $u = u^i e_i$
  with respect to which one performs a stereographic projection of the sphere
  $\{T =1, \sum_{i=1}^{n+1}  (X^i)^2 = 1\}$ minus the point $p_u := \{X^i = u^i\}$
  onto an $n$-dimensional spacelike plane $\Pi_u$ that lies in
    the hyperplane  $\{T = 1\}$, is orthogonal to $u$ and does not contain
  the point $p_u$ (such a plane is uniquely defined by the signed euclidean distance from $\Pi_u$ and $p_u$ in the Euclidean plane $\{ T = 1\}$). The final choice is a set of Cartesian coordinates in $\Pi_u$.

The construction above can also be done using the hyperboloid of timelike unit future vectors $\mathcal{H} \subset \mathbb{M}^{1,n+1}$, whose isometries are the orthochronous Lorentz transformations. The boundary $\partial \mathcal{H}$ of the conformal compactification of the hyperboloid (which represents ``infinity'' of  $\mathcal{H}$)  is a standard sphere, where the action of the Lorentz group can be extended and it turns out to generate conformal transformations.
Details of this construction can be found e.g. in Appendix A
of \cite{Kdslike}. As in the other representation, the details of the map depend on
  how the sphere at infinity is introduced. The way how the explicit construction was carried
  out in \cite{Kdslike} corresponds, in the  description above, to choosing the vector $u = -e_1$, the plane $\Pi_u = \{ T =1, X^1 = 1\}$ and
  Cartesian coordinates in $\Pi_u$ given by $\{ X^2, \cdots, X^{n+1}\}$.
  With these choices, and restricting to dimension $n=2$,  

  the explicit map between the set
  of skew-symmetric endomorphisms  $\skwend{\mathbb{M}^{1,3}}$ and the set of Global Conformal
  Killings vectors on $\mathbb{R}^2$ (denoted by $\mbox{GCKV}(\mathbb{R}^2)$) is
    \begin{align}
      \Psi := \skwend{\mathbb{M}^{1,3}} & \longrightarrow  \mbox{GCKV} (\mathbb{R}^2) \nonumber \\
      F = \left (  \begin{array}{cccc}
                     0 & -\nu & - a_x + \frac{b_x}{2} & - a_y + \frac{b_y}{2} \\
                     - \nu & 0 & - a_x - \frac{b_x}{2} & - a_y - \frac{b_y}{2} \\
                     - a_x + \frac{b_x}{2} & a_x + \frac{b_x}{2} & 0 & - \omega \\ - a_y + \frac{b_y}{2} & a_y + \frac{b_y}{2} & \omega & 0 
                   \end{array}
                                                                                                                                          \right ) & \longrightarrow \xi_F := \xi(b_x,b_y, \nu, \omega, a_x, a_y ),  \label{formF}
    \end{align}
    where $F \in \mbox{Skew} ( \mathbb{M}^{1,3})$ is expressed in the orthonormal basis $\{ e_{\alpha} \}$ (specifically $F(e_{\nu}) = F^{\mu}{}_{\nu} e_{\mu}$ with $F^{\mu}{}_{\nu}$ being the  row $\mu$, column $\nu$ of the matrix above),
    $\xi(b_x,b_y, \nu,\omega,a_x, a_y)$ is given by \eqref{xiexp} and
    the coordinates of the plane $\Pi_u$ are renamed as 
    $\{ x := X^2, y := X^3\}$.
%     $\{ x := Y, y := Z\}$. 

    Given an (active) orthochronous Lorentz transformation $\Lambda(e_{\mu}) =
    \Lambda_{\mu}^{\nu} e_{\nu}$, we may consider the skew-symmetric endomorphism
    $F_{\Lambda} := \Lambda \circ F \circ \Lambda^{-1}$. The construction above guarantees that
      \begin{align*}
        \xi_{F_{\Lambda}} = \Xi^{\Lambda}_{\star} (\xi_F)
      \end{align*}
      where  $\Xi^{\Lambda}$ is the conformal diffeomorphism associated to
      the Lorentz transformation $\Lambda$. Let us restrict from now on to proper (i.e. orthochronous
        and orientation preserving) Lorentz transformations. Thus, $\Xi^{\Lambda}$ is an orientation preserving conformal diffeomorphism, and having fixed the coordinate system
        $\{x,y \} \in \mathbb{R}^2$, as well as $\z = \frac{1}{2} (x - iy)$,
        $\Xi^{\Lambda}$ is a M\"obius transformation. Thus there exists
        a pair $\pm \A \in SL(2,\mathbb{C})$ such that $\chi^{\pm \A(\Lambda)} = \Xi^{\Lambda}$. We are interested in determining the explicit form of $\A(\Lambda)$ (actually of
        its inverse map $\Lambda (\A)$). Having also fixed a future directed orthonormal basis $\{ e_{\alpha} \}$, we may represent a proper Lorentz transformation as an element of $SO^{\uparrow} (1,3)$ (the connected component of the identity of $SO(1,3)$). The aim is, thus,  to determine the map $\M: SL(2,\mathbb{C}) \rightarrow  SO^{\uparrow} (1,3)$  satisfying $\Xi^{\M(\A)} = \chi^{\A}$. Of course, this maps depends on the choices we have made concerning the unit spacelike direction $u$  and plane $\Pi_u$ to perform the stereographic projection.

        As discussed at length in many references, (see e.g. \cite{PenroseRindVol1}, pp. 8-24), when the vector $u$ is chosen to be $e_z$, the plane
        is selected to be $ \{ T=1, X^3=0\}$ and the complex coordinate
       $z'$ in this plane is taken as $z' = X^1 + i X^2$,  the corresponding map  $\M^{\prime}$ is  (we parametrize $\A$ is in \eqref{matrixA})
\begin{align*}
  \M^{\prime} (\A) = \frac{1}{2} \left (
  \begin{array}{cccc}
    \al \alb+ \be  \beb+ \ga \gab+ \del \delb &
                                      \al \beb 
                                      + \be \alb + \ga \delb
                                      + \del \gab ) & 
                                                   i (\al \beb - \be \alb +\ga \delb
                                                   - \del \gab ) & 
\al \alb-\be \beb+\ga \gab-\del \delb \\
    \al \gab +\be \delb + \ga \alb 
    +\del \beb &
              \al \delb +\be \gab + \ga \beb 
              + \del \alb  &
                          i (\al \delb -\be \gab+ \ga \beb 
                          - \del \alb ) &
                                       \al \gab -\be \delb + \ga \alb 
                                       - \del \beb  \\
    i (-\al \gab -\be \delb + \ga \alb 
    + \del \beb ) &
                 i (-\al \delb -\be \gab + \ga \beb 
                 + \del \alb ) & 
                              \al \delb -\be \gab- \ga \beb 
                              + \del \alb  & 
                                          i (-\al \gab +\be \delb + \ga \alb 
                                          - \del \beb ) \\
 \al \alb+\be \beb-\ga \gab-\del \delb & 
                               \al \beb + \be \alb -\ga \delb
                               - \del \gab  & 
                                           i (\al \beb - \be \alb -\ga \delb
                                           + \del \gab ) & 
\al \alb-\be \beb-\ga \gab+\del \delb
  \end{array} \right )
                                       \end{align*}
                                       We may take advantage of this fact to determine our $\M(\A)$. To do that we simply need to relate the action of the M\"obius group in the plane $\Pi_u := \{ X^1 = 1\}$ (in the coordinate $z$) with the
                                       corresponding
                                       action on the plane $\Pi'_u := \{X^3 =0\}$
                                       in the coordinate $z'$. At this point we can explain the reason why we have chosen $z = \frac{1}{2} (x-i y)$. The
                                       reason for the factor $2$ comes from the
                                       fact that the plane $\Pi_u$ lies at distance $2$ from the point of stereographic projection, while the plane
                                       $\Pi'_u$ lies at distance $1$ of its corresponding stereographic point. The sign is introduced because the basis $\{-e_1, e_2, e_3\}$ (with respect to which the point $u$ and the coordinates $\{x,y\}$ are
                                       defined) has opposite orientation than the basis $\{ e_3, e_1, e_2\}$ with respect to which the point $u'$
                                       and the coordinates $\{X^1, X^2\}$
                                       are built. By introducing a minus sign
                                       in $z$ we make sure that the transformation $\psi$ of $\mathbb{S}^2$ defined by $\{ z(p) = z'(\psi(p)) \}$ is orientation
                                       preserving (where $z(p)$ and
                                       $z'(p)$ stand for the two respective
                                       stereographic projections of
                                       $\mathbb{S}^2$ onto $\mathbb{C}^2
                                       \cup \{ \infty \}$). Now, a straightforward
                                       computation shows that an orientation
                                       preserving conformal diffeomorphism $\chi: \mathbb{S}^2 \rightarrow \mathbb{S}^2$ which in the plane $\Pi_u$ takes the
                                       form
                                       \begin{align*}
                                         \z( \chi(p)) =
                                         \frac{\alpha \z(p) + \beta}{\gamma \z(p) + \delta}, \quad \quad \alpha \delta - \beta \gamma = 1, \qquad  p\in \mathbb{S}^2
                                       \end{align*}
                                       has the following form in the
                                       $\Pi_u'$ plane
                                       \begin{align*}
                                         z'(\chi(p)) = 
\frac{\alpha' z'(p) + \beta'}{\gamma' z'(p) + \delta'}
\end{align*}
where
\begin{align*}
\left ( \begin{array}{cc}
\alpha' & \beta' \\
\gamma' & \delta'   
      \end{array}   
                  \right ) = U^{-1}  
\left (                      
\begin{array}{cc}
\alpha & \beta \\
\gamma & \delta   
      \end{array} 
\right )
         U, \quad \quad
         U: = \frac{1}{2}\left ( \begin{array}{cc}
                                  1 - i& - 1 +i \\
                                  1+ i & 1+ i
                                \end{array}
                                         \right ) .
\end{align*} 
Since the map $\M'$ is a morphisms of groups, it follows that the Lorentz transformation $\M(\A)$ is given by
\begin{align*}
  \M(\A) =\M'(\A') = \M'(U)^{-1} \M'(\A) \M'(U)
\end{align*}
The $SO^{\uparrow}(1,3)$ Lorentz matrix $\M'(U)$ is the rotation
\begin{align}
  \M'(U)  = \left ( \begin{array}{cccc}
                      1 & 0 & 0 & 0 \\
                      0 & 0 & 1 & 0 \\
                      0 & 0 & 0 & -1 \\
                      0 & -1 & 0 & 0 
                    \end{array}
                                  \right )
\end{align}
and we conclude that the Lorentz transformation $\M(\A)$
takes the explicit form 
\begin{align}
\M(\A) = 
 \frac{1}{2}  \left (
  \begin{array}{cccc}
 \al \alb + \be \beb + \ga \gab + \del \delb & - \al \alb + \be \beb - \ga \gab + \del \delb &
     \al \beb + \be \alb  + \ga \delb + \del \gab  & i ( - \al \beb + \be \alb \ - \ga \delb + \del \gab)\\
     - \al \alb - \be \beb + \ga \gab + \del \delb & \al \alb - \be \beb - \ga \gab + \del \delb &
      - \al \beb - \be \alb  + \ga \delb + \del \gab  & i (\al \beb - \be \alb  - \ga \delb + \del \gab )\\
    \al \gab + \be \delb + \ga \alb  + \del \beb  & - \al \gab + \be \delb - \ga \alb + \del \beb & \al \delb + \be \gab + \ga \beb + \del \alb 
                                                   & i ( - \al \delb 
                                                     + \be \gab - \ga \beb + \del \alb)\\
    i (\al \gab + \be \delb - \ga \alb  - \del \beb ) & i ( - \al \gab + \be \delb + \ga \alb - \del \beb ) &     i (\al \delb 
                                                                                                              + \be \gab - \ga \beb - \del \alb ) & \al \delb                                                                                                                                                     - \be \gab - \ga \beb  + \del \alb\\
  \end{array}
  \right ) \label{MA}
\end{align}
(to avoid ambiguities, recall  that the Lorentz transformation defined by this matrix is
$\Lambda(e_{\mu} ) = \Lambda^{\nu}_{\mu} e_{\nu}$ with $\Lambda^{\nu}_{\mu}$
the row $\nu$ and column $\mu$).

\section{Canonical form of the GCKV}\label{seccanonGCKV}

We start with a definition motivated by the canonical form of skew-symmetric
endomorphisms discussed in Section \ref{CanonSkew}.
\begin{definition}\label{defcanongkvf}
  Let $\mathbb{E}^2$ be Euclidean space and $\{x,y\}$ a

  Cartesian coordinate system. A GCKV $\xi$ is called {\bf canonical} with
  respect to $\{x,y\}$ if it has the form
  \begin{align*}
    \xi = (\mu_0 + z^2 ) \partial_z + \left ( \overline{\mu_0} + \overline{z}^2
    \right ) \partial_{\overline{z}}, \quad \quad   \z:= \frac{1}{2} (x-iy),
    \quad \mu_0 \in \mathbb{C}.
  \end{align*}
\end{definition}
Equivalently, a GCKV is canonical with respect to $\{x,y\}$ whenever its corresponding
form \eqref{muform} has $\mu_1 =0$ and $\mu_2=2$. We next characterize the  class of
M\"obius transformations $\chi^\A$ which send a
given GCKV into its canonical form.

\begin{proposition}
\label{transf}
  Let $\{x,y\}$ be a Cartesian coordinate system in $\mathbb{E}^2$.
  Let $\xi$ be a non-trivial GCKV and define the complex constants $\{\mu_0, \mu_1, \mu_2\}$ 
  such that $\xi = \xi_{\{\mu\}}$ when expressed in the complex coordinate
  $\z = (x -i y)/2$ and its complex conjugate. Then $\chi^\A \in \Mo$ has the property that $\chi^\A_{\star} (\xi)$ is written in canonical form with respect
  to $\{ x,y\}$ if and only if
  \begin{align}
    \A = \left ( \begin{array}{cc}
                   \frac{1}{2} \left ( \delta \mu_2 - \gamma \mu_1 \right ) &
                                                                              \frac{1}{2} \delta \mu_1 - \gamma \mu_0 \\
                   \gamma & \delta
                 \end{array}
                            \right ), \quad
                            \frac{1}{2} \delta^2 \mu_2 - \gamma \delta \mu_1 + \gamma^2 \mu_0 =1. \label{formA}
  \end{align}
  Moreover, for any such $\A$, it holds
  \begin{align*}
    \chi^{\A}_{\star} (\xi) =
    \left ( \frac{1}{4} \left ( \aa_{\{\mu\}} - i \bb_{\{\mu\}} \right ) + z^2 \right ) \partial_{z}
    +
    \left ( \frac{1}{4} \left ( \aa_{\{\mu\}} + i \bb_{\{\mu\}} \right ) + \overline{z}^2 \right ) \partial_{\overline{z}}.
  \end{align*}
  
  \end{proposition}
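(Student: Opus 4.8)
The plan is to work entirely with the transformation law \eqref{transmu} for the parameters under a M\"obius map. By Definition \ref{defcanongkvf} and the equivalent characterization following it, the pushed-forward field $\chi^\A_\star(\xi) = \xi_{\{\mu'\}}$ is canonical with respect to $\{x,y\}$ if and only if $\mu_1' = 0$ and $\mu_2' = 2$. Reading off the relevant rows of $\Q_\A$ from \eqref{transmu},
\begin{align}
  \mu_1' &= -2\alpha\gamma\mu_0 + \lr{\alpha\delta + \beta\gamma}\mu_1 - \beta\delta\mu_2, \nonumber \\
  \mu_2' &= 2\gamma^2\mu_0 - 2\gamma\delta\mu_1 + \delta^2\mu_2, \nonumber
\end{align}
so the whole statement reduces to these two scalar equations together with the constraint $\alpha\delta - \beta\gamma = 1$. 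The key observation is that the second equation is exactly twice the quantity $P(\gamma,\delta) := \gamma^2\mu_0 - \gamma\delta\mu_1 + \frac{1}{2}\delta^2\mu_2$, which involves neither $\alpha$ nor $\beta$; hence $\mu_2' = 2$ is equivalent to $P(\gamma,\delta) = 1$, which is precisely the displayed constraint in \eqref{formA}. A direct computation also shows that, for the $\alpha,\beta$ proposed in \eqref{formA}, one has $\alpha\delta - \beta\gamma = P(\gamma,\delta)$, so for those values $\A \in SL(2,\mathbb{C})$, $\mu_2' = 2$, and the constraint all coincide.

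For the ``if'' direction I would substitute the proposed $\alpha = \frac{1}{2}\lr{\delta\mu_2 - \gamma\mu_1}$ and $\beta = \frac{1}{2}\delta\mu_1 - \gamma\mu_0$ into the expression for $\mu_1'$ above and check that it vanishes identically. This is the only genuine computation: each of the three terms produces quadratic monomials in $\{\mu_0,\mu_1,\mu_2\}$ with coefficients built from $\gamma,\delta$, and they cancel in pairs (the $\gamma\delta\mu_0\mu_2$, the $\gamma^2\mu_0\mu_1$ and the $\delta^2\mu_1\mu_2$ contributions each appear twice with opposite signs), leaving $\mu_1' = 0$. Together with the previous paragraph this shows that every $\A$ of the form \eqref{formA} sends $\xi$ to canonical form.

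For the converse, assume $\A \in SL(2,\mathbb{C})$ carries $\xi$ to canonical form, so $\mu_1' = 0$ and $\mu_2' = 2$. The condition $\mu_2' = 2$ immediately forces $P(\gamma,\delta) = 1$. I then regard the determinant condition $\delta\alpha - \gamma\beta = 1$ and the equation $\mu_1' = 0$ as a linear system in the two unknowns $(\alpha,\beta)$ with $(\gamma,\delta)$ fixed; its coefficient determinant works out to $-2\,P(\gamma,\delta) = -2 \neq 0$, so $(\alpha,\beta)$ is uniquely determined. Since the values in \eqref{formA} already solve this system (by the ``if'' direction), they must be the unique solution, which proves that $\A$ has the claimed form. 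I would also note that non-triviality of $\xi$ guarantees that $P$ is not the zero quadratic form, hence surjective onto $\mathbb{C}$, so solutions of $P(\gamma,\delta) = 1$ exist and the family of admissible $\A$ is nonempty.

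Finally, the value of $\mu_0'$ follows with no extra computation from the M\"obius invariance of $2\mu_0\mu_2 - \mu_1^2 = \aa_{\{\mu\}} - i\bb_{\{\mu\}}$ established in Section \ref{secGCKVs}. Evaluating this invariant on $\{\mu'\}$ and using $\mu_1' = 0$, $\mu_2' = 2$ gives $4\mu_0' = \aa_{\{\mu\}} - i\bb_{\{\mu\}}$, i.e. $\mu_0' = \frac{1}{4}\lr{\aa_{\{\mu\}} - i\bb_{\{\mu\}}}$, which is exactly the canonical field in the statement. The main (indeed only) obstacle is the cancellation in $\mu_1'$ for the ``if'' part; everything else is organizational, once one notices that the determinant condition, the constraint in \eqref{formA}, and $\mu_2'=2$ are the same equation.
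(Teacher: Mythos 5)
Your proof is correct. Both you and the paper reduce the problem, via the transformation law \eqref{transmu}, to the two scalar conditions $\mu_1'=0$, $\mu_2'=2$ together with the determinant condition $\alpha\delta-\beta\gamma=1$; your ``if'' direction (substitution and pairwise cancellation) and your identification of $\mu_0'$ via the invariance of $2\mu_0\mu_2-\mu_1^2$ coincide with the paper's. Where you genuinely depart is in the converse: the paper solves for $\beta$ by multiplying the $\mu_1'=0$ equation by $\delta$ and eliminating with the help of $\mu_2'=2$ and the determinant condition, and then must split into the cases $\delta\neq 0$ and $\delta=0$ (with $\gamma\neq 0$ in the latter) to recover $\alpha$; you instead observe that the determinant condition and $\mu_1'=0$ form a \emph{linear} system in $(\alpha,\beta)$ whose coefficient determinant equals $-2P(\gamma,\delta)=-2\neq 0$ once $\mu_2'=2$ is imposed, so uniqueness of the solution, together with the already-verified fact that the values in \eqref{formA} solve the system, finishes the argument with no case analysis. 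Your route buys a cleaner, case-free converse and makes transparent that the determinant condition, the constraint in \eqref{formA}, and $\mu_2'=2$ are all the single equation $P(\gamma,\delta)=1$; the paper's elimination is more pedestrian but derives the formulas for $\alpha$ and $\beta$ constructively rather than by appeal to uniqueness. Your closing remark on non-emptiness (a non-zero quadratic form over $\mathbb{C}$ is surjective, so $P=1$ has solutions) is correct but tangential, since the proposition asserts only an equivalence.
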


\begin{proof}
From \eqref{transmu} and the fact that the canonical form
  has $\mu'_1=0$ and $\mu'_2=2$, we need to find the most general $\alpha, \beta, \gamma, \delta$ subject to
  $\alpha\delta- \beta\gamma=1$ such that
  \begin{align}
    - 2 \alpha \gamma \mu_0 + \left ( \alpha \delta + \beta \gamma
    \right ) \mu_1 - \beta \delta \mu_2 & =0, \label{firstEq}\\
    2 \gamma^2 \mu_0 - 2 \gamma \delta \mu_1 + \delta^2 \mu_2 &= 2.
                                                                \label{secondEq}
  \end{align} 
  The first can be written, using the determinant condition
  $\alpha\delta- \beta\gamma=1$,
  as
  $- 2 \alpha \gamma \mu_0 + (1+ 2 \beta \gamma) \mu_1 - \beta \delta \mu_2=0$. Multiplying by $\delta$ yields
  \begin{align}
    0 & = -2 \alpha \delta \gamma \mu_0 + \delta \mu_1 + \beta
    \left ( 2 \gamma \delta \mu_1 - \delta^2 \mu_2 \right )
    = -2 \alpha\delta \gamma \mu_0 + \delta \mu_1 + \beta \left (
        2 \gamma^2 \mu_0 -2 \right ) \nonumber \\
    & =
    - 2 \gamma \mu_0 + \delta \mu_1 - 2 \beta \quad \quad 
  \Longrightarrow \quad \quad \beta = \frac{1}{2} \delta \mu_1 - \gamma \mu_0,
\label{beta}
  \end{align}
  where in the second equality we used  \eqref{secondEq}
  and in the third one we inserted the determinant condition. To determine
  $\alpha$ we compute
  \begin{align*}
    & \alpha \delta = 1 + \beta \gamma = 1 + \frac{1}{2} \gamma \delta \mu_1  - \gamma^2 \mu_0  =
    \frac{1}{2} \delta \left ( \delta \mu_2 - \gamma \mu_1 \right ) \\
   \quad \quad \Longrightarrow \quad \quad
    & \delta \left ( \alpha + \frac{1}{2} \gamma \mu_1
    - \frac{1}{2} \delta \mu_2 \right )=0,
  \end{align*}
  where in the third equality we used \eqref{secondEq} to replace
  $\gamma^2 \mu_0$. If $\delta \neq 0$ we conclude that
  $\alpha = (1/2) (\gamma \mu_1 - \delta \mu_2)$, and the form
    of $\A$ is necessarily as given in \eqref{formA}. If, on the other hand,
$\delta =0$, then the determinant condition forces
  $\gamma \neq 0$. Thus, equation \eqref{firstEq} gives
  $- 2 \alpha \mu_0' + \beta \mu_1=0$, which after using \eqref{beta} implies $\alpha = - (1/2) \gamma \mu_1$, so \eqref{formA} also follows. This proves the
  ``only if'' part of the statement. For the ``if'' part one simple checks that
  $\beta$ and $\alpha$ obtained above indeed satisfy \eqref{firstEq}-\eqref{secondEq}, as soon as $\gamma, \delta$ satisfy the determinant condition
  given in \eqref{formA}. 

  The second part of the Proposition is immediate form the fact that
  $2 \mu_0 \mu_2  - \mu_1^2$ is invariant under  \eqref{transmu}. Thus,
  $\chi^{\A}_{\star} (\xi)$ has $\mu_0'$ satisfying
  \begin{align}
    4 \mu_0' = 2 \mu_0' \mu_2' - \mu_1'{}^2 
    = 2 \mu_0 \mu_2 - \mu_1^2
    = \aa_{\{\mu\}} - i \bb_{\{\mu\}}.
    \label{aabb}
\end{align}  
\end{proof}

\begin{corollary}
  \label{inv1}
  The subgroup of $SL(2,\mathbb{C})$  that leaves invariant
  a GCKV field in canonical form with parameter $\mu_0$ is given by
  \begin{align}
    \A_{\mu_0} = \left \{  \left ( \begin{array}{cc}
                   \delta  & -  \gamma \mu_0 \\
                   \gamma & \delta
                 \end{array}
                            \right ), \quad
                            \delta^2 + \mu_0 \gamma^2 = 1 \right \}
                                                        . \label{invA}
  \end{align}

\end{corollary}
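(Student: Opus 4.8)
The plan is to recognize this corollary as an immediate specialization of Proposition \ref{transf}. The subgroup in question is, by definition, the stabilizer of a fixed canonical GCKV $\xi$ under the M\"obius action $\A \mapsto \chi^{\A}_{\star}$; being a stabilizer it is automatically a subgroup of $SL(2,\mathbb{C})$, so the only genuine task is to identify it explicitly. First I would observe that a GCKV in canonical form with parameter $\mu_0$ is precisely $\xi_{\{\mu\}}$ with $\mu = (\mu_0, 0, 2)$, i.e. $\mu_1 = 0$ and $\mu_2 = 2$ (Definition \ref{defcanongkvf}).

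Proposition \ref{transf} already describes exactly those $\chi^{\A}$ for which $\chi^{\A}_{\star}(\xi)$ is again in canonical form, so I would simply substitute $\mu_1 = 0$, $\mu_2 = 2$ into the matrix \eqref{formA}. The off-diagonal entries collapse to
\begin{align*}
  \A = \left ( \begin{array}{cc} \delta & -\gamma \mu_0 \\ \gamma & \delta \end{array} \right ),
\end{align*}
and the accompanying constraint $\frac{1}{2}\delta^2 \mu_2 - \gamma\delta \mu_1 + \gamma^2 \mu_0 = 1$ becomes $\delta^2 + \mu_0 \gamma^2 = 1$, which is exactly the family \eqref{invA}. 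I would also note in passing that this constraint coincides with $\det \A = \delta^2 + \mu_0 \gamma^2$, confirming consistency with the condition $\A \in SL(2,\mathbb{C})$.

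The one step that genuinely requires care --- and which I regard as the crux --- is to upgrade the statement ``$\chi^{\A}$ maps $\xi$ to \emph{some} canonical form'' to ``$\chi^{\A}$ leaves $\xi$ invariant,'' since the corollary asserts invariance of the particular GCKV with parameter $\mu_0$. For this I would invoke the second part of Proposition \ref{transf}: the image $\chi^{\A}_{\star}(\xi)$ has parameter $\mu_0' = \tfrac{1}{4}(\aa_{\{\mu\}} - i \bb_{\{\mu\}}) = \tfrac{1}{4}(2\mu_0\mu_2 - \mu_1^2)$. Because $2\mu_0\mu_2 - \mu_1^2$ is a M\"obius invariant (its invariance under \eqref{transmu} having been established via the Killing form), evaluating it on the canonical datum $(\mu_0, 0, 2)$ gives $\tfrac{1}{4}(4\mu_0) = \mu_0$. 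Hence every $\A$ satisfying \eqref{invA} returns the \emph{same} $\mu_0$, so $\chi^{\A}_{\star}(\xi) = \xi$ and not merely a canonical form with a shifted parameter; conversely, any transformation fixing $\xi$ in particular preserves canonical form and must therefore lie in the family \eqref{invA} by Proposition \ref{transf}. This double inclusion closes the characterization. I do not anticipate any serious obstacle beyond being explicit about this invariance argument, as the remainder is a direct substitution.
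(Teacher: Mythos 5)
Your proposal is correct and follows essentially the same route as the paper, whose entire proof is the substitution $\mu_1=0$, $\mu_2=2$ into \eqref{formA}. Your extra step --- using the M\"obius invariance of $2\mu_0\mu_2-\mu_1^2$ to upgrade ``preserves canonical form'' to ``fixes $\xi$'' --- is exactly the point the paper leaves implicit (it is covered by the second part of Proposition \ref{transf}), so making it explicit is a sound reading rather than a departure.
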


{\it Proof.} Insert $\mu_1=0$ and $\mu_2=2$ into \eqref{formA}.

\begin{corollary}
  \label{coset}
  Given any GCKV $\xi$ as in Proposition \ref{transf}, the set of elements
  $\A \in SL(2,\mathbb{C})$ such that $\chi^{\A}_{\star}(\xi)$
  takes the canonical form is
  \begin{align*}
    \A_{\frac{1}{4} (\aa_{\{\mu\}} - i \bb_{\{\mu\}} )}
    \cdot \A_0
  \end{align*}
  where $\A_{0}$ is any element of $SL(2,\mathbb{C})$ satisfying
  \eqref{formA}.
\end{corollary}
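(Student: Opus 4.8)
The plan is to recognize Corollary~\ref{coset} as a routine orbit--stabilizer statement and to reduce it entirely to the two results that precede it, namely Proposition~\ref{transf} and Corollary~\ref{inv1}. First I would name the object to be described: let $S := \{ \A \in SL(2,\mathbb{C}) : \chi^{\A}_{\star}(\xi) \text{ is canonical with respect to } \{x,y\} \}$. Proposition~\ref{transf} already hands us two crucial facts. It shows $S \neq \emptyset$, since any $\gamma,\delta$ obeying the determinant constraint in \eqref{formA} produce an admissible $\A$; and, more importantly, it shows that \emph{every} element of $S$ sends $\xi$ to one and the same canonical field, namely the GCKV $\xi_{\mathrm{can}}$ with parameter $\mu_0' = \tfrac14(\aa_{\{\mu\}} - i \bb_{\{\mu\}})$ (this is the content of \eqref{aabb}). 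It is precisely this uniqueness of the target that makes a single stabilizer subgroup relevant to all of $S$ simultaneously.

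The key step exploits that $\chi$ is a group morphism, so pushforwards compose according to $\chi^{\A_1 \A_2}_{\star} = \chi^{\A_1}_{\star} \circ \chi^{\A_2}_{\star}$ and $\chi^{\A^{-1}}_{\star} = (\chi^{\A}_{\star})^{-1}$. I would fix one reference element $\A_0 \in S$ (for instance the explicit matrix in \eqref{formA}) and set $\xi_{\mathrm{can}} := \chi^{\A_0}_{\star}(\xi)$, which by the previous paragraph has parameter $\mu_0' = \tfrac14(\aa_{\{\mu\}} - i \bb_{\{\mu\}})$. For an arbitrary $\A \in SL(2,\mathbb{C})$ one computes $\chi^{\A \A_0^{-1}}_{\star}(\xi_{\mathrm{can}}) = \chi^{\A}_{\star}\bigl(\chi^{\A_0^{-1}}_{\star}(\chi^{\A_0}_{\star}(\xi))\bigr) = \chi^{\A}_{\star}(\xi)$, using $\chi^{\A_0^{-1}}_{\star} \circ \chi^{\A_0}_{\star} = \mathrm{id}$. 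Hence $\A \in S$ (equivalently $\chi^{\A}_{\star}(\xi) = \xi_{\mathrm{can}}$) holds if and only if $\A \A_0^{-1}$ leaves $\xi_{\mathrm{can}}$ invariant, which by Corollary~\ref{inv1} is exactly the condition $\A \A_0^{-1} \in \A_{\mu_0'}$ with $\mu_0' = \tfrac14(\aa_{\{\mu\}} - i \bb_{\{\mu\}})$.

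It then remains to read off both inclusions from this equivalence. If $\A \in S$, the display above gives $\A \A_0^{-1} \in \A_{\mu_0'}$, so $\A \in \A_{\mu_0'} \cdot \A_0$. Conversely, for $g \in \A_{\mu_0'}$ one has $\chi^{g \A_0}_{\star}(\xi) = \chi^{g}_{\star}(\xi_{\mathrm{can}}) = \xi_{\mathrm{can}}$ because $g$ fixes $\xi_{\mathrm{can}}$, so $g \A_0 \in S$; thus $\A_{\mu_0'} \cdot \A_0 \subseteq S$. Together these give $S = \A_{\frac14(\aa_{\{\mu\}} - i \bb_{\{\mu\}})} \cdot \A_0$, as claimed. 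I do not expect any genuine obstacle here: the argument is purely formal once Proposition~\ref{transf} supplies the common target and Corollary~\ref{inv1} supplies the stabilizer. The only points requiring care are bookkeeping ones, namely getting the composition direction right (so that the coset comes out as the \emph{left} coset $\A_{\mu_0'} \cdot \A_0$ rather than a right coset) and noting that the conclusion is independent of which $\A_0 \in S$ is chosen, since any two reference elements differ by an element of $\A_{\mu_0'}$.
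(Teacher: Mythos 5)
Your proposal is correct and follows essentially the same route as the paper: fix a reference $\A_0$ satisfying \eqref{formA}, use the group morphism property to reduce membership in the solution set to the condition $\A \A_0^{-1} \in \A_{\mu_0'}$, and invoke Corollary \ref{inv1} for the stabilizer, with the uniqueness of the target canonical form coming from the invariance of $2\mu_0\mu_2 - \mu_1^2$ (equation \eqref{aabb}). The only (cosmetic) difference is that the paper phrases the invariance condition in terms of the parameter column vector $(\mu_0',0,2)$ under the representation $\Q_{\A}$, whereas you phrase it in terms of the pushed-forward vector field $\xi_{\mathrm{can}}$ itself.
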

{\it Proof.} Fix $\A_0$ satisfying \eqref{formA}. Any other element
$\A_1$ will satisfy \eqref{formA} if and only if
$\A_1 \A_0^{-1}$ leaves invariant the column vector $(\mu_0',0,2)$,
$4 \mu_0' := \aa_{\{\mu\}} - \bb_{\{\mu\}}$, i.e. if and only if $\A_1 \cdot \A_0
\in \A_{\mu_0'}$.

\vspace{5mm}

In the next corollary, we denote the entries of a matrix  $(U)$ by $U^{\mu}{}_{\nu}$, where the upper index $\mu$ refers to row and the lower index $\nu$ refers to column.
%%%%%%%%%%%%%%%%%%%%%%%%%%%%%%%5
%%% The following corollary still needs to be contrasted ith the file theorem2.red
%%%%%%%%%%%%%%%%%%%%%%%%%%%%%%%%%%%
\begin{corollary}\label{canonbasescoord}
  Let $F$ be a non-zero skew-symmetric endomorphism in $\mathbb{M}^{1,3}$ and let the matrix $(F)$ be defined by 
  $F(e_{\mu}) = F^{\nu}{}_{\mu} e_{\nu}$ where $\{ e_{\mu}\}$ is an orthonormal basis. Define $\{ b_x, b_y,\nu, \omega, a_x, a_y\}$
  so that  $(F)$ reads as in \eqref{formF}. Define
  $\mu_0, \mu_1, \mu_2$ by means of \eqref{mudata}  and let $\Lambda := \M(\A)$, where $\A$ is any of the matrices
  defined in Proposition \ref{transf}.  Then, in the basis $e'_{\nu} := \Lambda^{\mu}{}_{\nu} e_{\mu}$, the endomorphism $F$ takes the canonical form
  \eqref{canonFdim4}
  with $\aa - i \bb = 2 \mu_0 \mu_2 - \mu_1^2$. 
\end{corollary}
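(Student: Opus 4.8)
The plan is to reduce the claim about the matrix of $F$ to the already-solved problem of canonicalizing the associated GCKV, using the two dictionaries built in Sections \ref{GCKV_and_F} and \ref{seccanonGCKV}: the linear isomorphism $\Psi$ of \eqref{formF} between $\skwend{\mathbb{M}^{1,3}}$ and $\mbox{GCKV}(\mathbb{R}^2)$, and the morphism $\M$ satisfying $\Xi^{\M(\A)} = \chi^{\A}$. Canonicalizing $F$ as a matrix and canonicalizing $\xi_F$ as a GCKV then become the same statement read through $\Psi$.

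First I would observe that, because $\Lambda = \M(\A) \in SO^{\uparrow}(1,3)$ is a proper orthochronous Lorentz transformation, the new frame $\{ e'_{\nu}\}$ is again orthonormal with $e'_0$ timelike future directed, so the recipe \eqref{formF} defining $\Psi$ applies verbatim in this frame. Writing the matrix of the fixed endomorphism $F$ in $\{ e'_{\nu}\}$ as the conjugate $\Lambda^{-1}(F)\Lambda$, the six parameters read off from it through \eqref{formF} are exactly those that $\Psi$ assigns to the conjugated endomorphism $\Lambda^{-1}\circ F\circ\Lambda$. Thus ``$F$ is in canonical form in the frame $\{e'_\nu\}$'' is equivalent to ``$\Psi(\Lambda^{-1}\circ F\circ\Lambda)$ is a canonical GCKV.''

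Next I would feed this into the covariance identity $\xi_{F_{\Lambda}} = \Xi^{\Lambda}_{\star}(\xi_F)$ together with $\Xi^{\M(\A)} = \chi^{\A}$, which turns the endomorphism on the left into a M\"obius pushforward of $\xi_F = \xi_{\{\mu\}}$, the parameters $\mu$ being those obtained from $(F)$ through \eqref{formF} and \eqref{mudata}. Proposition \ref{transf}, applied to the M\"obius transformation $\chi^{\A}$ associated to $\Lambda$ with $\A$ of the form \eqref{formA}, then certifies that this pushforward equals the canonical GCKV with parameters $\mu_0' = \tfrac14(\aa_{\{\mu\}} - i\bb_{\{\mu\}})$, $\mu_1' = 0$, $\mu_2' = 2$. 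To close the loop I would invert \eqref{mudata} at these canonical values, obtaining $\nu = \omega = 0$, $a_x = 1$, $a_y = 0$, $b_x = \aa_{\{\mu\}}/2$, $b_y = \bb_{\{\mu\}}/2$, and substitute them into \eqref{formF}; up to the transpose relating the index placement of \eqref{formF} to that of \eqref{canonFdim4}, this reproduces the canonical matrix \eqref{canonFdim4} with $\aa = \aa_{\{\mu\}}$ and $\bb = \bb_{\{\mu\}}$. The stated value $\aa - i\bb = 2\mu_0\mu_2 - \mu_1^2$ then follows from the definition of $\aa_{\{\mu\}}, \bb_{\{\mu\}}$ and the M\"obius invariance of $2\mu_0\mu_2 - \mu_1^2$ recorded in \eqref{aabb}.

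I expect the main obstacle to be bookkeeping rather than substance: three conventions meet here, namely the active versus passive reading of the frame change $e'_\nu = \Lambda^\mu{}_\nu e_\mu$, the direction of the conjugation $\Lambda$ versus $\Lambda^{-1}$ relative to the pushforward $\Xi^{\Lambda}_{\star}$ in the covariance identity, and the transpose distinguishing the row convention of \eqref{canonFdim4} from \eqref{formF}. These must be tracked consistently so that the M\"obius transformation whose canonicalizing condition is \eqref{formA} is exactly the bare $\A$ defining $\Lambda = \M(\A)$, and so that the orientation choices keep $\Lambda$ inside $SO^{\uparrow}(1,3)$. Once the directions are pinned down, every remaining step is a direct substitution into results already proven.
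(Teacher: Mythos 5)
Your strategy is the intended one --- the paper states this corollary without proof precisely because it regards it as the concatenation of the dictionary \eqref{formF}, the equivariance $\xi_{F_{\Lambda}} = \Xi^{\Lambda}_{\star}(\xi_F)$, the morphism property $\Xi^{\M(\A)} = \chi^{\A}$, and Proposition \ref{transf} --- and your endgame is correct: the primed frame is orthonormal with $e'_0$ future directed, inverting \eqref{mudata} at $(\mu_0',\mu_1',\mu_2') = (\tfrac{1}{4}(\aa - i \bb), 0, 2)$ gives $\nu = \omega = a_y = 0$, $a_x = 1$, $b_x = \aa/2$, $b_y = \bb/2$, and substituting this into \eqref{formF} reproduces \eqref{canonFdim4} up to the transpose, with $\aa - i\bb = 2\mu_0\mu_2 - \mu_1^2$ by \eqref{aabb}. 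The gap is that the one item you defer as bookkeeping --- $\Lambda$ versus $\Lambda^{-1}$ --- is exactly where the proof fails to close, and it cannot be closed with ``the bare $\A$''. Your own first paragraph identifies the matrix of $F$ in the frame $e'_{\nu} = \Lambda^{\mu}{}_{\nu} e_{\mu}$ as that of $F_{\Lambda^{-1}} = \Lambda^{-1}\circ F \circ \Lambda$; the covariance identity then yields $\Psi(\Lambda^{-1}\circ F\circ \Lambda) = \Xi^{\Lambda^{-1}}_{\star}(\xi_F) = \chi^{\A^{-1}}_{\star}(\xi_F)$, whereas Proposition \ref{transf} certifies canonicality of $\chi^{\A}_{\star}(\xi_F)$. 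The two cannot be silently exchanged: by Corollary \ref{coset} the solutions of \eqref{formA} form a coset of the invariance subgroup of Corollary \ref{inv1}, not a subgroup, so this set is in general not closed under $\A \mapsto \A^{-1}$.

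The failure is visible in a two-line example. Take $\xi_F = \lambda z^2 \partial_z + \lambda \overline{z}^2\partial_{\overline{z}}$ with real $\lambda>0$, $\lambda \neq 1$, i.e. $\mu = (0,0,2\lambda)$, so that \eqref{formF} gives $F(e_0) = -\lambda e_2$, $F(e_1) = \lambda e_2$, $F(e_2) = -\lambda(e_0+e_1)$, $F(e_3)=0$. The choice $\gamma = 0$, $\delta = \lambda^{-1/2}$ in \eqref{formA} gives $\A = \mathrm{diag}(\lambda^{1/2}, \lambda^{-1/2})$, and \eqref{MA} makes $\M(\A)$ the boost $e'_0 = \cosh\phi\, e_0 + \sinh\phi\, e_1$, $e'_1 = \sinh\phi\, e_0 + \cosh\phi\, e_1$, $e'_2 = e_2$, $e'_3 = e_3$ with $e^{\phi} = \lambda^{-1}$. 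Then $F(e'_0) = -\lambda^2 e'_2$ and $F(e'_1) = +\lambda^2 e'_2$, i.e. the form \eqref{formF} with $a_x = \lambda^2$, which is not of the shape \eqref{canonFdim4} for any values of $\aa, \bb$ unless $\lambda = 1$; moreover no other admissible choice of $\gamma,\delta$ repairs this. The canonical form is attained instead in the frame built from $\M(\A)^{-1} = \M(\A^{-1})$. So what your argument actually proves, once the directions are tracked under the paper's stated conventions, is the statement with $\Lambda := \M(\A)^{-1}$ (equivalently: that the matrix of $F_{\Lambda} = \Lambda \circ F \circ \Lambda^{-1}$ in the \emph{original} basis is canonical); the corollary exactly as printed would require $\A^{-1}$, rather than $\A$, to be of the form \eqref{formA}. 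The honest completion of your proposal is therefore to exhibit this inverse explicitly --- and to note that the printed statement carries it implicitly --- rather than to assert that the three conventions cancel; the transpose and the orthonormality bookkeeping do cancel, but this one does not.
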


   In Proposition \ref{propcanonF4} we showed the existence of the canonical form of $F\in \skwend{\mathbb{M}^{1,3}}$, and this motivated the Definition \ref{defcanongkvf} of canonical form of GKVFs. However, it is only in Corollary \ref{canonbasescoord} that we have been able to (easily) find the explicit change of basis that takes $F$ to its canonical form. This is possible because we are dealing with low dimensions and the GCKVFs take a very simple expression in complex coordinates of the Riemann sphere,
     but this is a much more difficult problem in higher dimensions.

   We can however easily derive the three-dimensional case
       as a simple consequence. For that we consider,  as usual, the extension $\widehat F \in \skwend{\mathbb{M}^{1,3}}$ of $ F \in \skwend{\mathbb{M}^{1,2}}$ described before Corollary \ref{propcanonF3}.
     In the basis $\lrbrace{e_0,e_1,e_2,e_3:=E_3}$, $\widehat F$ has $a_y = b_y = \omega = 0$, so the quantities $\mu_0, \mu_1, \mu_2$ defined in \eqref{mudata} are real. In order to apply Corollary \ref{canonbasescoord}
       to find the change of
       orthonormal basis $\{e_0,e_1,e_2\}$ that brings $F$ into its canonical form we simply need to impose that $e_3' = e_3$,
     which amounts to ${\Lambda^0}_3 = {\Lambda^1}_3 = {\Lambda^2}_3 = 0$ and ${\Lambda^0}_3 = 1$.   It is easy to show (recall that
       $\alpha,\beta$ are expressed in terms of $\gamma, \delta$ in the matrix
$\A$ of Corollary \ref{canonbasescoord}) that
     the general solution to the first three equations is $\gamma \bar \delta = \bar \gamma \delta$. The condition $\Lambda^0{}_3=1$ is then
     \begin{align*}
       \frac{1}{2} \delta \bar \delta \mu_2 -  \gamma \bar\delta \mu_1 + \gamma \bar \gamma \mu_0 = 1.
     \end{align*}
     Multiplying by $\delta$ and using the determinant condition
     in \eqref{formA} implies $\delta = \bar{\delta}$, while multiplying by $\gamma$ gives $\gamma = \bar\gamma$, and then
     $\Lambda^0{}_3 =1$ is just identical to the determinant condition  so no more consequences can be extracted. Thus all parameters $\alpha,\beta, \gamma, \delta$ are real.
   Summarizing:
   \begin{corollary} 
  Let $F$ be a non-zero skew-symmetric endomorphism of $\mathbb{M}^{1,2}$ and the matrix $(F)$ be defined by 
  $F(e_{i}) = F^{j}{}_{i} e_{j}$ where $\{ e_{i}\}_{i=0,1,2}$ is an orthonormal basis. Define $\mu_0 := ({F^1}_3 - {F^2}_3)/2,~\mu_1 := -{F^1}_2,~\mu_2 := -({F^1}_3 + {F^2}_3)$. For
    any pair of real numbers $\gamma, \delta$ satisfying $\delta^2 \mu_2 - 2 \gamma \delta \mu_1 + 2 \gamma^2 \mu_0 = 2$, let $\alpha := (\delta \mu_2 - \gamma \mu_1)/2$ and $\beta := \delta \mu_1/2 - \gamma \mu_0$.
    %for $\gamma, \delta \in \mathbb{R}$ such that $\alpha \delta - \beta \gamma =1$. 
  Then, in the basis $e'_{i} := \Lambda^{j}{}_{i} e_{j}$, with 
  \begin{equation}
   \Lambda := \left(
\begin{array}{ccc}
 \frac{1}{2} \left(\alpha ^2+\beta ^2+\gamma ^2+\delta ^2\right) & \frac{1}{2} \left(-\alpha ^2+\beta ^2-\gamma ^2+\delta ^2\right) & \alpha  \beta +\gamma  \delta  \\
 \frac{1}{2} \left(-\alpha ^2-\beta ^2+\gamma ^2+\delta ^2\right) & \frac{1}{2} \left(\alpha ^2-\beta ^2-\gamma ^2+\delta ^2\right) & -\alpha\beta + \gamma  \delta  \\
 \alpha  \gamma +\beta  \delta  & -\alpha\gamma + \beta  \delta  & \alpha \delta + \beta  \gamma  \\
\end{array}
\right),
\end{equation}
  the endomorphism $F$ takes the canonical form
  \eqref{canonFdim3}
  with $\aa = 2 \mu_0 \mu_2 - \mu_1^2$. 
\end{corollary}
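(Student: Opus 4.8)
The plan is to obtain the three-dimensional statement as the restriction of Corollary~\ref{canonbasescoord} applied to the four-dimensional extension $\widehat F\in\skwend{\mathbb{M}^{1,3}}$ of $F$ introduced just before Corollary~\ref{propcanonF3}. First I would observe that, in the basis $\{e_0,e_1,e_2,e_3=E_3\}$, one has $\widehat F(e_3)=0$, so the representation \eqref{formF} of $\widehat F$ has $a_y=b_y=\omega=0$. Consequently the complex numbers $\mu_0,\mu_1,\mu_2$ read off through \eqref{mudata} are all \emph{real}, and $\bb_{\{\mu\}}=0$. Corollary~\ref{canonbasescoord} then supplies, for each admissible $\A$, the Lorentz matrix $\Lambda=\M(\A)$ of \eqref{MA} that carries $\widehat F$ to the canonical form \eqref{canonFdim4} with $\aa-i\bb=2\mu_0\mu_2-\mu_1^2$ (and $\bb=0$).

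The extra condition needed to descend to $\mathbb{M}^{1,2}$ is that the new frame fix the spacelike vector $e_3$, so that $\{e_0',e_1',e_2'\}$ remains a basis of $\mathbb{M}^{1,2}$. Writing $e_\nu'=\Lambda^\mu{}_\nu e_\mu$, the requirement $e_3'=e_3$ amounts to $\Lambda^0{}_3=\Lambda^1{}_3=\Lambda^2{}_3=0$ together with $\Lambda^3{}_3=1$. I would insert the parametrization of $\A$ from Corollary~\ref{canonbasescoord}, in which $\alpha$ and $\beta$ are fixed \emph{real}-linear combinations of $\gamma,\delta$ (precisely because the $\mu_a$ are real), into the third column of \eqref{MA}. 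A short computation then shows that this linear dependence forces the three vanishing conditions to collapse to the single real constraint $\gamma\overline{\delta}=\overline{\gamma}\delta$.

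It remains to impose $\Lambda^3{}_3=1$, which in these variables reads $\frac{1}{2}\delta\overline{\delta}\,\mu_2-\gamma\overline{\delta}\,\mu_1+\gamma\overline{\gamma}\,\mu_0=1$. Multiplying this identity by $\delta$ and removing the combination $\frac{1}{2}\delta^2\mu_2-\gamma\delta\mu_1$ by means of the determinant condition in \eqref{formA} yields $\delta=\overline{\delta}$; multiplying instead by $\gamma$ yields $\gamma=\overline{\gamma}$. Hence $\gamma,\delta$—and therefore $\alpha,\beta$—are real, and $\Lambda^3{}_3=1$ becomes identical to the determinant condition, so no further restriction arises. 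Substituting real parameters back into \eqref{MA}, every entry carrying a factor $i$ vanishes and the third row and column reduce to $(0,0,0,1)$; the surviving upper-left $3\times3$ block is exactly the matrix in the statement. Restricting \eqref{canonFdim4} (with $\bb=0$) to $\spn{e_0,e_1,e_2}$ produces \eqref{canonFdim3} with $\aa=2\mu_0\mu_2-\mu_1^2$.

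The step I expect to be the main obstacle is the reality argument of the third paragraph: one must use the explicit way $\alpha,\beta$ depend on $\gamma,\delta$ to collapse the three off-diagonal conditions to one, and then a slightly non-obvious manipulation—multiplying the trace condition by $\delta$ and by $\gamma$ and invoking the determinant constraint—to force $\gamma,\delta\in\mathbb{R}$. Once reality is secured, matching the $3\times3$ block and reading off $\aa$ is routine bookkeeping.
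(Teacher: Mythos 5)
Your proposal is correct and follows essentially the same route as the paper: extending $F$ to $\widehat F\in\skwend{\mathbb{M}^{1,3}}$, noting that $a_y=b_y=\omega=0$ makes the $\mu_a$ real, imposing $e_3'=e_3$ on the frame from Corollary \ref{canonbasescoord}, collapsing the three off-diagonal conditions to $\gamma\bar\delta=\bar\gamma\delta$, and then multiplying $\Lambda^3{}_3=1$ by $\delta$ and by $\gamma$ against the determinant condition to force $\gamma,\delta\in\mathbb{R}$, after which the $3\times 3$ block is read off. This matches the paper's own derivation (given in the paragraph preceding the corollary) step for step, including the key reality argument you flagged as the main obstacle.
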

%    \begin{corollary}
%     Let $F$ be a non-zero skew-symmetric endomorphism of $\mathbb{M}^{1,2}$ and consider the isometric identification $\mathbb{M}^{1,3} \cong \mathbb{M}^{1,2} \oplus \mathbb{E}_1$. 
%     Define $\hat F \in \skwend{\mathbb{M}^{1,3}}$ by $\restr{\hat F}{\mathbb{M}^{1,2}} = F$ and $\restr{\hat F}{\mathbb{E}_1} = 0$ and let $e'_{\nu} := \Lambda^{\mu}{}_{\nu} e_{\mu}$ the canonical basis of $\hat F$given in Corollary \ref{canonbasescoord} with the additional constraint $\gamma \bar \delta =  (\mu_1^2 - \mu_0 \mu_2)/4$. Then in the subasis $\lrbrace{e'_1,e'_2,e'_3}$, $F$ takes the canonical form  \eqref{canonFdim3}. Moreover the basis is given by $e'_{j} := \Lambda^{i}{}_{j} e_{i}$, for $i,j = 0,1,2$, where $\Lambda^{i}{}_{j}$ defines a Lorentz transformation of $\mathbb{M}^{1,2}$.
%    \end{corollary}

\section{Adapted coordinates to a GKCV}\label{secadaptedcoords}

So far we have explored the action of the M\"obius group on a GCKV and have found that for any such vector, there exists a set of transformations that brings
it into a canonical form. The perspective so far has been active. We now change
the point of view and exploit the previous results to find coordinate systems
in (appropriate subsets of) $\mathbb{E}^2$ that rectify a given (and fixed) GKCV $\xi$.

Consider $\mathbb{E}^2$ and fix a non-trivial GCKV field $\xi$.
Let us select a Cartesian coordinate system $\{ x,y\}$
and define, as before $z = (1/2) (x-iy)$ and $\overline{z} = (1/2) ( x+ i y)$.

When expressed in the $\{ z, \overline{z}\}$ coordinate system $\xi$ will be $\xi = \xi_{\{\mu\}}$
for some triple of complex numbers $\{ \mu \} = \{ \mu_0, \mu_1, \mu_2\}$. We now view the M\"obius transformation as a change of coordinates. Specifically, given
$\alpha,\beta,\gamma, \delta$ complex constants satisfying
$\alpha\delta - \beta \gamma =1$, the quantity
\begin{align}\label{omegaz}
  \omega =  \frac{\alpha z + \beta}{\gamma z + \delta}
\end{align}
and its complex conjugate $\overline{\omega}$ define a coordinate system on $\mathbb{R}^2 \setminus \{ \gamma z+ \delta =0 \}$. The inverse of this coordinate transformation is, obviously,
  \begin{align}
    z = \frac{\delta \omega - \beta}{- \gamma \omega + \alpha}.
    \label{z-omega}
  \end{align}
  It is well-known that transformations of a manifold can be dually seen as coordinate changes in suitable restricted coordinate patches. We will refer to
  \eqref{z-omega} as a M\"obius coordinate change.
  With this point of view, we may express $\xi$ in the coordinate system $\{ \omega,\overline{\omega}\}$ and the duality above implies that $\xi$ takes the form 
  \begin{align*}
    \xi =
    \left ( \mu_0' + \mu_1' \omega + \frac{1}{2} \mu_2' \omega^2
    \right ) \partial_{\omega} +
    \left (  \overline{\mu_0'} + \overline{\mu_1'} \overline{\omega} + \frac{1}{2} \overline{\mu_2'} \overline{\omega}^2
    \right ) \partial_{\overline{\omega}} 
      \end{align*}
      with $\{\mu_0', \mu_1', \mu_2'\}$ given by \eqref{transmu} (this can also be checked by direct computation).

      We may now take $\{\alpha,\beta,
      \gamma, \delta\}$ so that corresponding matrix $\A$ satisfies \eqref{formA}. It follows that $\xi$ takes the canonical form
      \begin{align}
        \xi :=
        \left ( \frac{1}{4} \left ( \aa_{\{\mu\}} - i \bb_{\{\mu\}} \right )+ \omega^2
        \right ) \partial_{\omega}
        + \left ( \frac{1}{4} \left ( \aa_{\{\mu\}} + i \bb_{\{\mu\}} \right ) + \overline{\omega}^2
        \right ) \partial_{\overline{\omega}}. \label{canonical}
      \end{align}
      By  Lemma \ref{orto}, the vector $\xi^{\perp}$ defined by
      $\xi^{\perp} := \xi_{\{  i \, \mu \}}$  is a GCKV
      orthogonal to $\xi$ everywhere, with the same pointwise norm as $\xi$
      and satisfying $[\xi, \xi^{\perp} ] =0$. In particular
      $\xi$ and $\xi^{\perp}$ are linearly independent except at points where both vanish identically. As a consequence, it makes sense to tackle the problem of finding coordinates that rectify $\xi$ by trying to determine a coordinate system
      $\{ v_1, v_2 \}$ (on a suitable subset of $\mathbb{R}^2$) such that
      \begin{align*}
        \xi = \partial_{v_1}, \quad \quad \xi^{\perp} = \partial_{v_2}.
      \end{align*}
      Assume that we have already transformed into the coordinates $\{ \omega,
      \overline{\omega}\}$ where $\xi$ (and also
      $\xi^{\perp}$) take their canonical forms
      \begin{align}
        \xi = \left ( \frac{1}{4} Q e^{-2 i \arg} + \omega^2 \right )
        \partial_{\omega} + \mbox{c.c}, \quad \quad
        \xi^{\perp} = \left ( \frac{i }{4} Q e^{ - 2 i \arg} + i \omega^2 \right ) \partial_{\omega} + \mbox{c.c} \label{canform}
      \end{align}
      where we have defined the real constants $Q \geq 0$ and $\arg \in [0,\pi)$ by
      \begin{align}
        \aa_{\{\mu\}} - i \bb_{\{\mu\}} =Q e^{-2 i \arg} \label{defQsigma}
      \end{align}
      and  where $\mbox{c.c.}$ stands for complex conjugate of the previous term. We are seeking a coordinate system $\{ \zeta, \overline{\zeta}\}$ defined by
              \begin{align*}
                \zeta := \frac{1}{2} \left ( v_1 + i v_2 \right )
              \end{align*}
              such that
              \begin{align*}
              \xi - i \xi^{\perp} = \partial_{\zeta} 
\end{align*}
(this is because $\partial_\zeta = \partial_{v_1} - i \partial_{v_2}$). Since
$\xi - i \xi^{\perp} = 2 \left ( \frac{1}{4} Q e^{-2 i \arg} + \omega^2 \right )
\partial_{\omega}$ the coordinate change musty satisfy the ODE
\begin{align*}
  \frac{d \zeta}{d \omega} = \frac{1}{2 \omega^2 + \frac{Q}{2} e^{-2 i \arg} }.
  \end{align*}
  This equation can be integrated immediately. The result is
  \begin{align}
      \zeta(\omega) =  \zeta_0 + \frac{-i e^{i \arg}}{2 \sqrt{Q}}
      \ln \left (
      \frac{\omega - i \frac{\sqrt{Q}}{2} e^{- i \arg}}
    {\omega + i \frac{\sqrt{Q}}{2} e^{- i \arg}} \right )
    %& 
      \quad \quad \Longleftrightarrow \quad \quad
     % \nonumber \\
     %&
       \omega(\zeta;\zeta_0) = \frac{i \sqrt{Q} e^{- i \arg}}{2} \frac{1 + e^{2 i \sqrt{Q} e^{- i \arg} (\zeta- \zeta_0)}}{1-  e^{2 i \sqrt{Q} e^{- i \arg} (\zeta-\zeta_0)}},
                                                              \label{omega-zeta}
   %   = \frac{-i e^{-i \arg}}{2 \sqrt{Q}} \ln (\s)
  \end{align}
  where $\zeta_0$ is an arbitrary complex constant.
  These expressions include the case $Q=0$ as a limit.  Explicitly
  \begin{align}
    \zeta - \zeta_0 = - \frac{1}{2 \omega} \quad \quad \Longleftrightarrow \quad \quad \omega = - \frac{1}{2 (\zeta- \zeta_0)}.
    \label{Q=0}
    \end{align}
Since the logarithm is a multivalued complex function, one needs to be
   careful concerning the domain and range of this coordinate change. 
In the $\{\omega, \overline{\omega} \}$ plane, the vector field $\xi$  vanishes at the two points (cf.
\eqref{canform}) $\omega = \pm i \frac{\sqrt{Q}}{2} e^{-i \arg}$ (which degenerate to the point at the origin when $Q=0$). It is clear that neither of these
points will be covered by the $\{ \zeta, \overline{\zeta}\}$ coordinate system. The case $Q=0$ is very simple because, from \eqref{Q=0}, it is clear
that the $\{ \zeta, \zeta \}$ coordinate system covers the whole $\{ \omega, \overline{\omega} \}$ plane except the origin.  Since the point at infinity
in the $\omega$-plane is sent to the point $\zeta_0$ in the $\zeta$-plane we conclude that the $\{ \zeta, \overline{\zeta}\}$ coordinate covers the whole Riemann sphere except the single point where $\xi$ vanishes.     

When $Q \neq 0$, the situation  is more interesting. The reason in the multivaluedness of the logarithm. This suggests that the coordinate change may in
fact define a larger manifold that covers the original one. In order to discuss this, let is introduce 
the auxiliary function
  \begin{align*}
    \s := \frac{\omega - i \frac{\sqrt{Q}}{2} e^{- i \arg}}
    {\omega + i \frac{\sqrt{Q}}{2} e^{- i \arg}}.
    \end{align*}
    This is a M\"obius transformation, so it maps diffeomorphically $\mathbb{C} \cup \{ \infty \}$ onto itself. The two zeroes of $\xi$ are mapped
    respectively to the origin and infinity in the $\s$ variable.  Since
    \eqref{omega-zeta} can be written as
    $    \zeta - \zeta_0= - i e^{i \arg} \ln ( \s) / (2 \sqrt{Q})$ and $\ln (\s) = \ln |\s| + i (\mbox{arg}(\s) + 2 \pi m), m \in \mathbb{N}$,  a single value of $\s$
    may be mapped to an infinite number of points depending on the branch on the branch of logarithm one takes.
    One may decide to restrict the $\{ \zeta, \overline{\zeta}\}$-domain to be the band $B:= \{ \zeta \in \mathbb{C} : \mbox{Im} (2 i \sqrt{Q} e^{-i \arg} (\zeta - \zeta_0)) \in (0,2\pi) \}$ and then the coordinate change $\zeta (\s)$ defines a diffeomorphism between $\mathbb{C} \setminus \{ \s = (r,0), r \geq 0  \}$
    into $B$.
    Let $\partial_1 B$ be the connected component of $\partial B$ defined by
    $\mbox{Im} (2 i \sqrt{Q} e^{-i \arg} (\zeta - \zeta_0)) =0$ and $\partial_2 B$ the other component 
    $\partial_2 B := \{ \mbox{Im} (2 i \sqrt{Q} e^{-i \arg  } (\zeta- \zeta_0)) = 2 \pi\}$, then
      the semi-line $\{ \s  = r\}$, with $r$ real and positive  and $\mathrm{arg}(\mathfrak{z})\in \lrbrace{0, 2 \pi}$, is mapped to the respective points
      $\zeta_1(r) = - i e^{i \arg} \ln(r) /(2\sqrt{Q}) \in \partial_1 B$ and
       $\zeta_2(r) = - i e^{ i \arg} \ln(r) /(2\sqrt{Q}) + \pi e^{i \arg}/\sqrt{Q} \in \partial_2 B$. This shows that
      these two boundaries are to identified  by means of the translation defined by the shift 
      \begin{align}
        \zeta_t := \pi e^{i \arg} /\sqrt{Q}. \label{translation}
      \end{align}
      The topology of the resulting
    manifold is $\mathbb{R} \times \mathbb{S}^1$. This is in agreement with the fact that $\xi$ vanishes at precisely two points of the Riemann sphere, and the complement of two points on a sphere is indeed a cylinder. The alternative is to let $\zeta$
    take values in  all $\mathbb{C}$ and consider the inverse map
    \begin{align*}
      \s (\zeta) := e^{2 i \sqrt{Q} e^{- i\arg} (\zeta - \zeta_0)}.
    \end{align*}
    It is clear that this defines an infinite covering of the $\s$-punctured complex plane $\mathbb{C} \setminus \{ 0\}$.  As described above,
   the fundamental domain of this covering is the (open) band $B$ limited by the lines (see figure \ref{band}, where we have set $\zeta_0=0$ for definiteness)
        \begin{align*}
          \zeta_1(s) & =  \zeta_0 + \frac{- i e^{i \arg}  s}{2 \sqrt{Q}}, \quad  \quad \quad \quad  s \in \mathbb{R}, \\
          \zeta_2(s) & = \zeta_0 + \frac{- i e^{ i \arg}  s}{2 \sqrt{Q}} + \zeta_t, \quad  \quad s \in \mathbb{R}.
        \end{align*}
  
        \begin{figure}   
          \begin{center}
          \label{band}
          \psfrag{B}{$B$}
          \psfrag{B1}{$\partial_1 B$}
          \psfrag{B2}{$\partial_2 B$}
\psfrag{z0}{$\frac{2 \pi}{\sqrt{Q}}$}          
  \psfrag{sig}{$\arg$}
  \psfrag{xi}{$\xi$}
  \psfrag{xip}{$\xi^{\perp}$}
  \psfrag{v1}{$v_1$}
  \psfrag{v2}{$v_2$}  
  \includegraphics[height=7cm]{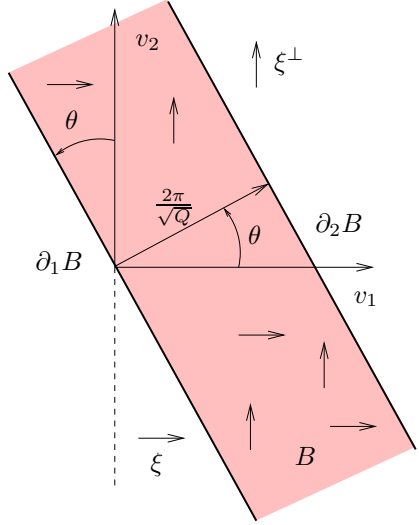}  
  \caption{Domain of the complex coordinate $\zeta = \frac{1}{2} (v_1 + i v_2)$ adapted to $\xi = \partial_{v_1}$ and $\xi^\perp = \partial_{v_2}$. The parameters $Q$ and $\arg$ determine the width and tilt of the band respectively. The factor two in the distance between the boundaries (compare \eqref{translation}) arises because $\zeta = \frac{1}{2} (v_1 + i v_2)$. }
\end{center}
\end{figure}

        \begin{figure}  
          \begin{center}
          \label{lines}
          \psfrag{B}{$B$}
          \psfrag{B1}{$\partial_1 B$}
          \psfrag{B2}{$\partial_2 B$}
\psfrag{z0}{$\frac{2\pi}{\sqrt{Q}}$}          
  \psfrag{sig}{$\arg$}
  \psfrag{xi}{$\xi$}
  \psfrag{xip}{$\xi^{\perp}$}
  \psfrag{v1}{$v_1$}
  \psfrag{v2}{$v_2$}
  \includegraphics[height=7cm]{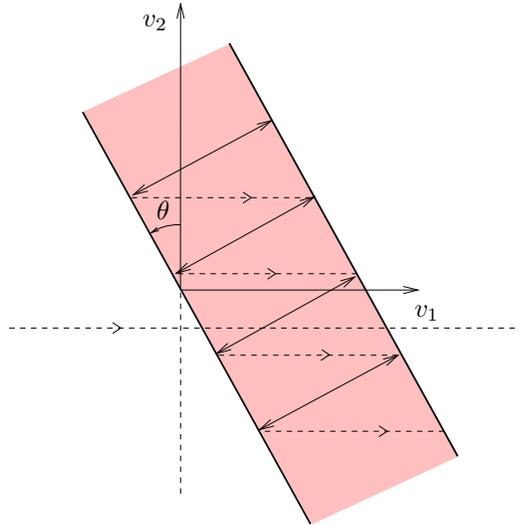}
  \caption{Integral lines of $\xi$ (dashed line). The points joint by arrows are identified by the translation defined by \eqref{translation}.   }
  \end{center}
\end{figure}    
         
        The $\zeta$-complex plane therefore corresponds to the complete unwrapping        of the cylinder, i.e. to its universal covering.
                In the $\{ \zeta, \overline{\zeta}\}$ coordinate system we have
\begin{align*}
  \xi = \frac{1}{2} \left ( \partial_\zeta + \partial_{\overline{\zeta}} \right ), \quad \quad
  \xi^{\perp} = \frac{i}{2} \left ( \partial_\zeta - \partial_{\overline{\zeta}} \right ),
\end{align*}
so $\xi$ points along the real axis and $\xi^{\perp}$ into the imaginary axis. The angle of the boundaries $\partial_1 B$ (and $\partial_2 B$) with the
real axis is $\frac{\pi}{2} + \arg$. For generic values of
$\arg$ it follows that the integral lines
of $\xi$ descend to the quotient $\overline{B}$ (with the boundaries
identified as above) as open lines that asymptote to the two points at
infinity along the band
(as in Figure \ref{lines}). Observe that these two asymptotic values correspond  to 
$\s = 0$ or $\s = \infty$, which  correspond  to the two zeros
of $\xi$. Thus, the integral lines of $\xi$ start asymptotically at one of its zeros and  approaches asymptotically  the other zero. Along the way, the integral lines circle each zero an infinite number of times (because the projection
to the lines parallel to the real axis descend to the quotient in such a way
that they intersect the boundaries of $B$ an infinite number of times). The only exception to this behaviour is when 
$\arg  = \frac{\pi}{2}$ or when $\arg  =0$ (recall that by construction $\arg \in [0,\pi)$). In the former case, the 
integral lines of $\xi$, never leave the fundamental domain. This means that the curves asymptote to the two zeros of $\xi$ and they never encircle them along the way. The case $\arg = 0$ corresponds to the situation when
the projection of the integral lines of $\xi$ define closed curves
on $\overline{B}$ with the boundaries identified . This is the situation when the integral curves of $\xi$ in the
original $\{ \omega, \overline{\omega}\}$ plane are topological circles (which
degenerate to points at the zeroes of $\xi$).

It is interesting to see how the limit $Q=0$ is recovered in this setting.
The translation vector that identifies points in the boundary $\partial_1 B$
with points in the boundary $\partial_2 B$ diverges as $Q \rightarrow 0$. Thus, the band $B$ becomes larger
and larger until it covers the whole $\zeta$-plane in the limit. On other words, the $\zeta$-coordinate is no longer a covering of the original
$\omega$-coordinate. In the limit, $\xi$ vanishes at only one point
in the $\omega$-plane (the origin) which is sent to infinity in the
$\zeta$-coordinates. It is by the process of the band $B$ becoming wider and wider
that the limits at infinity along the band, which correspond to two points for any
non-zero value of $Q$, merge into a single point when $Q=0$.  The process also explains in which sense the parameter $\arg$, which measures the inclination
of the band $B$ becomes irrelevant in the limit $Q =0$, in agreement with the fact that \eqref{defQsigma} lets $\arg$ take any value when $\aa_{\{\mu\}} - i \bb_{\{\mu\}}$ (and hence also $Q$) vanishes.

In all the expressions above we have maintained the additive integration
constant $\zeta_0$, instead of setting it to zero as the simplest choice.
The reason is that $\zeta_0$ can be directly connected with the freedom one has in performing the coordinate change \eqref{z-omega} that brings $\xi$ into its canonical form. To understand this we simply note that, from \eqref{omega-zeta} one can check that the following identity holds
%%%%%%%%%%%%%%%%%%%%%%%%%%%%%
%%% see file adapted3.red%%%%
%%%%%%%%%%%%%%%%%%%%%%%%%%%%%
\begin{align*}
  \omega(\zeta; \zeta_0) = \frac{
  \cos  \left ( \sqrt{Q} e^{-i\arg} \zeta_0 \right )
  \omega(\zeta;0)  - \frac{\sqrt{Q}}{2} e^{-i \arg}
  \sin \left ( \sqrt{Q} e^{-i\arg} \zeta_0 \right )}{\frac{2}{\sqrt{Q}}  e^{i \arg}
  \sin \left ( \sqrt{Q} e^{-i\arg} \zeta_0 \right ) \omega(\zeta;0)+
  \cos  \left ( \sqrt{Q} e^{-i\arg} \zeta_0 \right )}.
  \end{align*}
  Thus, the relation between $\omega(\zeta;0)$ and
  $\omega(\zeta;\zeta_0)$ is a M\"obius transformation defined by the matrix
  \begin{align*}
    \left (     \begin{array}{cc}
                  \cos \left ( \sqrt{Q} e^{-i\arg} \zeta_0 \right ) & 
  - \frac{\sqrt{Q}}{2} e^{-i \arg}
                                                                        \sin \left ( \sqrt{Q} e^{-i\arg} \zeta_0 \right ) \\
                    \frac{2}{\sqrt{Q}}  e^{i \arg}
                                                                        \sin \left ( \sqrt{Q} e^{-i\arg} \zeta_0 \right )                              &
                                                                        \cos \left ( \sqrt{Q} e^{-i\arg} \zeta_0 \right )
                \end{array}
                                                                        \right ).
  \end{align*}
  It is immediate to check that, letting $\zeta_0$ take any
  value, one runs along the full subgroup $\A_{\frac{1}{4} Q  e^{-2 i \arg}}$ defined in Corollary \ref{inv1}. Thus, by Corollary \ref{coset}, the freedom in performing the coordinate change  \eqref{z-omega} that transforms
    $\xi$ into its canonical form can be absorbed into the additive constant
    $\zeta_0$, and vice-versa. Having understood this, we will set $\zeta_0 =0$
     from now on.

So far we have considered $\xi$ without referring  to any specific metric. We
now endow $\mathbb{R}^2$ coordinated by $\{x,y\}$ (or $\{ z, \overline{z}\}$) with the following class of metrics. Let $u :=\{u_0, u_1, u_2, u_3\} \in \mathbb{R}^4$, $u \neq 0$, and define
\begin{align}
  g_u &:= \frac{1}{\Omega_u^2} \left (dx^2 + dy^2  \right ) = \frac{1}{\Omega_u^2} 4 dz d \overline{z}, \label{metgu} \\
  \Omega_u &:= u_0 + u_1 + u_2 x + u_3 y + \frac{1}{4} (u_0 - u_1) (x^2+ y^2)
   =  u_0 (1+ z \overline{z}) +  u_1 (1 - z \overline{z} ) + u_2 (z + \overline{z} ) + u_3 i ( z - \overline{z}) . \nonumber %\label{gudef} 
\end{align}
The Gauss curvature of $g_u$ is $\kappa_u:= u_0^2 -u_1^2 - u_2^2 - u_3^2$.
Since $g_{-u} = g_{u}$, there is a sign freedom in $u$ that we must keep in mind.
When $\kappa_u \geq 0$, then it must be that $u_0 \neq 0$ and the sign freedom may be fixed by the requirement $u_0 > 0$. However, this is no longer possible when $\kappa_u <0$.

Observe that  $g_{\{ u_0=\frac{1}{2}, u_1=\frac{1}{2}, u_2=0, u_3= 0\} } = g_E := 4 dz d \overline{z}$.
%We also particularize one of the metrics
%of constant curvature equal to one, namely the metric $g_{\{u_0=1. u_1=0.u_2=0.u_3=0\}} := g_{\mathbb{S}^2}$ which we call the standard spherical metric.
Under a
M\"obius coordinate change \eqref{z-omega}, the metric $g_u$ takes the form
\begin{align*}
  g_u &= \frac{1}{\Omega_{u'}^2} 4 d\omega d \overline{\omega}, \\
  \Omega_{u^{\prime}}  & =  u^{\prime}_0 (1+ \omega \overline{\omega})
   +  u^{\prime}_1 (1 - \omega \overline{\omega} )
  + u^{\prime}_2 (\omega + \overline{\omega} ) + u^{\prime}_3 i (\omega - \overline{\omega}),
\end{align*}
where the constants   $u' := \{u_0^{\prime}, u_1^{\prime}, u_2^{\prime}, u_3^{\prime} \}$
are obtained from  $u = \{ u_0, u_1, u_2, u_3\}$ by the transformation
%%%%%%%%%%%%%%%%%%%%%%%%%%%%%%%%%%%%%%%%%%%%
%%% see file 2dimMetTrans_final.red %%%%%%%%
%%%%%%%%%%%%%%%%%%%%%%%%%%%%%%%%%%%%%%%%%%%%
\begin{align}
\epsilon  \left ( \begin{array}{c}
            u_0^{\prime} \\
            u_1^{\prime} \\
            u_2^{\prime} \\
            u_3^{\prime}
          \end{array}
  \right ) = & 
               \underbrace{ \frac{1}{2} \left (
    \begin{array}{cccc}
\al \alb + \be\beb + \ga\gab +\del\delb &
\al \alb - \be\beb + \ga\gab -\del\delb &
  - \al \beb - \be\alb - \ga\delb -\del\gab &
  i(  \al \beb - \be\alb + \ga\delb -\del\gab) \\
  \al \alb + \be\beb - \ga\gab -\del\delb & 
\al \alb - \be\beb - \ga\gab + \del\delb & 
 - \al \beb - \be \alb  + \ga\delb + \del \gab & 
 i(\al \beb - \be \alb  - \ga \delb + \del \gab) \\
 - (\al \gab + \be \delb + \ga \alb  +  + \del \beb) &
  - \al \gab + \be \delb - \ga \alb  + \del \beb & 
  \al \delb   + \be \gab + \ga\beb + \del \alb & 
  i( - \al \delb   + \be \gab - \ga \beb + \del \alb
  ) \\
  i ( - \al \gab - \be \delb + \ga \alb  
  + \del \beb) &
  i( - \al \gab + \be \delb + \ga \alb  
  - \del \beb) & 
  i(\al \delb 
  + \be \gab - \ga \beb - \del \alb ) &
\al \delb - \be \gab - \ga \beb   + \del \alb   
    \end{array}
\right )}_{\Lambda_{(\alpha, \beta,\gamma,\delta)}}  \left ( \begin{array}{c}
            u_0 \\
            u_1 \\
            u_2 \\
            u_3
          \end{array}
\right )  \label{LambdaMoeb}
\end{align}
where $\epsilon := \pm 1$. This sign reflects the impossibility (in general)
of choosing between $u$ and $-u$.  One can check that
$\Lambda_{(\alpha,\beta,\gamma,\delta)} = \M(\A^{-1})^{T}$ \eqref{MA} where $\A$ is as in \eqref{matrixA} and $^T$ denotes transpose. It follows that $\Lambda({\alpha,\beta,\gamma,\delta)}$  defines a morphism of groups  between $SL(2,\mathbb{C})$
  and $SO^{\uparrow}(1,3)$ and  that $u$ transforms as the components of a covector in the Minkowski
spacetime. Also observe that when $u$ is timelike or null (i.e. $\kappa_u \geq 0$), the
choice $u_0, u_0' >0$ selects $\epsilon=1$.

%We may particularize this to the
%Euclidean metric $g_E$ and to the standard spherical metric $g_{\mathbb{S}^2}$.% Under (\ref{z-omega}), they become, respectively
%\begin{align*}
%  g_E = 4 d\omega  \overline{\omega}  g_{\mathbb{S}^2} =
%\end{align*}

In order to express the metric in the coordinates $\{ v_2, v_2\}$ we need to compute the  functions $\omega \overline{\omega}$, $\omega + \overline{\omega}$
and $i(\omega- \overline{\omega})$ in terms of these variables.
For notational simplicity we introduce
the auxiliary quantities
\begin{align}
  h_1 := v_1 \cos \arg + v_2 \sin \arg, \quad \quad
  h_2 := v_2 \cos \arg - v_1 \sin \arg. \label{aux}
 \end{align}
 From \eqref{omega-zeta} with $\zeta_0=0$, a straightforward computation that uses basic  trigonometry yields
 %%%%%%%
 %%%%% See file
 %%%%%%%%%%%%%%%%%%%%%%%%%%%%%
 \begin{align*}
  \omega \overline{\omega}  & = 
  \frac{ Q \left (
  \cosh \left ( \sqrt{Q} h_2 \right ) + \cos \left ( \sqrt{Q} h_1 \right )
  \right ) }{4 
  \left (   \cosh \left ( \sqrt{Q} h_2 \right ) - \cos \left ( \sqrt{Q} h_1 \right ) \right )},  \\
  \omega + \overline{\omega} & =
   \frac{
  \sqrt{Q} \sin \arg \sinh \left ( \sqrt{Q} h_2 \right )
  - \sqrt{Q} \cos \arg \sin \left ( \sqrt{Q} h_1 \right ) }{\cosh \left ( \sqrt{Q} h_2 \right ) - \cos \left ( \sqrt{Q} h_1 \right )},    \\
  i \left ( \omega -  \overline{\omega} \right ) & =- 
\frac{
  \sqrt{Q} \cos \arg \sinh \left ( \sqrt{Q} h_2 \right )
                                                   + \sqrt{Q} \sin \arg \sin \left ( \sqrt{Q} h_1 \right ) }{\cosh \left ( \sqrt{Q} h_2 \right ) - \cos \left ( \sqrt{Q} h_1 \right )}.                                                  
\end{align*}
Since $d \omega = \frac{d \omega}{ d \zeta} d \zeta =  2 ( \omega^2 + \frac{Q}{4}
e^{-2i \arg} ) d \zeta$, determining the line-element
  $d \omega d \overline{\omega}$ requires expressing
  $| \omega^2 + Q/4 e^{-2 i \arg} |^2$ in terms of $\{v_1, v_2\}$. The result is
  obtained by a direct computation,
      \begin{align*}
    4 \left ( \omega^2 + \frac{Q}{4} e^{-2 i \arg} \right ) 
    \left ( \overline{\omega}^ 2 + \frac{Q}{4} e^{2 i \arg} \right )=
    \frac{Q^2}{\left ( \cosh ( \sqrt{Q} h_2 )  - \cos ( \sqrt{Q}
        h_1 ) \right )^2}.
      \end{align*}      Let us introduce the functions
\begin{align}
  f_{+} (v_1,v_2) &:= \frac{1}{4} \left (
                    \cosh ( \sqrt{Q} h_2 ) + \cos (  \sqrt{Q} h_1  ) \right ) \nonumber \\
  f_{-}(v_1,v_2) & := \frac{1}{Q} \left ( \cosh ( \sqrt{Q} h_2 )-
                   \cos ( \sqrt{Q} h_1  ) \right )
                   \nonumber  \\
   f_2(v_1,v_2)& := \frac{1}{\sqrt{Q}} \left ( \sin \arg \sinh ( \sqrt{Q} h_2  ) - \cos \arg \sin ( \sqrt{Q} h_1 ) \right ) \label{efes} \\
  f_3 (v_1,v_2) & := \frac{-1}{\sqrt{Q}} \left ( \cos \arg \sinh  ( \sqrt{Q} h_2 )   + \sin \arg \sin ( \sqrt{Q} h_1) \right ) \nonumber 
  \end{align}
so that we may express
\begin{align*}
  \omega \overline{\omega} =
  \frac{f_{+}}{f_{-}}, \quad \quad
  \omega + \overline{\omega} = \frac{f_2}{f_{-}},
  \quad \quad i \left ( \omega - \overline{\omega}  \right ) =
  \frac{f_3}{f_{-}}
  \end{align*}
  All these function admit smooth limits at $Q \rightarrow 0$, with corresponding expressions
  \begin{align*}
    f_+ (v_1,v_2) & =  \frac{1}{2} \\
    f_2 (v_1, v_2) & = -v_1 \\
                     %=  \sin \arg h_2 - \cos \arg h_1 
    f_3(v_2, v_2) & = - v_2\\
    %\cos \arg h_2 + \sin \arg h_1 
    f_{-}(v_1,v_2) & =   \frac{1}{2} \left ( v_1^2 + v_2^2 \right ).
%                     = \frac{1}{2} \left ( h_2^2 + h_1^2 \right )
                     \end{align*}
                     For $Q\neq0$, the functions $\{f_{+},f_{-},f_2, f_3\}$ are all periodic in the variable $h_1$ with periodicity $2\pi/\sqrt{Q}$. This corresponds to the fact that  the $\zeta$-plane is a covering of the $\omega$-plane, with the identification defined by the translation $\zeta_t$.

  Thus,  in the adapted coordinates $\{ v_1, v_2\}$ where $\xi = \partial_{v_1}$ and
  $\xi^{\perp} = \partial_{v_2}$, the metric $g_0 := 4 d \omega d \overline{\omega}$ takes the form
\begin{align*}
  g_0  = \frac{4}{f_{-}^2}  d \zeta d \overline{\zeta} = \frac{Q^2}{\left ( \cosh \left ( \sqrt{Q} h_2 \right ) - \cos \left ( \sqrt{Q} h_1 \right ) \right)^2}
  \left ( dv_1^2 + dv_2^2 \right ).
  \end{align*}
  Hence, the metric $g_u$ becomes
  \begin{align}
    g_u & = \frac{1}{\left ( (u_0' - u_1') f_{+} + (u_0' + u_1') f_{-}
          + u_2' f_2 + u_3' f_3 \right )^2} \left (d v_1^2 + dv_2^2 \right ) 
     :=
    \frac{1}{\widehat{\Omega}^2(v_1,v_2)} \left (d v_1^2 + dv_2^2 \right ). \label{guadapted}
  \end{align}
  We may now summarize the results {obtained so far concerning GCKV.
    \begin{theorem}
      \label{Main}
    Let $\mathbb{E}_2$ be the euclidean plane and $\{x,y\}$ be Cartesian coordinates. Let $\xi$ be a GCKV in this space and define the complex constants
    $\{\mu_0, \mu_1, \mu_2\}$ by means of the expression of $\xi$ given by
    \eqref{muform} in the complex
  coordinates $\z = \frac{1}{2} (x-iy)$, $\overline{\z}= \frac{1}{2}(x-iy)$. Define
  \begin{align*}
    \alpha = \frac{1}{2} \left ( \delta \mu_2 - \gamma \mu_1 \right ), \qquad
    \beta = \frac{1}{2} \delta \mu_1 - \gamma \mu_0,
  \end{align*}
  where $\gamma$ and $\delta$ are any pair of complex constants satisfying
  \begin{align*}
    \frac{1}{2} \delta^2 \mu_2 - \gamma\delta \mu_1 + \gamma^2 \mu_0 =1.
    \end{align*}
    Then
  $\xi$ takes its canonical form (c.f. Proposition \ref{transf})
  \begin{align*}
    \xi =
    \left ( \mu_0'
   % \frac{1}{4} \left (\aa_{\{\mu\}} - i \bb_{\{\mu\}} \right )
    + \omega^2 \right ) \partial_{\omega} +
    \left ( \overline{\mu_0'}
    %\frac{1}{4} \left (\aa_{\{\mu\}} + i \bb_{\{\mu\}} \right ) 
    + \overline{\omega}^2 \right ) \partial_{\overline{\omega}} , \qquad
    % \aa_{\{\mu\}} - i \bb_{\{\mu\}} :=
    4 \mu_0' := 2 \mu_0 \mu_2 - \mu_1^2 ,
    %\qquad  \aa_{\{\mu\}}, \bb_{\{\mu\}} \in \mathbb{R}
    \end{align*}
  in the coordinate system
  $\{ \omega,\overline{\omega}\}$
  defined by  $\omega = (\alpha \z + \beta)/(\gamma \z + \delta)$.
  Any other coordinate system
  $\{ \omega',\overline{\omega}'\}$
  where $\xi$ is in canonical form is related to $\{ \omega,\overline{\omega}\}$  by (c.f. Corollary \ref{inv1})
  \begin{align*}
    \omega' = \frac{\delta' \omega - \gamma' \mu_0'}{\gamma'  \omega + \delta'},
    \qquad \qquad  \delta'{}^2 + \mu_0' \gamma'{}^2 = 1.
  \end{align*}
  In addition, the real coordinates $\{ v_1, v_2\}$ defined by $\zeta := v_1 + i v_2$ together with \eqref{omega-zeta} and $4\mu_0' := \aa_{\{ \mu\}} - i \bb_{\{\mu\}} = Q e^{-2 i \theta}$  are adapted to
  $\xi$ and $\xi^{\perp} := \xi_{\{ i \mu\}}$ (c.f. Lemma \ref{orto}), namely 
  $\xi = \partial_{v_1}$ and $\xi^\perp = \partial_{v_2}$. 
  Moreover, the class of metrics \eqref{metgu} is written in adapted coordinates as \eqref{guadapted}. 
    \end{theorem}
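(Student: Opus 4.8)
The plan is to recognize that Theorem~\ref{Main} is a synthesis of results already established in Sections~\ref{seccanonGCKV} and~\ref{secadaptedcoords}, so the proof amounts to assembling the relevant pieces while verifying that the active and passive viewpoints on the M\"obius map agree. I would first treat the canonical-form assertion: the formulas for $\alpha,\beta$ in terms of $\gamma,\delta$ together with the quadratic constraint are exactly the content of Proposition~\ref{transf}, derived there for the active pushforward $\chi^{\A}_{\star}(\xi)$. The step to carry out is to reinterpret the same matrix $\A$ as the coordinate change~\eqref{z-omega}, invoking the transformation/coordinate-change duality recorded in the discussion following~\eqref{z-omega}, so that the parameters of $\xi$ expressed in $\{\omega,\overline{\omega}\}$ coincide with those of $\chi^{\A}_{\star}(\xi)$ in $\{z,\overline{z}\}$. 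Hence $\xi$ acquires the canonical form~\eqref{canonical} with the M\"obius-invariant combination $4\mu_0' = 2\mu_0\mu_2 - \mu_1^2$ (invariance under~\eqref{transmu} being the reason this quantity is well defined).

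For the freedom in the canonical coordinate I would appeal to Corollary~\ref{inv1}: any two charts rendering $\xi$ canonical differ by a M\"obius map stabilizing a canonical GCKV, whose stabilizer is precisely the subgroup $\A_{\mu_0'}$ of~\eqref{invA}. Writing a generic element of that subgroup as a coordinate change yields the stated relation $\omega' = (\delta'\omega - \gamma'\mu_0')/(\gamma'\omega + \delta')$ subject to $\delta'^2 + \mu_0'\gamma'^2 = 1$.

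For the adapted coordinates the crucial input is Lemma~\ref{orto}, which furnishes $\xi^{\perp} := \xi_{\{i\mu\}}$ orthogonal to $\xi$, of equal pointwise norm and commuting with it; this is exactly what permits a simultaneous rectifying chart. In canonical form one has $\xi - i\xi^{\perp} = 2(\mu_0' + \omega^2)\partial_{\omega}$, so integrating $d\zeta/d\omega = (2\omega^2 + \tfrac{Q}{2}e^{-2i\arg})^{-1}$ as in~\eqref{omega-zeta} straightens this combination to $\partial_{\zeta}$; with $\zeta = \tfrac12(v_1 + iv_2)$ as in Section~\ref{secadaptedcoords} this gives $\xi = \partial_{v_1}$ and $\xi^{\perp} = \partial_{v_2}$. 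For the metric I would then substitute the explicit expressions~\eqref{efes} for $\omega\overline{\omega}$, $\omega + \overline{\omega}$ and $i(\omega - \overline{\omega})$, together with the Jacobian factor $|2\omega^2 + \tfrac{Q}{2}e^{-2i\arg}|^2$ computed in the text, into $g_u$ of~\eqref{metgu} transformed by~\eqref{LambdaMoeb}, reaching the form~\eqref{guadapted}.

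The main obstacle I anticipate is not any single hard estimate but rather the bookkeeping needed to confirm that the active formulation of Proposition~\ref{transf} transfers without sign or inversion errors to the passive coordinate-change formulation, and that the multivalued logarithm in~\eqref{omega-zeta} is handled so that the claimed global chart behaviour remains consistent. Once that correspondence is pinned down, the residual verifications are the routine trigonometric computations already displayed.
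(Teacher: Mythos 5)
Your proposal is correct and mirrors the paper exactly: Theorem~\ref{Main} is stated there as a summary of results already derived, and its justification is precisely the assembly you describe --- Proposition~\ref{transf} (read passively via the coordinate-change duality of \eqref{z-omega}), Corollary~\ref{inv1} for the residual freedom, Lemma~\ref{orto} plus the integration \eqref{omega-zeta} for the adapted coordinates (with the convention $\zeta = \tfrac{1}{2}(v_1+iv_2)$, which you correctly use and which fixes an inconsistency in the theorem's own wording), and the trigonometric computations \eqref{efes} for the metric form \eqref{guadapted}. Nothing further is needed.
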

  
  We mentioned above that the freedom in the coordinate change that brings
  $\xi$ into its canonical form can be translated into the freedom of a
  constant shift in the coordinates $\{ v_1, v_2\}$. Given $\{ \tilde{v}_1, \tilde{v}_2\}$ let $\tilde{h}_1$ and $\tilde{h}_2$ by defined exactly by the
  same expression as \eqref{aux} but with $\{ v_1,v_2\}$ replaced by
  $\{ \tilde{v}_1, \tilde{v}_2\}$. Similarly, we introduce
  four functions  $\{\tilde{f}_{+}(\tilde{v}_1, \tilde{v}_2)$, $
  \tilde{f}_{-}(\tilde{v}_1, \tilde{v}_2)$, $
  \tilde{f}_{2}(\tilde{v}_1, \tilde{v}_2)$, $
  \tilde{f}_{3}(\tilde{v}_1, \tilde{v}_2) \}$ by the same definition
  as \eqref{efes}, with $\{ h_1, h_2\}$ replaced  by $\{ \tilde{h}_1,
  \tilde{h}_2\}$. Let us now consider the coordinate change 
\begin{align}
\left \{
\begin{array}{l}
v_1 = \tilde{v}_1
  - \cos\arg \ell_1 + \sin\arg \ell_2  \\
  v_2 = \tilde{v}_2
 -  \sin\arg \ell_1 -  \cos\arg \ell_2 
\end{array} \right .
  \label{shiftconst}
  \end{align}
  where $\ell_1$ and $\ell_2$ are constants. Then $h_1 = \tilde{h}_1 - \ell_1$
  and $h_2 = \widetilde{h}_2 - \ell_2$ and we may relate the functions $\{ f\}$  written in terms of $\{\tilde{v}_1, \tilde{v}_2\}$ with the functions $\{\tilde{f}\}$. The  result is
%%%%%%%%%%%%
%%%% See file adapted4.red %%%
%%%%%%%%%%%%%%%%%%%%%%%%%%%%%%
  \begin{align}
    \left ( \begin{array}{c}
              2 f_{+} \\
              2 f_{-} \\
              f_2 \\
              f_3
            \end{array}
    \right )_{\tilde{v}_1, \tilde{v_2}}
    = &\left ( \begin{array}{cccc}
                \frac{1}{2} (\coh + \co ) & \frac{Q}{8} (\coh - \co) &
                                                                           -\frac{\sqrt{Q}}{2} \si & \frac{\sqrt{Q}}{2} \sih \\
                \frac{2}{Q} ( \coh - \co) & \frac{1}{2} ( \coh + \co) &
                                                                          \frac{2}{\sqrt{Q}} \si & \frac{2}{\sqrt{Q}} \sih \\
                \frac{1}{\sqrt{Q}}  ( \cos\arg \si - \sin\arg \sih ) &
                                                                      - \frac{\sqrt{Q}}{4} ( \cos\arg \si + \sin\arg \sih ) & \cos\arg \co & - \sin \arg \coh \\
                \frac{1}{\sqrt{Q}} ( \cos\arg \sih + \sin\arg \si) & \frac{\sqrt{Q}}{4} ( \cos\arg \sih - \sin\arg \si ) & \sin\arg \co & \cos\arg \coh                                                             
      \end{array} \right ) \nonumber \\
&    \left ( \begin{array}{cccc}
               1 & 0 & 0 & 0 \\
               0 & 1 & 0 & 0 \\
               0 & 0 & \cos \arg & \sin \arg \\
               0 & 0 & -\sin \arg & \cos \arg
             \end{array}
                                      \right )
                                      \left ( \begin{array}{c}
                                                2 \tilde{f}_+ \\
                                                2 \tilde{f}_- \\
                                                \tilde{f}_1 \\
                                                \tilde{f}_2
                                              \end{array}
    \right ) := W(\ell_1,\ell_2)
    \left ( \begin{array}{c}
                                                2 \tilde{f}_+ \\
                                                2 \tilde{f}_- \\
                                                \tilde{f}_1 \\
                                                \tilde{f}_2
\end{array} \right ), \label{matricform}
                                              \end{align}
                                              where for notational simplicity we have introduced $\co = \coexp, \coh = \cohexp, \si = \siexp, \sih = \sihexp$.
                                              If we compare $W(\ell_1, \ell_2)$
                                              and $\T(\lambda_2, \lambda_3, \epsilon)$ we see that the matrices are identical after setting
                                              \begin{align}
                                                \lambda_2 = \frac{1}{\sqrt{Q}}
                                                \siexp, \quad \quad
                                                \lambda_3 =  \frac{1}{\sqrt{Q}}
                                                \sihexp, \quad \quad
                                                \epsilon \sqrt{1 - Q \lambda_2^2 } = \coexp. \label{lam-ell}
                                              \end{align}
                                              Of course this does not happen by chance. We have seen before that the shift in $\zeta$ corresponds to the subgroup
                                              of M\"obius transformation that leaves the canonical form of $\xi$ invariant. By the relationship between GCKV
                                              and skew-symmetric endomorphism
                                              in $\mathbb{M}^{1,3}$ described
                                              in Section \ref{GCKV_and_F}, this M\"obius subgroup corresponds to the set of orthochronous Lorentz  transformations that leave the skew-symmetric endomorphism invariant, and this is precisely
                                              the group $\{ \T(\lambda_2,
                                              \lambda_3, \epsilon) \}$.  With the choice we have made of the shift constants \eqref{shiftconst}, the relationship between the parameters $\{ \ell_1, \ell_2 \}$ and
                                              $\{ \lambda_2, \lambda_3 \}$ take the remarkably simple form given by \eqref{lam-ell}. Note that the map
                                              $(\ell_1, \ell_2) \rightarrow 
                                              ( \lambda_2, \lambda_3, \epsilon)$
                                              is again a covering. If we let $\ell_2$ be periodic with periodicity $\frac{2\pi}{\sqrt{Q}}$, the map is a bijection. Observe that, to make the comparison work, we have inserted a factor $2$
                                              in front of $f_{\pm}$ in the
                                              column vector \eqref{matricform}. The reason is easy to understand. The constants $\{ u_0', u_1', u_2', u_3'\}$ in the
                                              conformal factor $\widehat{\Omega}$ in the metric $g_u$ define a Lorentz covector  of length $-u_0'{}^2 +
                                              u_1'{}^2 + u_2'{}^2 + u_3'{}^2 
= - (u_0' + u_1') (u_0' - u_1') +  u_2'{}^2 + u_3'{}^2$. This means that, viewed as vectors in a Lorentz space, the basis $\{ f_{+}, f_-, f_{2}, f_{3}\}$ is semi-null, but with scalar product $\la f_+, f_{-} \ra = \frac{1}{2}$ However,  the transformation law $\T(\lambda_2, \lambda_3, \epsilon)$ was written in a semi-null basis
$\{ \ell, k , e_2, e_3\}$ with normalization $\la \ell, k \ra =-2$, which is precisely the normalization of the basis $\{ 2 f_+, 2 f_{-}, f_2, f_3\}$.

                                              Having obtained the transformation law for $\{ f_{+}, f_{-}, f_2, f_3\}$ it follows
                                              immediately that under the coordinate transformation \eqref{shiftconst}, the metric $g_u$ becomes
                                                \begin{align*}
    g_u =  \frac{1}{\left ( (\tilde{u}_0 - \tilde{u}_1) \tilde{f}_{+} + (\tilde{u}_0 + \tilde{u}_1 ) \tilde{f}_{-}
    + \tilde{u}_2 \tilde{f}_1 + \tilde{u}_3 \tilde{f}_2 \right )^2} \left (d \tilde{v}_1^2 + d\tilde{v}_2^2 \right ) 
    \end{align*}
    where the constants $\{ \tilde{u}_0, \tilde{u}_1,
    \tilde{u}_2, \tilde{u}_3 \}$ are given by
    \begin{align*}
      \left ( \begin{array}{c}
                \frac{1}{2} \left (\tilde{u}_0 - \tilde{u}_1 \right ) \\
                \frac{1}{2} \left ( \tilde{u}_0 + \tilde{u}_1 \right ) \\
                \tilde{u}_2 \\
                \tilde{u}_3                
                \end{array}
                \right ) = \epsilon (W (\ell_1, \ell_2))^T 
      \left ( \begin{array}{c}
                \frac{1}{2} \left ( u'_0 - u'_1 \right ) \\
\frac{1}{2} \left ( u'_0 + u'_1 \right ) \\
                u'_2 \\
                u'_3 
                \end{array}
               \right ) 
    \end{align*}
    (the reason for the sign $\epsilon$ is the same as discussed before).

  \section{Applications}\label{secapps}

  \subsection{Killing vectors of $g_u$}

  Our aim is to determine under which conditions $\xi$ is a Killing vector
  of the metric $g_u$. We will address the question by analyzing the situation in the adapted coordinates. Since $\xi= \partial_{v_1}$,
  $\xi$ will be a Killing vector of $g_u$ if and only if the function
  $\widehat{\Omega}$ satisfies $\partial_{v_1} \widehat{\Omega} =0$. It is
  straightforward to check that
  %%%%%%%%%%%%%%%%%%%%%%%%%%%%%%%%%%%%%%%%%%%
  %%% See file adapted2.red
  %%%%%%%%%%%%%%%%%%%%%%%%%%%%%%%%%%%%%%%%%%%
    \begin{align*}
 \partial_{v_1} f_{+} & =  \frac{Q}{4} \left (  \cos(2 \arg) f_2 +
                           \sin(2 \arg) f_3 \right )                          , \\
      \partial_{v_1} f_{-} & = - f_2, \\
     \partial_{v_1} f_{2} & = - 2 f_{+} + \frac{Q}{2} \cos (2 \arg) f_{-}, \\
    \partial_{v_1} f_{3} & =  \frac{Q}{2} \sin( 2 \arg) f_{-}, 
  \end{align*}
 which imply
  \begin{align*}
    \partial_{v_1} \widehat{\Omega}   = & 
        - 2 u_2'  f_+
    +  \frac{Q}{2}  \left ( \cos(2\arg) u_2' + \sin(2 \arg) u_3'
    \right ) f_{-}                                         + \left ( \frac{Q}{2} \cos (2\arg) u_-
                                       - 2 u_{+}'\right ) f_2
                        + \frac{Q}{2} \sin (2 \arg) u_-' f_3,
  \end{align*}
  where we have set $u_{\pm}' := \frac{1}{2} ( u'_{0} \pm u'_{1})$.
    The functions $\{ f_{+}, f_{-}, f_2, f_3\}$ are linearly independent, so this
  derivative will vanish if and only if each coefficient vanishes. If
  $Q \sin (2 \arg) \neq 0$, it is immediate that the only solution
  is $u_+'=u_{-}'=u_2'=u_3'=0$, which is not possible for a metric $g_u$. Thus, a necessary condition for $\xi$ to be a Killing vector of (any) $g_u$
  is that the invariant (see \eqref{defQsigma}) $\aa_{\{\mu\}} - i \bb_{\{\mu\}}$ be
  real (i.e. $\bb_{\{\mu\}}=0$). When $Q \neq 0$, the condition $\sin (2 \arg)=0$ is
  $\arg  \in \{ 0, \frac{\pi}{2}\}$ (recall that $\arg \in [0,\pi)$ by construction). To cover all cases at once we set
  $\cos\arg = \hep$ and $\sin \arg = 1- \hep$, with $\hep^2 = \hep$. Then
  $\cos (2 \arg ) = 2 \hep - 1$ (this choice is also valid when $Q=0$ because
  $\arg$ can be fixed to any value).
  Then
    \begin{align*}
    \partial_{v_1} \widehat{\Omega} =0
    \quad \Longleftrightarrow
    \quad     (u_{-}',u_{+}', u_2',u_3') = s_1 \underbrace{\left ( 1 , \frac{Q}{4} (2 \hep -1) , 0, 0 \right )}_{w_1}
    + s_2 \underbrace{(0,0,0,1)}_{w_2}, \quad                                    s_1,s_2 \in \mathbb{R}. 
  \end{align*}
  The Lorentzian norm of this vector is $-4 u_+' u_-' + u_2'{}^2 +u_3'{}^2 =
  -  (2 \hep -1) Q s_1^2 + s_2^2$. Under the constant shift given by $\ell_1, \ell_2$, the two-dimensional vector space spanned by $w_1$ and $w_2$ remains invariant, and the vector $s_1 w_1 + s_2 w_2$ transforms to $\tilde{s}_1 w_2 +
  \tilde{s}_2 w_3$ with
  %%%%%%%%%%%%%%%%%%%%%%%%%%%%%
  %%% See adapted4.red
  %%%%%%%%%%%%%%%%%%%%%%%%
  \begin{align*}
    \left ( \begin{array}{c}
      \tilde{s}_1 \\
      \tilde{s}_2
    \end{array}
    \right ) =\epsilon  \left ( \begin{array}{cc}
      \hep \cosh (\sqrt{Q} \ell_2 ) + 
      \cos (\sqrt{Q} \ell_1 )(1 - \hep )  &
      \frac{1}{\sqrt{Q}} \left ( \sinh(\sqrt{Q} \ell_2) \hep + \sin ( \sqrt{Q} \ell_1) ( 1- \hep) \right ) 
            \\
\sqrt{Q} \left ( \sinh(\sqrt{Q} \ell_2) \hep - \sin ( \sqrt{Q} \ell_1) ( 1- \hep) \right ) &   \hep   \cosh (\sqrt{Q} \ell_2 ) + 
\cos (\sqrt{Q} \ell_1) (1 - \hep ) 
          \end{array}
    \right )
    \left ( \begin{array}{c}
      s_1 \\
      s_2
    \end{array}
    \right ).
  \end{align*}
  This transformation leaves the norm  $-( 2\hep -1) Q s_1^2 + s_2^2$ invariant (as it must) and defines a group which is one-dimensional when $Q \neq 0$
  and two-dimensional when $Q=0$. Thus, when  transforming the vector
  $u$ into the original coordinate system $\{ z, \overline{z}\}$ we may ignore the action of the invariance group that
  leaves the canonical form of $\xi$ invariant provided we let $u$ take all non-zero values in the vector  space $\spn{w_1,w_2}$ 
%   $\{ s_1 w_1 + s_2 w_2\}$
  .  We may summarize the result in the following theorem.
%%%%%%%%%%%%%%%%%%%%%%%%%%%%%%%%%%
  %%% see file theorem2.red
  %%%%%%%%%%%%%%%%%%%%%%%%%%%%%
  \begin{theorem}
    \label{Killing}
    Given a non-identically zero GCKV $\xi$ in two-dimensional Euclidean space  and let  $\{\mu\}:= \{ \mu_0,\mu_1,\mu_2\}$ be the set of parameters
    such that $\xi = \xi_{\{\mu\}}$ in the coordinate system $\{ z, \overline{z}\}$. Let  $U \subset \mathbb{R}^4 \setminus \{ 0\} $ be defined by the property that for all $u \in U$,  $\xi$ is a Killing vector of the
    metric $g_u$ (defined in \eqref{metgu}). Then
    \begin{itemize}
    \item If  $2 \mu_0 \mu_2 - \mu_1^2 \not \in \mathbb{R}$  then  $U = \emptyset$.
      \item If $2 \mu_0 \mu_2 - \mu_1^2  \in \mathbb{R}$, let $\delta, \gamma$ be any pair of complex numbers satisfying 
      \begin{align*}
          \frac{1}{2} \delta^2 \mu_2 - \gamma \delta \mu_1 + \gamma^2 \mu_0 =1
      \end{align*}
      and set $\alpha = \frac{1}{2} ( \delta \mu_2 - \gamma \mu_1)$
      and
      $\beta = \frac{1}{2} \delta \mu_1 - \gamma \mu_0$. Then $u \in U$ if and only if
      \begin{align*}
        \left ( \begin{array}{c}
          u_0 \\
          u_1 \\
          u_2 \\
          u_3
        \end{array}
        \right )
        = \M(\A)^T 
        \left ( \begin{array}{c}
          s_1 \left ( \frac{1}{4} ( 2 \mu_0 \mu_2 - \mu_1^2 ) + 1\right ) \\
          s_1 \left (\frac{1}{4} ( 2 \mu_0 \mu_2 - \mu_1^2 ) - 1\right ) \\
          0 \\
          s_2 \\
                  \end{array}
        \right )
    \end{align*}
    where $(s_1, s_2) \in \mathbb{R}^2 \setminus \{0\}$, $\A$ is the matrix
    \eqref{matrixA}  and $\M(\A)$ was defined in \eqref{MA}. 
    \end{itemize}
    Moreover, such $g_u$ has constant curvature $\kappa_u$ given by
    \begin{align*}
      \kappa_u =  s_1^2 ( 2 \mu_0 \mu_2 - \mu_1^2 )
      - s_2^2.
\end{align*}      
  \end{theorem}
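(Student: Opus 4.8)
The plan is to run the whole argument in the adapted coordinates $\{v_1,v_2\}$ furnished by Theorem~\ref{Main}, in which $\xi=\partial_{v_1}$ and the metric reads $g_u=\widehat{\Omega}^{-2}(dv_1^2+dv_2^2)$ with $\widehat{\Omega}=(u_0'-u_1')f_{+}+(u_0'+u_1')f_{-}+u_2'f_2+u_3'f_3$, cf.~\eqref{guadapted}. Because the coordinate one-forms $dv_1,dv_2$ are Lie-constant along $\partial_{v_1}$, one has $\pounds_{\xi}g_u=-2\widehat{\Omega}^{-3}(\partial_{v_1}\widehat{\Omega})(dv_1^2+dv_2^2)$, so $\xi$ is a Killing vector of $g_u$ if and only if $\partial_{v_1}\widehat{\Omega}=0$. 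I would then substitute the already computed derivatives $\partial_{v_1}f_{+},\partial_{v_1}f_{-},\partial_{v_1}f_2,\partial_{v_1}f_3$ to write $\partial_{v_1}\widehat{\Omega}$ as an explicit linear combination of $\{f_+,f_-,f_2,f_3\}$; since these four functions are linearly independent, the Killing condition is equivalent to the simultaneous vanishing of the four coefficients.

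The next step is to solve this linear system for $(u_-',u_+',u_2',u_3')$, where $u_\pm':=\tfrac12(u_0'\pm u_1')$. The coefficient of $f_+$ forces $u_2'=0$, and if $Q\sin(2\arg)\neq 0$ the coefficients of $f_-$ and $f_3$ force $u_3'=u_-'=0$, after which the coefficient of $f_2$ forces $u_+'=0$; this contradicts $u\neq 0$. Hence a nontrivial solution exists only when $\sin(2\arg)=0$, i.e.\ $\bb_{\{\mu\}}=0$, which by $\aa_{\{\mu\}}-i\bb_{\{\mu\}}=2\mu_0\mu_2-\mu_1^2$ is precisely the condition $2\mu_0\mu_2-\mu_1^2\in\mathbb{R}$; this establishes the first bullet ($U=\emptyset$ otherwise). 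When $\bb_{\{\mu\}}=0$, writing $\cos\arg=\hep$, $\sin\arg=1-\hep$ with $\hep^2=\hep$ gives $\cos(2\arg)=2\hep-1$ and $Q(2\hep-1)=\aa_{\{\mu\}}$, and the surviving equations reduce to the two-parameter family $(u_-',u_+',u_2',u_3')=s_1w_1+s_2w_2$ with $w_1=(1,\tfrac{Q}{4}(2\hep-1),0,0)$ and $w_2=(0,0,0,1)$. Passing back to $(u_0',u_1',u_2',u_3')$ via $u_0'=u_+'+u_-'$, $u_1'=u_+'-u_-'$ reproduces exactly the primed column vector displayed in the theorem.

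Finally, I would transfer the result to the original coordinates by inverting the transformation law \eqref{LambdaMoeb}. Since $\Lambda_{(\alpha,\beta,\gamma,\delta)}=\M(\A^{-1})^{T}$ and $\M$ is a group morphism, $(\M(\A^{-1})^{T})^{-1}=\M(\A)^{T}$, so $u=\epsilon\,\M(\A)^{T}u'$; the sign $\epsilon$ is immaterial because $(s_1,s_2)$ already ranges over $\mathbb{R}^2\setminus\{0\}$, giving the displayed formula for $u$. The curvature claim then follows because $\M(\A)^{T}$ is a Lorentz transformation, so the quadratic form $\kappa_u=u_0^2-u_1^2-u_2^2-u_3^2$ is invariant; evaluating it on the primed vector gives $\kappa_u=4u_+'u_-'-(u_2')^2-(u_3')^2=s_1^2(2\mu_0\mu_2-\mu_1^2)-s_2^2$. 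The one conceptually delicate point, which I expect to be the main obstacle, is that the adapted frame, and hence $\A$, is determined only up to the invariance subgroup $\A_{\mu_0'}$ of Corollary~\ref{inv1}, so I must verify that the set $U$ is independent of this choice. This holds because that subgroup acts on the plane $\spn{w_1,w_2}$ by the explicit linear transformation of $(s_1,s_2)$ written just before the theorem statement, which preserves $\spn{w_1,w_2}$; letting $(s_1,s_2)$ run over all of $\mathbb{R}^2\setminus\{0\}$ therefore recovers $U$ regardless of the representative $\A$ chosen.
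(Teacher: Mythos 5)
Your proposal is correct and follows essentially the same route as the paper: work in the adapted coordinates of Theorem \ref{Main}, reduce the Killing condition to $\partial_{v_1}\widehat{\Omega}=0$, use linear independence of $\{f_+,f_-,f_2,f_3\}$ to force $\bb_{\{\mu\}}=0$ and the two-parameter family $s_1 w_1 + s_2 w_2$, then pull back to the original coordinates via $\M(\A)^T$ and use Lorentz invariance of the quadratic form for the curvature. Your handling of the residual invariance-group freedom (absorbing it into the $(s_1,s_2)$ parametrization, and the sign $\epsilon$ into $(s_1,s_2)\mapsto(-s_1,-s_2)$) is exactly the argument the paper makes in the discussion preceding the theorem.
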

      \begin{proof} We only need to check that
      $w_1 = ( 1, \frac{1}{4} (2 \mu_0 \mu_2 - \mu_1^2), 0, 0)$, This is an
      immediate consequence of the definitions
      \eqref{defQsigma} and \eqref{aabb}, which
      in the case $\cos \arg = \hep$ and $\sin \arg  = 1 - \hep$
      imply
            \begin{align*}
        Q\left (2 \hep - 1 \right ) =
        2 \mu_0 \mu_2 - \mu_1^2.
      \end{align*}
      \end{proof}
      
      One may wonder why this problem has no been addressed in the original coordinate
      system $\{ z, \overline{z} \}$. The Lie derivative of a metric
      $g_{\Psi} := 4 \Psi^{-2} dz d\overline{z}$ along $\xi_{\{\mu\}}$ (given by
      \eqref{muform}) is
      \begin{align*}
        \pounds_{\xi_{\{\mu\}}} g_{\Psi} = 
        \left (
        - 2 \xi_{\{\mu\}} (
          \Psi) + \Psi \left ( \mu_1 + \overline{\mu_1} + \mu_2 z + \overline{\mu_2} \overline{z} \right ) \right ) g_{\Psi}.
        \end{align*}
        Thus $\xi_{\{\mu\}}$ is a Killing vector of $g_u$ if and only if
        \begin{align*}
- 2 \xi_{\{\mu\}} ( \Omega_u ) + \Omega_u \left ( \mu_1 + \overline{\mu_1} + \mu_2 z + \overline{\mu_2} \overline{z} \right ) =0.
        \end{align*}
        The computation gives a polynomial  in $\{ z, \overline{z}\}$ of degree two. Equating each coefficient to zero, one finds that the conditions that need to
        be satisfied can be written in the form
%%%%%%%%%%%%%%%%%%%%%%%%%%%%%%%%%%%%%%%
        %%% see file Killing.red %%%%
        %%%%%%%%%%%%%%%%%%%%%%%%%%%%%%%
        \begin{align}
\left ( 
          \begin{array}{cccc}
            0 & - \nu & -  a_x + \frac{b_x}{2} & -a_y + \frac{b_y}{2} \\
            -\nu & 0 & - a_x - \frac{b_x}{2} & -a_y - \frac{b_y}{2}  \\
            - a_x + \frac{b_x}{2} & a_x + \frac{b_x}{2} &  0   & - \omega \\
            -a_y + \frac{b_y}{2} & a_y + \frac{b_y}{2} &  \omega & 0           \end{array}
          \right )
          \left ( \begin{array}{c}
                    -u_0 \\
                    u_1 \\
                    u_2 \\
                    u_3 
                  \end{array}
          \right ) =
          \left ( \begin{array}{c}
                    0 \\
                    0 \\
                    0 \\
                    0 
                  \end{array}
          \right )         \label{condsKill}      
        \end{align}
        where we have expressed $\{ \mu \}$ in terms of its  real and imaginary parts by means of \eqref{mudata}. Recalling the relationship between
        GCKV $\xi$  and skew-symmetric endomorphisms $F_{\xi}$ we conclude that
        $\xi_{\{\mu\}}$ is a Killing vector of $g_u$ if and only if the non-zero Lorentz vector
        $(-u_0, u_1, u_2, u_3)$ lies in the kernel of $F_{\xi}$ (observe that this vector is obtained from the covector $u$ by raising indices with the
        Minkowski metric). Being skew-symmetric  and not identically zero, $F_{\xi}$ can only have rank two or four, so in order to admit a non-trivial kernel, the rank must be two. This corresponds to the condition $\bb_{\{\mu\}} =0
        \Longleftrightarrow  \mbox{Im} (2 \mu_0 \mu_2 - \mu_1^2)=0$. So, the kernel is two-dimensional, which recovers the statement in Theorem \ref{Killing} that the
        set $U \cup \{ 0 \}$ is a two-dimensional vector space. Thus, the problem becomes geometrically very neat in the original coordinate system.  However, in
        Theorem \ref{Killing} we have been able to determine explicitly the vector subspace $U \cup \{ 0 \}$ (equivalently the kernel of $F_{\xi}$, after index raising) in a way that covers all cases at once. It is not so clear how to achieve the same
        by a direct attempt of solving \eqref{condsKill}  in such a way that
        the solution covers all possible values of  $\{b_x, b_y, \nu, \omega, a_x, a_y\}$ under the restriction
        $b_x a_y - b_y a_x + \nu \omega =0$ (namely
        $\mbox{Im} (2 \mu_0 \mu_2 - \mu_1^2 )=0$).

        The issue addressed in Theorem \ref{Killing} is to determine for which metrics $g_u$ a given GCKV is Killing. A complementary problem is to fix $g_u$ and determine all GCKV which are Killings of $g_u$. This problem may be approached
        in the language of skew-symmetric endomorphisms. A skew-symmetric endomorphism $F$ in $\mathbb{M}^{1,3}$  of rank two is necessarily  of the form
        $F = q_1 \otimes \bm{q_2} - q_2 \otimes \bm{q_1}$ where
        $q_1$ and $q_2$ are linearly independent Lorentz vectors
        and boldface denote the metrically related one-form. A vector
        $u$ lies in the kernel of $F$ if and only if it is orthogonal to $q_1$
        and $q_2$. Thus, the set of Killing vectors of $g_u$ is obtained from
        all skew-symmetric endomorphisms
        \begin{align*}
          F_{u^{\perp}} := \{ F = q_1 \otimes \bm{q}_2 - q_2 \otimes \bm{q_1}; \quad \quad 
          \mbox{span} \{ q_1, q_2\} = u^{\perp} \}.
        \end{align*}
        where $u^{\perp}$ stands for the set of vectors in the kernel of the covector
        $(u_0, u_1, u_2, u_3)$.        We do not attempt to find an explicitly parametrization of all Killing
        vectors of $g_u$ that covers at once all possible choices of $u$ (this problem does not appear to be simple either in terms of endomorphisms, or by using canonical forms of $\xi$).

%        One checks that the rank of the metric is 3 for any value of $u = \{ u_0, u_1, u_2, u_3\}$. In fact, if we name the row vectors as $\{ l_1, l_2, l_3, l_4\}$, they satisfy the linear relation
%        $( u_0 - u_3) l_1 + (u_0 + u_3) l_2 - u_1 l_3 - u_2 l_4 =0$. Conditions \eqref{condsKill} can be equivaliently written as
%        \begin{align*}
%\left ( 
%          \begin{array}{cccc}
%             & &  &   \\
%             & &  &  \\
%             & &  &   \\
%             & &  &  
%          \end{array}
%          \right )
%          \left ( \begin{array}{c}
%                    u_0 \\
%                    u_1 \\
%                    u_2 \\
%                    u_3 
%%                  \end{array}
%          \right ) =
%          \left ( \begin{array}{c}
%                    0 \\
%                    0 \\
%                    0 \\
%                    0 
%                  \end{array}
%          \right )               
%        \end{align*}
% %         
% % 
% %           
% %       
% % 
% %       
  \subsection{Transverse and traceless and Lie constant
    tensors on $\mathbb{E}^2$}

  Transverse and traceless (TT) symmetric 2-covariant tensors, namely, tensors
  $D_{\alpha\beta} = D_{\beta\alpha}$ satisfying
  (indices are raised with a metric $g$ and $\nabla$ is the
  corresponding Levi-Civita connection)
  \begin{align*}
    \nabla_{\alpha}  D^{\alpha\beta}=0  \quad \mbox{ (transverse)}, \quad \quad D^{\alpha}{}_{\alpha} =0 \quad \mbox{ (traceless)} 
  \end{align*}
    play a prominent role in General Relativity, in several circumstances. For example, they are fundamental for the construction of initial data in spacelike slices with prescribed regularity at spacelike infinity \cite{DaiFri2001} or black hole initial data \cite{Beig1997}.
    Another example is the free data at null infinity for $\Lambda$-vacuum spacetimes with positive cosmological constant (see the original work \cite{Fried86initvalue} or more modern reviews \cite{Friedrich2002}, \cite{Friedrich2014}). In this setup, an interesting subclass that arises
  when the spacetime admits Killing vectors is the subclass of TT tensors
  which satisfy the so-called Killing initial data (KID) equation \cite{KIDPaetz}. In dimension $n$, this equation
  is
  \begin{align*}
    \pounds_{\xi} D_{\alpha\beta} + \frac{n-2}{n} (\mbox{div}_g \xi) D_{\alpha\beta} = 0
  \end{align*}
  where $\xi$ is a conformal Killing vector of $g$ and $\pounds_{\xi}$,
  $\mbox{div}_g \xi$ stand respectively for the Lie derivative along $\xi$ and
  the divergence of $\xi$ with respect to $g$. In dimension $n=2$ the general solution of (local) TT tensors satisfying
  the KID equation can be explicitly solved. Although this dimension
  is not particularly interesting from a physical point of view, there are several motivations for presenting the result. Firstly, dimensional reduction is
  a useful tool in many geometric problems, so it is not unlikely that
  the case of dimension two may find applications in higher dimensions. Also, the $n=2$ case may serve as a toy model to address the (much more difficult) problem in higher dimensions. In addition.  the solution we find turns out to admit an interesting generalization in arbitrary dimension. And lastly, it is remarkable, that the problem is so simple in dimension $n=2$ that its general solution can be explicitly given.   
      
  A key property of the TT conditions and of the KID equations is their
  conformal covariance. If $D_{\alpha\beta}$ is a TT tensor with respect to $g$
  then $\Psi^{2-n} D_{\alpha\beta}$ is a TT tensor with respect to
  $\Psi^2 g$. Also, if $D$ satisfies the KID equation for $g$, then $\Psi^{2-n} D$  also satisfies the KID equation for $\Psi^2 g$. In dimension $n=2$ one actually has conformal invariance. Since all two-dimensional metrics are locally
  conformal to the flat metric, and we are interested in solving the (more general)  local problem, we may assume that $g = 4 dz d \overline{z}$. As already mentioned, a vector field $\xi$ is conformal of this metric if and only if
  $\xi = f(z) \partial_z + \overline{f}(\overline{z}) \partial_{\overline{z}}$.
  We expand $D = D_{zz} dz^2 + D_{\zb\zb} d\zb^2 + 2 D_{z\zb} dz d\zb$.  The condition of being traceless is $D_{z \zb} =0$ and $D$ real requires $D_{\zb\zb} = D_{zz}$,  With these restrictions, the transverse
  equations take the following explicit and simple form
    \begin{align*}
      \partial_z D_{\zb \zb}=0, \quad \quad \partial_{\zb} D_{zz} = 0,
    \end{align*}
    so $D_{zz}$ is a holomorphic function of $z$. Imposing 
    transverse and traceless as well as the reality condition, the KID equations read
    \begin{align*}
      f \frac{D_{zz}}{dz} + 2 D_{zz} \frac{df}{dz} =0,
    \end{align*}
    which integrates to $D_{zz} = \frac{q}{ f^2}, q \in \mathbb{C}$. Writing
    $q = q_1 + i q_2$, with real $q_1, q_2$, we conclude that the most general
    (real) TT tensor that satisfies the KID equation is a linear combination of (we add the factor $4$ for convenience)
    \begin{align*}
      D_1 := \frac{1}{4} \left ( \frac{1}{f^2} dz^2 + \frac{1}{\overline{f}^2} d\zb^2 \right ),
      \quad \quad
      D_2 = \frac{i}{4} \left ( \frac{1}{\overline{f}^2} d\zb^2 -
      \frac{1}{f^2} dz^2 \right ).
    \end{align*}
    These expressions are valid in the coordinate system $\{ z, \overline{z}\}$. We are interested in covariant expressions that are valid in any coordinate system, and are explicitly invariant under conformal transformations. To achieve this,
    we introduce the vector field
    \begin{align}
      \xi^{\perp} := i \left ( f \partial_z - \overline{f} \partial_{\zb} \right ). \label{xiperp}
    \end{align}
        This is everywhere orthogonal to $\xi$ and
    has the same norm at every point. If the zeros of $\xi$ do not separate
    the manifold, these two properties define
    $\xi^{\perp}$ in terms of $\xi$ uniquely except for a global sign. If the zeroes
    of $\xi$ separate the manifold, $\xi^{\perp}$ is still uniquely defined (up
    to a sign) if one adds the condition that $\xi^{\perp}$ is a
    conformal Killing vector of $g$ (which \eqref{xiperp} clearly is). Thus, we may speak of $\xi^{\perp}$ unambiguously (up to global sign), once $\xi$ has been fixed. Next we note that, in the $\{z, \zb\}$ coordinate system and with
    respect to the metric $g_E := 4 dz d\zb$ we have
    \begin{align*}
      \bm{\xi} = 2 f d\zb + 2 \overline{f} d\zb, \quad \quad
      |\xi|^2_{g_E} := g_E (\xi, \xi )= 4 f \overline{f}, \qquad \qquad 
      \bm{\xi^{\perp}} = 2i f d \zb - 2 i \overline{f} dz,
      \quad \quad |\xi^{\perp}|^2_{g_E} = 4 f \overline{f}
    \end{align*}
        and then we may write
    \begin{align*}
      D_1 & = \frac{1}{|\xi|^4_g}
      \left ( \bm{\xi} \otimes \bm{\xi} -
      \frac{1}{2} |\xi|^2_{g_E} g_E \right ), \\
      D_2 & = \frac{1}{2 |\xi|^4_{g_E}} \left ( \bm{\xi} \otimes \bm{\xi^{\perp}}
      + \bm{\xi^{\perp}} \otimes \bm{\xi} \right ).
    \end{align*}
    These expressions are obviously coordinate independent and also conformally
    invariant. Thus, $D_1$ and $D_2$ take this form also for the original metric
    $g$. Summarizing, we have proved the following theorem.
    \begin{theorem}\label{theoTT}
      Let $(M,g)$ be a two-dimensional Riemannian manifold and $\xi$ a conformal Killing vector of $g$. Let $D$ be a (real) transverse and traceless symmetric, $2$-covariant tensor that satisfies the KID equation with respect to $\xi$.
      Then $D$ is a linear combination (with constants) of
      \begin{align*}
        D_{\xi} & := \frac{1}{|\xi|^4_g} \left (  \bm{\xi} \otimes \bm{\xi} -
        \frac{1}{2} |\xi|^2_{g} g \right ), \\
        D_{\xi, \xi^{\perp}} & := \frac{1}{2 |\xi|^2_g |\xi^{\perp}|^2_g} \left (
        \bm{\xi} \otimes \bm{\xi^{\perp}} +
        \bm{\xi^{\perp}} \otimes \bm{\xi} \right ),
      \end{align*}
      where $\xi^{\perp}$ is defined as described above and
      $\bm{\xi} := g (\xi, \cdot)$,
      $\bm{\xi^{\perp}} := g (\xi^{\perp}, \cdot)$.
    \end{theorem}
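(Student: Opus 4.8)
The plan is to leverage the conformal invariance of both the TT conditions and the KID equation in dimension two, which is the feature that makes the classification tractable. Since every two-dimensional metric is locally conformal to the flat metric and, for $n=2$, the relevant operators have conformal weight zero, it suffices to solve the problem for the model metric $g_E = 4\,dz\,d\zb$ and then transplant the answer to the original $g$. First I would record that a conformal Killing vector of $g_E$ has the form $\xi = f(z)\partial_z + \overline{f}(\zb)\partial_{\zb}$ with $f$ holomorphic, and decompose the tensor as $D = D_{zz}\,dz^2 + D_{\zb\zb}\,d\zb^2 + 2 D_{z\zb}\,dz\,d\zb$.

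The next step is to impose the constraints in sequence and watch the solution space collapse. Tracelessness forces $D_{z\zb}=0$, and reality gives $D_{\zb\zb} = \overline{D_{zz}}$, so that $D$ is encoded in the single component $D_{zz}$. The transverse equations then reduce to $\partial_{\zb} D_{zz} = 0$, i.e. $D_{zz}$ is holomorphic, and the KID equation becomes the first-order linear ODE $f\,\partial_z D_{zz} + 2 D_{zz}\,\partial_z f = 0$ along the flow of $\xi$. This integrates immediately to $D_{zz} = q/f^2$ for an arbitrary complex constant $q$. Writing $q = q_1 + i q_2$ exhibits the solution set as a real two-dimensional space spanned by the two tensors $D_1$ and $D_2$ introduced above, which establishes that $D$ is a linear combination (with real constants) of exactly two independent tensors.

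It then remains to recast $D_1$ and $D_2$ in a manifestly coordinate-free and conformally invariant form so that the conclusion reads the same for $g$ as for $g_E$. For this I would introduce the orthogonal companion $\xi^{\perp} := i\,(f\partial_z - \overline{f}\partial_{\zb})$ (which is precisely $\xi_{\{i\mu\}}$ of Lemma \ref{orto}), compute $\bm{\xi}$, $\bm{\xi^{\perp}}$ and the pointwise norms with respect to $g_E$, and check by direct substitution that $D_1 = D_{\xi}$ and $D_2 = D_{\xi,\xi^{\perp}}$ with $D_{\xi}$ and $D_{\xi,\xi^{\perp}}$ the combinations in the statement. Because these are built from $\bm{\xi}\otimes\bm{\xi}$, $g$, and the norms appearing only in scale-invariant ratios, they carry conformal weight zero; hence they coincide for $g$ and for any rescaling $\Psi^2 g$, which is exactly what licenses transplanting the flat-model answer back to the arbitrary metric $g$.

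The main obstacle is not any individual computation, each of which is routine, but the passage to a genuinely representative-independent and global description. One must verify carefully that the candidate tensors are independent of the conformal factor, so that the flat-model solution really does solve the problem for $g$, and that $\xi^{\perp}$ is well defined up to a single global sign. The latter is delicate precisely where the zeros of $\xi$ separate the manifold; there one cannot fix $\xi^{\perp}$ by orthogonality and equal norm alone, and must additionally invoke that $\xi^{\perp}$ is itself a conformal Killing vector, as indicated before the statement, to pin it down uniquely.
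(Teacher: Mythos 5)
Your proposal is correct and follows essentially the same route as the paper's own proof: use conformal invariance of the TT and KID conditions in dimension two to reduce to the flat metric $4\,dz\,d\overline{z}$, deduce that $D_{zz}$ is holomorphic and satisfies $f\,\partial_z D_{zz} + 2 D_{zz}\,\partial_z f = 0$, integrate to $D_{zz} = q/f^2$ with $q \in \mathbb{C}$, and rewrite the resulting real two-dimensional solution space in the conformally invariant form $D_{\xi}$, $D_{\xi,\xi^{\perp}}$ via $\xi^{\perp} = i\left( f \partial_z - \overline{f}\partial_{\overline{z}}\right)$. As a minor bonus, your versions of the reality condition ($D_{\overline{z}\,\overline{z}} = \overline{D_{zz}}$) and of the KID ODE correct two small typographical slips in the paper's text.
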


\section*{Acknowledgements}

The authors acknowledge financial support under the projects
PGC2018-096038-B-I00
(Spanish Ministerio de Ciencia, Innovaci\'on y Universidades and FEDER)
and SA083P17 (JCyL). C. Pe\'on-Nieto also acknowledges the Ph.D. grant BES-2016-078094 (Spanish Ministerio de Ciencia, Innovaci\'on y Universidades).

           \printbibliography

\end{document}